\pdfoutput=1
\documentclass{article}

\usepackage{arxiv}

\usepackage[T1]{fontenc}
\usepackage[utf8]{inputenc}

\usepackage{amsmath,amssymb,amsthm,bm,mathtools}
\usepackage{booktabs}
\usepackage{graphicx}
\usepackage{subcaption}
\usepackage{algorithm}
\usepackage{algpseudocode}
\usepackage{enumitem}

\usepackage{microtype}
\usepackage{url}
\usepackage{doi}
\usepackage{natbib}

% If hyperref is not loaded, \texorpdfstring is undefined. Provide a safe fallback.
\providecommand{\texorpdfstring}[2]{#1}

% Theorem environments
\theoremstyle{plain}
\newtheorem{theorem}{Theorem}[section]
\newtheorem{lemma}[theorem]{Lemma}
\newtheorem{proposition}[theorem]{Proposition}

\newtheorem{assumption}[theorem]{Assumption}

\theoremstyle{definition}
\newtheorem{definition}[theorem]{Definition}

\theoremstyle{remark}
\newtheorem{remark}[theorem]{Remark}

\title{Emissions-Robust Portfolios}

\author{Khizar Qureshi\\
Department of Mathematics\\
  Massachusetts Institute of Technology\\
  Cambridge, MA 02139 \\
  \texttt{kqureshi@mit.edu} \\
	%% examples of more authors
	\And
  H. Oliver Gao \\
  School of Civil and Environmental Engineering \\
  Cornell University \\
  Ithaca, NY 14850 \\
  \texttt{hg55@cornell.edu}}
	%% \AND

\begin{document}
\maketitle

\begin{abstract}
We study portfolio choice when firm-level emissions intensities are measured with error. We introduce a scope-specific penalty operator that rescales asset payoffs as a smooth function of revenue-normalized emissions intensity. Under payoff homogeneity, unit-scale invariance, mixture linearity, and a curvature semigroup axiom, the operator is unique and has the closed form $P^{(m)}_j(r,\lambda)=\bigl(1-\lambda/\lambda_{\max,j}\bigr)^m r$. Combining this operator with norm- and moment-constrained ambiguity sets yields robust mean--variance and CVaR programs with exact linear and second-order cone reformulations and economically interpretable dual variables. In a U.S. large-cap equity universe with monthly rebalancing and uniform transaction costs, the resulting strategy reduces average Scope~1 emissions intensity by roughly 92\% relative to equal weight while exhibiting no statistically detectable reduction in the Sharpe ratio under block-bootstrap inference and no statistically detectable change in average returns under HAC inference. We report the return--emissions Pareto frontier, sensitivity to robustness and turnover constraints, and uncertainty propagation from multiple imputation of emissions disclosures.
\end{abstract}

\keywords{robust optimization, portfolio selection, sustainable finance, carbon emissions, conic optimization}

\section{Introduction}\label{sec:intro}

Climate change is a canonical global stock externality: the damage-relevant state variable is the atmospheric concentration of greenhouse gases, not the contemporaneous flow of emissions. Anthropogenic CO$_2$ concentrations exceeded 420 ppm in 2023, relative to a pre-industrial baseline of roughly 280 ppm \citep{IPCC_AR6_SYR_2023}. A back-of-the-envelope calculation links temperature targets to the residual global carbon budget. Let $B_{\max}$ denote the cumulative CO$_2$ emissions compatible with a maximum admissible warming $\Delta T_{\max}$, and let $\kappa$ be the temperature response per gigaton of CO$_2$. Then
\begin{equation}
\begin{aligned}
B_{\max}
&= \frac{\Delta T_{\max}}{\kappa} \\
&\approx \frac{1.5^{\circ}\mathrm{C}}{4.8\times 10^{-4}\, ^{\circ}\mathrm{C}/\mathrm{GtCO}_2} \\
&\approx 310~\mathrm{GtCO}_2.
\end{aligned}
\label{eq:carbon-budget}
\end{equation}
At current emissions rates of roughly $40\,\mathrm{GtCO}_2/\text{year}$, this residual budget would be exhausted within a decade.

\begin{figure}[t]
  \centering
  \begin{subfigure}[t]{0.48\textwidth}
    \centering
    \includegraphics[width=\linewidth]{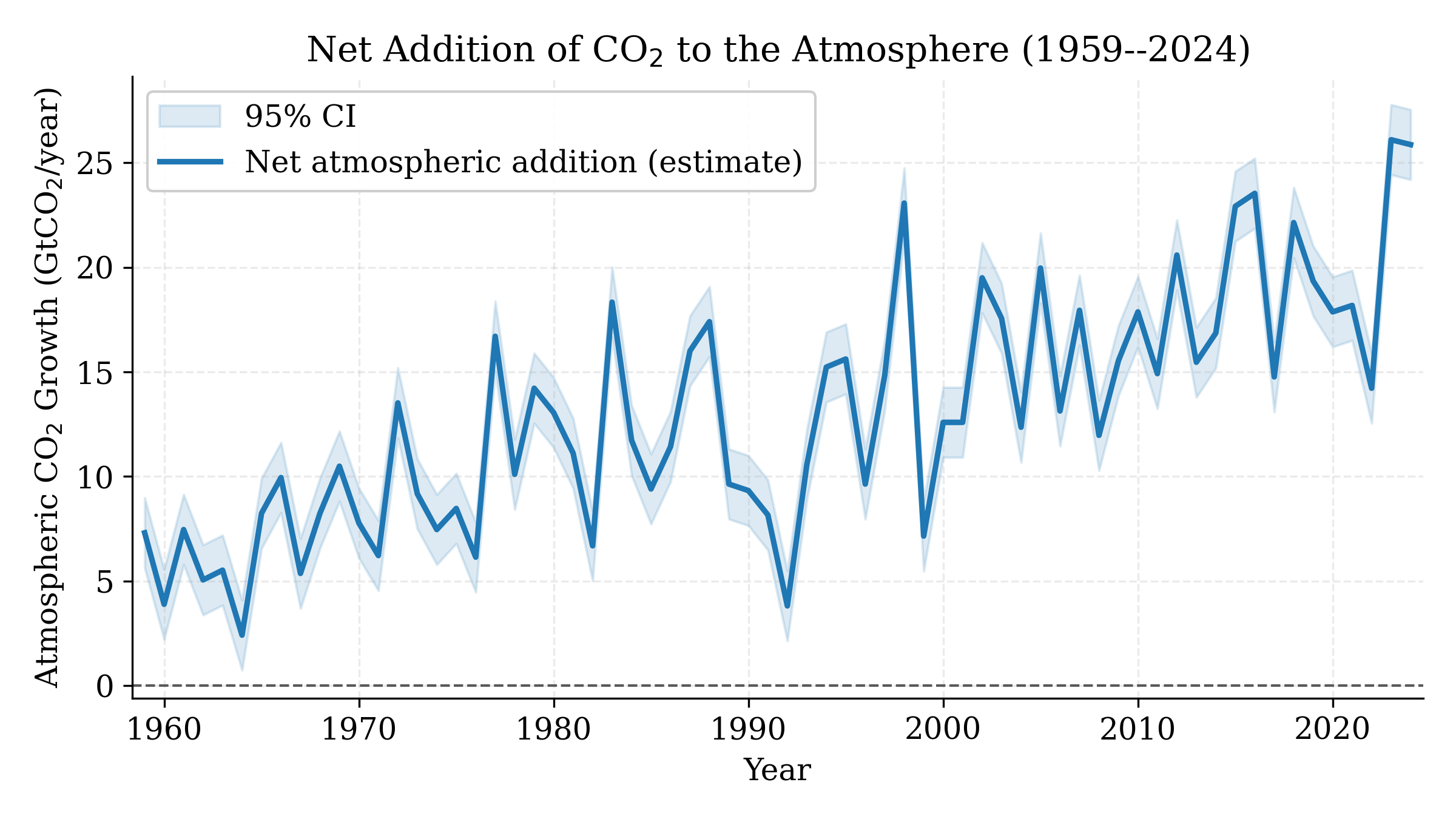}
    \caption{Net atmospheric addition of CO$_2$ (GtCO$_2$/yr), 1959--2024, with a 95\% confidence band. The band propagates NOAA Mauna Loa growth-rate uncertainty ($\pm 1.96\times 0.11$\,ppm) into mass units (1\,ppm $\approx$ 7.77\,GtCO$_2$).}
    \label{fig:atm_addition_ci}
  \end{subfigure}
  \hfill
  \begin{subfigure}[t]{0.48\textwidth}
    \centering
    \includegraphics[width=\linewidth]{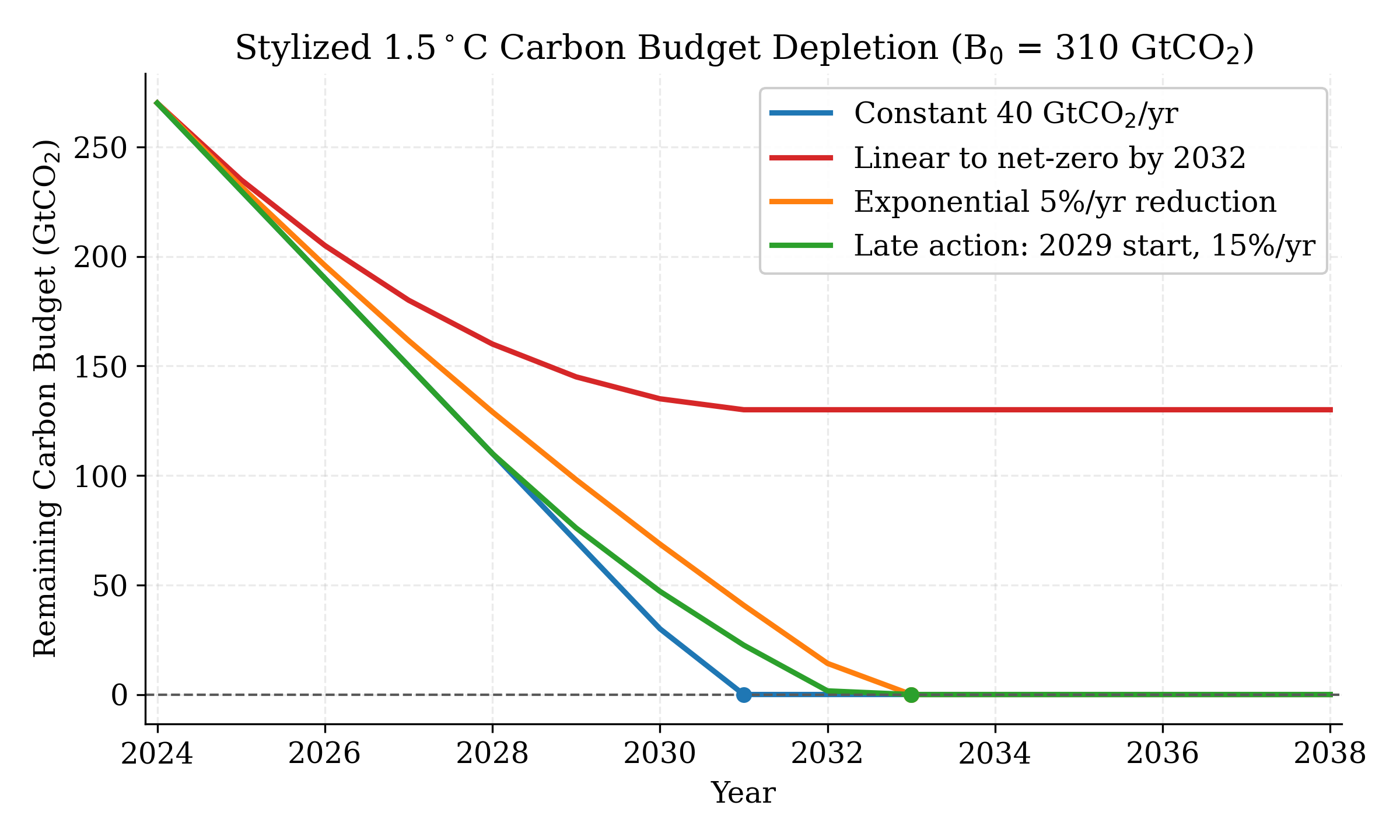}
    \caption{Stylized 1.5$^\circ$C remaining budget, $B_0=310$\,GtCO$_2$, under alternative usage paths: constant 40\,GtCO$_2$/yr, linear decline to net-zero by 2032, exponential 5\%/yr reduction. And late action (flat to 2028 then 15\%/yr decline). Exhaustion years are annotated where applicable.}
    \label{fig:budget_curves}
  \end{subfigure}
  \caption{\textbf{From atmospheric accumulation to policy-relevant budgets.}
Panel~(a) translates ppm growth into GtCO$_2$/yr and shows measurement uncertainty, panel~(b) illustrates how non-linear emissions pathways induce different curvature in the remaining-budget trajectory. Values are illustrative and sensitive to methodological choices (e.g., non-CO$_2$ forcers and probability thresholds).}
  \label{fig:atm_and_budget_side_by_side}
\end{figure}

These numbers underscore a simple point: anthropogenic warming is driven by the cumulative stock of CO$_2$. The IPCC Sixth Assessment Report (AR6) formalizes the remaining carbon quotas consistent with $1.5^{\circ}\mathrm{C}$ and $2^{\circ}\mathrm{C}$ pathways and emphasizes the need for deep, near-term mitigation \citet{IPCC_AR6_SYR_2023}. Flow estimates from the Global Carbon Budget \citep{Friedlingstein_2025_GCB2024} document record or near-record fossil emissions, highlighting the gap between physical constraints and realized trajectories. The Paris Agreement's Article~2.1(c) explicitly commits signatories to ``making finance flows consistent with a pathway towards low greenhouse gas emissions and climate-resilient development'' \citep{UNFCCC_Paris_2015}. Complementary analyses by the UNEP Emissions Gap Report \citep{UNEP_EGR_2023} quantify the shortfall between pledged policies and emissions pathways compatible with $1.5^{\circ}\mathrm{C}$. Together, these developments motivate a quantitative finance response: design portfolio frameworks that deliver verifiable emissions reductions without sacrificing financial performance or risk discipline.

Institutional investors sit at the transmission channel between these climate constraints and the real economy. Global capital markets intermediate on the order of \$400 trillion in financial assets \citep{Bolton2020}. Redirecting even a modest fraction of this capital toward climate-compatible firms could generate first-order mitigation effects. Two frictions, however, impede such realignment. First, corporate greenhouse gas (GHG) disclosures are incomplete, noisy, and subject to strategic reporting and methodological disagreement across data vendors \citep{Berg2022Rating}. Second, asset managers are still evaluated through classical lenses---risk-adjusted return, drawdown control, and turnover. Credible portfolio design must therefore integrate two classes of uncertainty on equal footing: (i) estimation risk in corporate GHG disclosures and (ii) the familiar mean--variance trade-off governing financial optimization \citep{Markowitz1952}.
In \S\ref{sec:results} we make this comparison concrete by benchmarking EAPO against deterministic inverse-intensity tilts that treat estimated intensities as known.

\begin{figure}[t]
  \centering
  \begin{subfigure}[t]{0.48\textwidth}
    \centering
    \includegraphics[width=\linewidth]{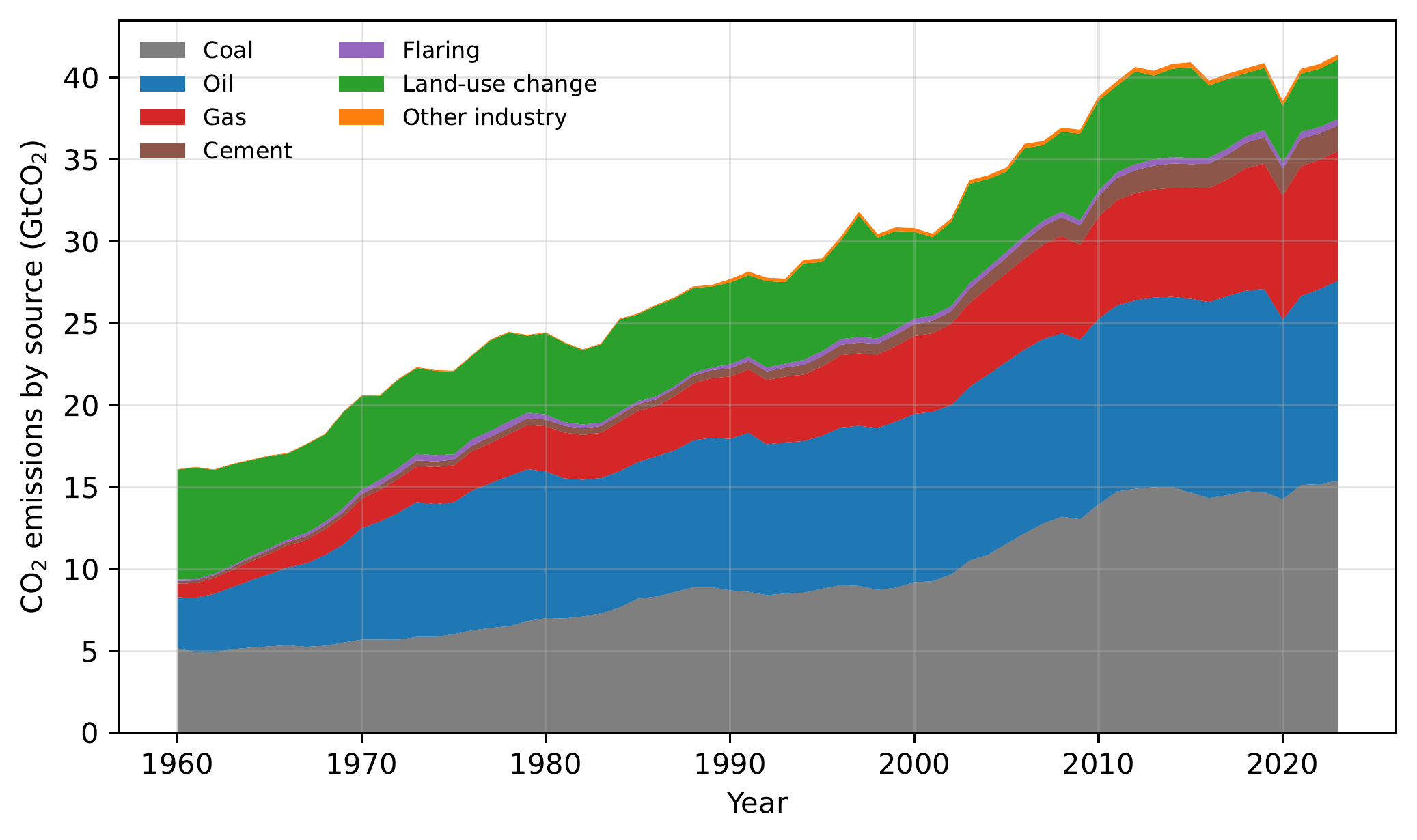}
    \caption{Global fossil and land-use CO$_2$ emissions by source, 1960--2023.}
    \label{fig:global_co2_by_source}
  \end{subfigure}
  \hfill
  \begin{subfigure}[t]{0.48\textwidth}
    \centering
    \includegraphics[width=\linewidth]{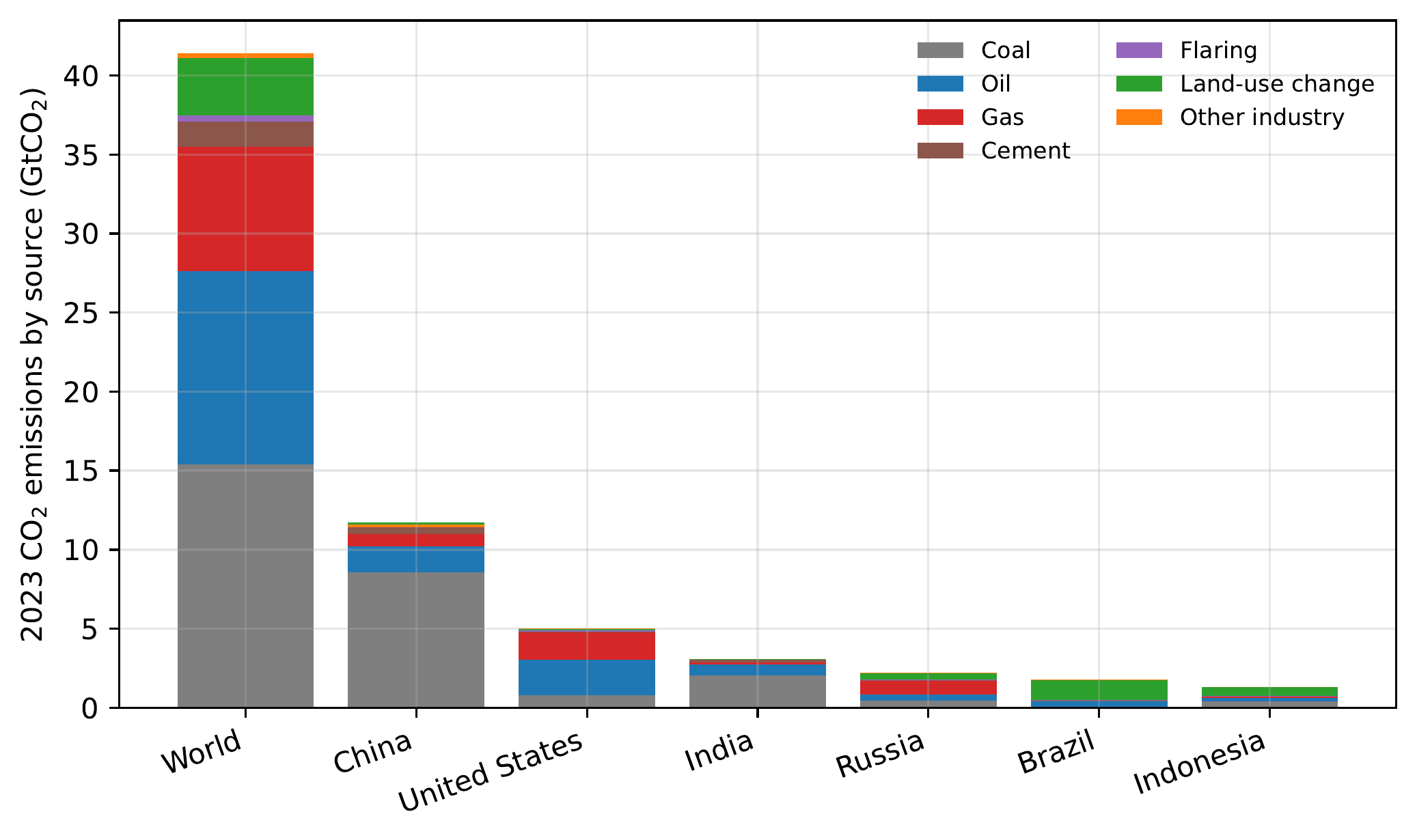}
    \caption{Decomposition of 2023 CO$_2$ emissions by source for major emitters.}
    \label{fig:co2_by_source_top_emitters_2023}
  \end{subfigure}
  \caption{Global and 2023 CO$_2$ emissions by source. Source: Our World in Data, ``CO$_2$ and Greenhouse Gas Emissions'' dataset.}
  \label{fig:co2_sources_combined}
\end{figure}

\paragraph{Literature and conceptual foundations.}

Debate on the financial performance of ESG-tilted portfolios dates back to early comparisons of socially responsible and conventional mutual funds \citep{Renneboog2008,Fabozzi2008}. Meta-analyses such as \citet{Friede2015} find on average a weakly positive relation between ESG characteristics and financial performance. Production-based asset pricing models attribute ESG premia to profitability and investment channels \citep{Balvers2017}. Building on these findings, a growing literature studies the optimal construction of ESG-integrated and low-carbon portfolios.

On the portfolio-design side, early low-carbon strategies imposed exogenous carbon-budget constraints on Markowitz frontiers or introduced decarbonization factors in index construction \citep{Andersson2016,Roncalli2021}. Subsequent work incorporated climate Value-at-Risk measures \citep{Weisang2022} and climate-hedging overlays for liquid instruments \citep{Engle_etal_2020_RFS}. Robust optimization provides the natural mathematical foundation for such extensions. Originating with the ellipsoidal uncertainty sets of \citet{BenTal2001} and the budget-of-uncertainty model of \citet{Bertsimas2004}, robust methods have been adapted for climate stress testing and transition risk analysis \citep{Bolton2020}. Yet, most existing approaches treat emissions exposures as known constants and either enforce hard constraints or apply ad hoc tilts, rather than modeling emissions as a noisy state variable and incorporating disclosure uncertainty within the optimization problem itself.

Parallel work in macroeconomics and asset pricing documents that climate risk is economically material. Global output follows a nonlinear, concave temperature--productivity relation with large welfare losses at high temperatures \citep{Burke_Hsiang_Miguel_2015,Hsiang_etal_2017_Science}. In asset markets, long-run temperature risk carries priced premia \citep{Bansal_Kiku_Ochoa_2016}, and damages are heterogeneous across sectors and geographies \citep{Carleton_Hsiang_2016}. Transition risk now appears as a priced cross-sectional factor: carbon-intensive issuers exhibit higher realized returns consistent with exposure to a transition premium \citep{Bolton_Kacperczyk_2021}, while option-implied volatilities reveal elevated compensation for downside climate risk \citep{Ilhan_Sautner_Vilkov_2021}. Dynamic climate-hedging strategies reduce exposure to climate news using liquid equity instruments \citep{Engle_etal_2020_RFS}. The ESG-efficient frontier formalizes the trade-off between risk-adjusted return and sustainability exposure \citep{Pedersen2021}, and equilibrium analyses show that heterogeneous investor preferences can generate structural return differentials between ``green'' and ``brown'' assets \citep{Pastor2021}. Finally, \citet{Giglio_Kelly_Stroebel_2021} provide a comprehensive review and emphasize that climate-related risks are economically material while identification and measurement remain central challenges for empirical asset pricing.

Regulatory and disclosure frameworks have begun to codify these practices. The EU's Paris-Aligned Benchmark (PAB) and Climate Transition Benchmark (CTB) frameworks impose minimum decarbonization rates and transparency requirements \citep{TEG_PAB_CTB_2019}. Disclosure has shifted from largely voluntary to increasingly mandatory, anchored by the TCFD \citep{TCFD_2017} and the IFRS~S2 climate reporting standard \citep{ISSB_IFRS_S2_2023}. The PCAF methodology provides standardized attribution rules for financed emissions \citep{PCAF_2020}, while the Network for Greening the Financial System (NGFS) scenarios underpin climate stress testing by supervisors \citep{NGFS_2025_TechDoc}. At the measurement level, the GHG Protocol defines Scope~1--3 boundaries and revenue-normalization conventions \citep{GHG_Corporate_Standard_2004}. Despite this progress, substantial rating divergence and measurement error persist \citep{Berg2022Rating}, reflecting genuine differences in coverage and methodology rather than minor weighting choices. These observations motivate a formal treatment of emissions uncertainty via multiple imputation and distributionally robust modeling, replacing brittle point estimates with statistically grounded ambiguity sets.

This paper contributes to the intersection of robust optimization and sustainable investing by taking emissions uncertainty seriously. Rather than superimposing hard carbon budgets on point estimates of corporate emissions, we introduce an emissions-penalty operator that acts directly on expected returns and is robustified over scope-specific ambiguity sets. The operator admits an exact linear or second-order cone representation and yields an economically interpretable dual: Lagrange multipliers on the ambiguity sets can be read as ``shadow carbon prices'' that mediate the trade-off between return and robustness. The central empirical question we study is whether such an emissions-aware robust formulation can deliver material reductions in financed emissions---at institutional scale and under realistic transaction costs---without deteriorating classical risk-adjusted performance.

\subsection{Positioning and Contributions}

Let $Z_j(\Gamma)$ denote an ambiguity set for firm-level scope-$j$ emissions intensities, indexed by a robustness budget $\Gamma \ge 0$. Section~\ref{sec:model} shows that our basic emissions-aware robust portfolio optimization (EAPO) problem can be written in the form
\begin{equation}
\max_{x \in \Delta_n}\,
\min_{z_j \in Z_j(\Gamma)}
x^\top \bigl(R - \alpha z_j\bigr),
\label{eq:EAPO-robust}
\end{equation}
where $R \in \mathbb{R}^n$ collects nominal expected returns, $z_j \in \mathbb{R}^n_+$ is the vector of scope-$j$ emissions intensities, $\alpha \ge 0$ is a policy penalty parameter, and $\Delta_n := \{x \in \mathbb{R}^n_{\ge 0} : \mathbf{1}^\top x = 1\}$ is the simplex of long-only portfolio weights. We show that, for a broad class of ambiguity sets motivated by disclosure uncertainty, problem~\eqref{eq:EAPO-robust} admits an exact linear or second-order cone representation with $O(n)$ conic constraints, ensuring polynomial-time solvability for institutional-scale universes with thousands of securities. Our contributions are both methodological and empirical:

\begin{enumerate}
\item[C1.] \textbf{Axiomatic modeling of emissions penalties.} We introduce a scope-specific emissions-penalty operator $P^{(m)}_j$ that maps raw returns into emissions-adjusted payoffs by scaling each asset's return according to its revenue-normalized emissions intensity. The operator is characterized by natural axioms---payoff homogeneity, normalization at zero and maximum intensity, monotonicity in emissions, unit-scale invariance, mixture linearity, and a curvature semigroup property. We prove an axiomatic uniqueness result: within this class, the only admissible family is $P^{(m)}_j(r,\lambda) = (1 - \lambda/\lambda_{\max,j})^m r$. The curvature parameter $m \in \mathbb{N}_+$ controls the strength of penalization and yields a Schur-convex dependence on intensities, so reallocating weight from high- to low-intensity firms weakly increases every component of the emissions-adjusted return vector.

\item[C2.] \textbf{Robust optimization under disclosure uncertainty.} We model emissions disclosure risk through scope-specific ambiguity sets that combine moment constraints and $\ell_p$-norm bounds on intensities. Using a Lipschitz envelope for the penalty operator, we show that robust mean--variance and tail-risk (CVaR) formulations reduce to convex programs of the form
\[
\max_{x \in \Delta_n} x^\top \mu^e_j - \Gamma \|\mathrm{diag}(L)\,x\|_{p^\star} - \theta x^\top \Sigma x,
\]
where $\mu^e_j$ are emissions-adjusted expected returns, $\Sigma$ is the covariance matrix, $L$ collects asset-wise Lipschitz constants, and $p^\star$ is the dual norm. For $p \in \{1,\infty\}$ the resulting program is a linear program (LP). For $p = 2$ it is a second-order cone program (SOCP) with $O(n)$ cones. We further extend the framework to dynamic and distributionally robust settings via $\varphi$-divergence balls. In all cases, Lagrange multipliers on the ambiguity constraints admit a natural economic interpretation as shadow carbon prices: the marginal Sharpe loss an investor is willing to accept per unit of additional robustness.

\item[C3.] \textbf{Large-scale empirical evaluation.} We implement EAPO on a U.S.\ large-cap equity universe using daily returns, annual scope-specific emissions from CDP, and quarterly revenues combined into revenue-normalized intensities. The empirical pipeline---detailed in Section~\ref{sec:empirical}---uses sector-aware hierarchical Bayesian imputation for missing emissions, multiple imputation to propagate disclosure uncertainty, and Ledoit--Wolf shrinkage for covariance estimation. Under monthly rebalancing and uniform transaction costs, EAPO delivers order-of-magnitude reductions in portfolio Scope-1 emissions intensity---up to roughly $80$--$90\%$ relative to equal-weight, global minimum-variance, and naive emissions-weighted benchmarks---without statistically significant deterioration in Sharpe ratio, volatility, drawdowns, or turnover. We quantify uncertainty via HAC-robust (Newey--West) inference and show that these performance comparisons are statistically stable across regimes.

\item[C4.] \textbf{Managerial interpretation and policy relevance.} By embedding emissions penalties directly in the return kernel rather than as external hard constraints, EAPO integrates cleanly with existing mean--variance and risk-budgeting infrastructures used by asset managers. The return--emissions Pareto frontier generated by our framework provides an explicit menu of decarbonization targets and associated performance impacts, which can be used to negotiate mandates and monitor compliance. The shadow carbon prices arising from the dual variables supply an internal carbon valuation consistent with regulatory frameworks such as PAB/CTB, TCFD, IFRS~S2, and PCAF, and offer a transparent metric for calibrating robustness budgets $\Gamma$ to investor-specific climate preferences.
\end{enumerate}

The remainder of the paper is organized as follows. Section~\ref{sec:model} formalizes the stochastic environment, introduces the emissions-penalty operator, and develops robust mean--variance, CVaR, dynamic, and distributionally robust formulations together with their conic reformulations. Section~\ref{sec:empirical} describes the data architecture, statistical pre-processing (including multiple imputation and covariance shrinkage), benchmark portfolio constructions, and the optimization workflow. Section~\ref{sec:results} presents empirical results on financial performance, emissions intensity, and sensitivity analysis, including HAC-robust inference and the estimated return--emissions Pareto frontier. Section~\ref{sec:discussion-results} discusses strategic implications and policy relevance, and Section~\ref{sec:conclusion} concludes.

\section{Mathematical Model and Data}
\label{sec:model}

This section presents the emissions-aware portfolio optimization (EAPO) framework that underlies the empirical analysis. The model incorporates climate considerations directly into expected returns through a scope-specific emissions penalty and then accounts for disclosure uncertainty using norm-based ambiguity sets. Embedding these components within standard mean--risk objectives yields optimization problems that admit exact linear programming (LP) or second-order cone programming (SOCP) reformulations, ensuring tractability for large institutional universes with $n>10^{3}$. Throughout the section, attention is restricted to a fixed emissions scope $j\in\{1,2,3\}$. The empirical analysis later focuses on direct Scope~1 emissions, and extending the formulation to multiple scopes is notational rather than conceptual.

\subsection{Stochastic Environment and Financial Primitives}
\label{sec:env}

All random variables are defined on a complete probability space $(\Omega,\mathcal{F},\mathbb{P})$. Let $n\in\mathbb{N}$ denote the number of tradable assets, and define the one-period gross return vector
\[
R := (R_1,\dots,R_n)^\top : \Omega \to \mathbb{R}^n.
\]

\begin{assumption}[Finite second moment]
\label{ass:second-moment}
The return vector satisfies $\mathbb{E}[\|R\|_2^2] < \infty$.
\end{assumption}

Assumption~\ref{ass:second-moment} ensures feasibility of the mean--variance and CVaR formulations introduced below. Define the expected return vector and covariance matrix as
\[
\mu := \mathbb{E}[R]\in\mathbb{R}^n,
\qquad
\Sigma := \mathrm{Cov}(R)\in\mathbb{R}^{n\times n}.
\]
Returns are expressed in excess of the funding rate, so the risk-free asset can be omitted without loss of generality.

A portfolio is represented by a weight vector $x\in\Delta_n$, where
\[
\Delta_n := \{x\in\mathbb{R}^n_{\ge 0} : \mathbf{1}^\top x = 1\}
\]
is the unit simplex. The one-period portfolio gross return is $R^p := x^\top R$. The analysis focuses on long-only portfolios, leverage or short positions can be incorporated by replacing the simplex with an appropriate convex polytope.

\subsection{Emissions Accounting and Intensities}

For each firm $i\in\{1,\dots,n\}$ and scope $j\in\{1,2,3\}$, let
\begin{itemize}
    \item $C_{i,j}>0$ denote annual greenhouse gas emissions of scope $j$, measured in tCO$_2$e,
    \item $S_i>0$ denote trailing-twelve-month revenue (USD).
\end{itemize}
Following standard GHG Protocol and CDP conventions, emissions are expressed using revenue-normalized intensities,
\begin{equation}
\begin{aligned}
\lambda_{i,j} &= \frac{C_{i,j}}{S_i}, \qquad i=1,\ldots,n,\\
\lambda_j &= (\lambda_{1,j},\ldots,\lambda_{n,j})^\top \in \mathbb{R}^n_{\ge 0}.
\end{aligned}
\label{eq:intensity}
\end{equation}
Normalization removes scale-driven differences attributable solely to firm size and aligns the portfolio measure with scope-consistent financed-emissions metrics.

In empirical implementation (\S\ref{sec:empirical}), the components $C_{i,j}$ and $S_i$ are updated at lower frequency (annual and quarterly, respectively) than financial returns. Accordingly, the intensities $\lambda_{i,j}$ are treated as slowly evolving state variables: they are updated when new disclosures become available and held fixed between disclosure dates. Section~\ref{sec:empirical} documents the associated timing conventions and the multiple-imputation procedure used to address missing or noisy emissions data.

\subsection{Emissions--Penalty Operator}
\label{sec:penalty}

We now formalize the mechanism that maps firm-level emissions intensities into return adjustments.  
The construction is guided by four principles:  
(i) linearity in payoffs to preserve convexity and enable conic reformulations,  
(ii) monotonicity with respect to emissions intensity,  
(iii) invariance to unit scaling, and  
(iv) a one-parameter curvature family that controls the strength of penalization.  
Appendix~\ref{sec:appendix-penalty-axioms} states these axioms formally and shows that they identify the operator uniquely.

Fix scope $j$ and define the cross-sectional maximum intensity
\[
\lambda_{\max,j} := \max_{i\le n} \lambda_{i,j} \in (0,\infty).
\]
Let $\tilde\lambda_{i,j} := \lambda_{i,j}/\lambda_{\max,j}\in[0,1]$ denote the normalized intensity.

\begin{definition}[Emissions--penalty operator]
\label{def:penalty}
For scope $j$ and curvature parameter $m\in\mathbb{N}_+$, the emissions--penalty operator
$P^{(m)}_j:\mathbb{R}\times[0,\lambda_{\max,j}]\to\mathbb{R}$ for some return $r$ is defined by
\begin{equation}
\label{eq:penalty}
P^{(m)}_j(r,\lambda)
    := \Big(1 - \frac{\lambda}{\lambda_{\max,j}}\Big)^m r.
\end{equation}
For firm $i$, the emissions-adjusted gross return is
\[
R^{e}_{i,j} := P^{(m)}_j(R_i,\lambda_{i,j}),
\qquad
R^e_j := (R^{e}_{1,j},\dots,R^{e}_{n,j})^\top.
\]
\end{definition}

The factor $(1-\lambda/\lambda_{\max,j})^m \in [0,1]$ applies a smooth return haircut that increases with emissions intensity.  
The curvature parameter $m$ governs the steepness of this penalty:  
$m=1$ yields a linear schedule, while larger $m$ place disproportionately greater weight on high-intensity firms---approximating hard exclusions in the upper tail of the intensity distribution.

Definition~\ref{def:penalty} arises directly from the axioms in Appendix~\ref{sec:appendix-penalty-axioms}.  
Under payoff homogeneity, normalization at $\lambda=0$ and $\lambda=\lambda_{\max,j}$, monotonicity in $\lambda$, unit-scale invariance, mixture linearity, and the curvature semigroup condition $P^{(m_1+m_2)}_j = P^{(m_1)}_j\circ P^{(m_2)}_j$ for all $m_1,m_2\in\mathbb{N}_+$, the form in \eqref{eq:penalty} is the unique admissible family.

The operator maintains desirable analytical structure:

\begin{lemma}[Analytic properties]
\label{lem:penalty-analytic}
For fixed scope $j$ and $m\in\mathbb{N}_+$:
\begin{enumerate}
    \item For fixed $\lambda$, the map $r\mapsto P^{(m)}_j(r,\lambda)$ is linear with Lipschitz constant
    $|1 - \lambda/\lambda_{\max,j}|^{m} \le 1$.
    \item For fixed $r$, the map $\lambda\mapsto P^{(m)}_j(r,\lambda)$ is strictly decreasing, $C^\infty$ on $(0,\lambda_{\max,j})$, and convex on $[0,\lambda_{\max,j}]$.
    \item Let $R^e_j=(P^{(m)}_j(R_i,\lambda_{i,j}))_{i\le n}$.  
    For any portfolio $x\in\Delta_n$ with $x_i\ge 0$, the mapping $\lambda_j\mapsto x^\top \mathbb{E}[R^e_j]$ is Schur--convex in $\lambda_j$.
\end{enumerate}
\end{lemma}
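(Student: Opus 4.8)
The plan is to prove the three parts in order, since each is essentially a direct verification once the closed form $P^{(m)}_j(r,\lambda)=(1-\lambda/\lambda_{\max,j})^m r$ is in hand. For part~(1), fix $\lambda\in[0,\lambda_{\max,j}]$ and observe that $r\mapsto(1-\lambda/\lambda_{\max,j})^m r$ is manifestly linear with slope $c:=(1-\lambda/\lambda_{\max,j})^m$; the Lipschitz constant is $|c|=|1-\lambda/\lambda_{\max,j}|^m$, and since $\lambda/\lambda_{\max,j}\in[0,1]$ we have $1-\lambda/\lambda_{\max,j}\in[0,1]$, so $|c|\le 1$. This part is immediate.

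For part~(2), fix $r$ and write $g(\lambda):=(1-\lambda/\lambda_{\max,j})^m r$. Differentiating, $g'(\lambda)=-\tfrac{m}{\lambda_{\max,j}}(1-\lambda/\lambda_{\max,j})^{m-1}r$, which is strictly negative on $(0,\lambda_{\max,j})$ for $r>0$ (strict monotonicity); smoothness is clear since $g$ is a polynomial in $\lambda$, hence $C^\infty$. For convexity, $g''(\lambda)=\tfrac{m(m-1)}{\lambda_{\max,j}^2}(1-\lambda/\lambda_{\max,j})^{m-2}r\ge 0$ on $[0,\lambda_{\max,j}]$. I would flag a mild subtlety: the stated strict decrease and convexity tacitly presume $r>0$ (for $r<0$ the monotonicity reverses), so I would either restrict to the economically relevant case $r>0$ or state the sign-dependence explicitly; this is the only point requiring care, and it is a matter of hypothesis bookkeeping rather than a genuine obstacle.

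Part~(3) is the substantive step. The claim is that $\lambda_j\mapsto x^\top\mathbb{E}[R^e_j]=\sum_{i} x_i\,(1-\lambda_{i,j}/\lambda_{\max,j})^m\,\mathbb{E}[R_i]$ is Schur-convex in $\lambda_j$. The approach is to invoke the standard criterion (Schur's condition): a symmetric, convex function that is the sum of identical convex functions of each coordinate is Schur-convex. The difficulty is that this sum is \emph{not} symmetric in the coordinates, because the weights $x_i$ and means $\mathbb{E}[R_i]$ differ across $i$, and moreover $\lambda_{\max,j}$ itself depends on $\lambda_j$ through the max. So I would first clarify the intended sense: treating $\lambda_{\max,j}$ as fixed (the cross-sectional max is held constant under the rebalancing comparison) and interpreting Schur-convexity componentwise via the dominance characterization, the correct tool is the differential criterion of Schur-Ostrowski, which requires checking $(\partial_k-\partial_l)(x^\top\mathbb{E}[R^e_j])$ has the sign of $(\lambda_{k,j}-\lambda_{l,j})$. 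The main obstacle is reconciling the heterogeneous-weight structure with the symmetry that Schur-convexity nominally demands; I expect the resolution is that the paper intends the weaker monotone-under-majorization statement keyed to the economic reading in the text (``reallocating weight from high- to low-intensity firms weakly increases every component of the emissions-adjusted return vector''), so I would recast part~(3) as: each component map $\lambda_{i,j}\mapsto(1-\lambda_{i,j}/\lambda_{\max,j})^m\mathbb{E}[R_i]$ is decreasing and convex, and then appeal to the fact that a sum of convex decreasing functions composed with a majorization-decreasing reallocation is monotone, which is where the bulk of the argument lies.
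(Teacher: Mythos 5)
Your treatment of parts (1) and (2) coincides with the paper's own proof: part (1) is read off the closed form, and part (2) follows from the same two derivative formulas the paper displays. Your flag that strict decrease and convexity tacitly require $r>0$ (for $r<0$ both reverse, and for $r=0$ the map is constant) is legitimate hypothesis bookkeeping that the paper silently omits; since the $R_i$ are gross returns it is harmless in context, but it deserves the sentence you give it.

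On part (3) your diagnosis is correct, and it names a genuine gap --- one that sits in the paper's own proof, not just in your sketch. The paper's argument is a single line invoking ``symmetry and convexity of $\lambda\mapsto(1-\lambda/\lambda_{\max,j})^m$ together with the standard characterization of Schur-convex functions.'' That standard result (Marshall--Olkin: $\sum_i g(\lambda_i)$ with one common convex $g$ is Schur-convex) applies only to permutation-symmetric functions, and any Schur-convex function must be symmetric, since a vector and its permutations majorize each other. The map in the lemma is $\phi(\lambda_j)=\sum_i c_i\,g(\lambda_{i,j})$ with $c_i=x_i\,\mathbb{E}[R_i]$, which is not symmetric when the $c_i$ differ, so the cited characterization does not apply; indeed the Schur--Ostrowski test fails concretely: with $g'\le 0$ increasing, take $\lambda_k>\lambda_l$, $g'(\lambda_k)=-0.1$, $g'(\lambda_l)=-0.2$, $c_k=10$, $c_l=1$, giving $(\lambda_k-\lambda_l)\bigl(c_k g'(\lambda_k)-c_l g'(\lambda_l)\bigr)<0$. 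So part (3) as literally stated is false in general and true only in the symmetric case (all $c_i$ equal). Where your proposal falls short is that you stop at the diagnosis: ``which is where the bulk of the argument lies'' is not a proof. To close it you should state and prove a corrected claim --- either (a) assume $c_i\equiv c$ and apply the symmetric-sum criterion, or (b) prove the componentwise statement that each $\lambda_{i,j}\mapsto(1-\lambda_{i,j}/\lambda_{\max,j})^m\,\mathbb{E}[R_i]$ is decreasing and convex (immediate from part (2)), which is what the paper's economic interpretation actually uses. Your further observation that $\lambda_{\max,j}$ depends on $\lambda_j$ through the maximum is also right: majorization comparisons that move the maximum break even the symmetric version, so the fixed-$\lambda_{\max,j}$ convention must be made explicit in any repaired statement.
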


\begin{proof}
\ (i) follows directly from \eqref{eq:penalty}.  
(ii) follows from
\[
\frac{\partial}{\partial\lambda}P^{(m)}_j(r,\lambda)
    = -\frac{m}{\lambda_{\max,j}}
      \Big(1 - \frac{\lambda}{\lambda_{\max,j}}\Big)^{m-1} r
\]
and
\[
\frac{\partial^{2}}{\partial\lambda^2}P^{(m)}_j(r,\lambda)
    = \frac{m(m-1)}{\lambda_{\max,j}^2}
      \Big(1 - \frac{\lambda}{\lambda_{\max,j}}\Big)^{m-2} r.
\]
(iii) uses symmetry and convexity of
$\lambda\mapsto (1-\lambda/\lambda_{\max,j})^m$
together with the standard characterization of Schur--convex functions.
\end{proof}

These properties ensure that inserting the operator into classical mean--risk objectives preserves convexity.  
In particular, Schur--convexity captures the intuitive idea that reallocating weight from high- to low-intensity firms weakly increases emissions-adjusted expected returns.

Define the scope-$j$ emissions-adjusted mean vector:
\begin{equation}
\begin{aligned}
\mu^e_j 
    &:= \mathbb{E}[R^e_j] \\
    &= \mathbb{E}\!\Big[
        \mathrm{diag}\big(
        (1-\tfrac{\lambda_{1,j}}{\lambda_{\max,j}})^m,
        \ldots,
        (1-\tfrac{\lambda_{n,j}}{\lambda_{\max,j}})^m
        \big) R
    \Big].
\label{eq:mu-e}
\end{aligned}
\end{equation}
Empirically, $\mu^e_j$ is estimated using rolling windows. See \S\ref{sec:empirical}.

\subsection{Ambiguity Sets for Emissions Uncertainty}
\label{sec:ambiguity}

Corporate emissions disclosures are noisy, incomplete, and in part judgmental.
Rather than treat $\lambda_j$ as known, we allow it to vary within a statistically
motivated uncertainty set. Let $\widehat{\lambda}_j$ be the point estimate
constructed from CDP and financial-statement data, and let $\Sigma_{\lambda,j}$
be a positive semidefinite dispersion proxy obtained from multiple imputation
(see Section~\ref{sec:empirical}).

We model deviations through $\varepsilon_j := \lambda_j - \widehat{\lambda}_j$,
restricted to the norm ball
\begin{equation}
\label{eq:Uj}
\begin{aligned}
\mathcal{U}_j(\Gamma)
    :=&\ \Bigl\{
        \varepsilon \in \mathbb{R}^n :
        \bigl\|\Sigma_{\lambda,j}^{-1/2}\varepsilon\bigr\|_{p}
        \le \Gamma
      \Bigr\},
\\[4pt]
& p \in \{1,2,\infty\},\qquad \Gamma>0 .
\end{aligned}
\end{equation}
The scalar $\Gamma$ is the {robustness budget}: larger values admit more
severe misspecification in firm-level intensities.  
The choices $p=1$ and $p=\infty$ yield polyhedral (Bertsimas--Sim--type)
uncertainty sets, whereas $p=2$ yields an ellipsoid. In all cases,
\eqref{eq:penalty-closed-form} admits an exact LP/SOCP representation.

\medskip
Because $R^e_j$ depends on $\lambda_j$ only through the scalar weights
$\bigl(1-\lambda_{i,j}/\lambda_{\max,j}\bigr)^m$, Lemma~\ref{lem:penalty-analytic}
implies that each component of $\mu^e_j(\lambda_j)$ is Lipschitz in $\lambda_j$.
Writing
\[
\mu^e_j(\lambda_j)
    = \mu^e_j(\widehat{\lambda}_j)
      + \delta_j(\varepsilon_j),
\qquad
\varepsilon_j\in\mathcal{U}_j(\Gamma),
\]
the perturbation $\delta_j(\varepsilon_j)$ can be norm-bounded.  
A compact envelope inequality (proved in Appendix~\ref{sec:appendix-lipschitz}) yields
\begin{equation}
\label{eq:envelope}
\begin{aligned}
&\sup_{\varepsilon_j \in \mathcal{U}_j(\Gamma)}
   x^\top\!\big(
      \mu_j^{e}(\widehat{\lambda}_j)
      - \mu_j^{e}(\widehat{\lambda}_j + \varepsilon_j)
   \big)
\\[4pt]
&\qquad\le\,
   \Gamma\,\|\mathrm{diag}(L)\,x\|_{p^\star}.
\end{aligned}
\end{equation}
where $L=(L_1,\dots,L_n)^\top$ is a vector of asset-wise Lipschitz constants
and $p^\star$ is the dual exponent ($1/p + 1/p^\star = 1$).  
Expression~\eqref{eq:envelope} is the key ingredient enabling the tractable
robust counterparts in subsequent sections.

\subsection{Robust Mean--Variance with Emissions Penalties}
\label{sec:robustMV}
We embed the emissions--penalty operator and ambiguity set into a mean--variance program.
For parameters $\theta>0$ (risk aversion) and $\Gamma>0$ (robustness budget), consider the
robust mean--variance problem
\begin{equation}
\label{eq:PMV}
\begin{aligned}
\max_{x\in\Delta_n}\ 
\Bigl\{
   & x^\top\mu^e_j(\widehat{\lambda}_j)
\\
   & {} - \sup_{\varepsilon_j\in\mathcal{U}_j(\Gamma)}
        x^\top\Bigl(
        \mu^e_j(\widehat{\lambda}_j)
        - \mu^e_j(\widehat{\lambda}_j+\varepsilon_j)
        \Bigr)
\\
   & {} - \theta\, x^\top\Sigma x
\Bigr\}.
\end{aligned}
\end{equation}

Because only the drift is uncertain, the covariance matrix $\Sigma$ is unaffected.
By Lemma~\ref{lem:penalty-analytic}, each component of $\mu^e_j(\lambda_j)$ is Lipschitz in
$\lambda_j$. Let $L_i$ denote the Lipschitz constant of asset $i$, and define
$L=(L_1,\dots,L_n)^\top$. A standard envelope bound
(Appendix~\ref{sec:appendix-lipschitz}) yields
\begin{equation}
\begin{aligned}
\sup_{\varepsilon_j\in\mathcal{U}_j(\Gamma)}
    x^\top\Bigl(
        \mu^e_j(\widehat{\lambda}_j)
        - \mu^e_j(\widehat{\lambda}_j+\varepsilon_j)
    \Bigr)
\;\le\;
\Gamma\,\|\mathrm{diag}(L)\,x\|_{q},
\end{aligned}
\end{equation}
where $q$ is the Hölder--conjugate exponent of $p$, satisfying $1/p + 1/q = 1$
(with the conventions $p=1\Rightarrow q=\infty$ and $p=\infty\Rightarrow q=1$).
Equivalently, $\|\cdot\|_{q}$ is the dual norm of $\|\cdot\|_{p}$.

Substituting this bound into~\eqref{eq:PMV} yields the reduced robust mean--variance problem
\begin{equation}
\label{eq:PMV-reduced}
\begin{aligned}
\max_{x\in\Delta_n}
\Bigl\{
   x^\top\mu^e_j(\widehat{\lambda}_j)
   - \Gamma\,\|\mathrm{diag}(L)\,x\|_{q}
   - \theta\,x^\top\Sigma x
\Bigr\}.
\end{aligned}
\end{equation}

\begin{theorem}[Exact conic reformulation]
\label{thm:conic}
For $p\in\{1,2,\infty\}$, problem~\eqref{eq:PMV-reduced} admits an exact conic formulation:
\begin{itemize}
    \item if $p\in\{1,\infty\}$ (equivalently $q\in\{\infty,1\}$), it is equivalent to a linear
    program of size $O(n)$;
    \item if $p=2$ (so $q=2$), it is equivalent to a second--order cone program with $O(n)$ cones,
    solvable in $O(n^{3.5}\log(1/\epsilon))$ time via standard interior--point methods.
\end{itemize}
\end{theorem}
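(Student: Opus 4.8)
The plan is to exploit that \eqref{eq:PMV-reduced} maximizes a concave objective---a linear term minus the convex norm penalty $\Gamma\|\mathrm{diag}(L)x\|_q$ minus the convex quadratic $\theta\,x^\top\Sigma x$---over the polytope $\Delta_n$, so an exact conic reformulation follows by epigraph splitting. First I would introduce auxiliary scalars $s,\tau$ and rewrite the program as the maximization of the \emph{linear} objective $x^\top\mu^e_j(\widehat\lambda_j)-\Gamma s-\theta\tau$ over $x\in\Delta_n$ subject to the epigraph constraints $\|\mathrm{diag}(L)x\|_q\le s$ and $x^\top\Sigma x\le\tau$. The equivalence is immediate: both terms enter with a negative sign, so both constraints bind at any optimum.

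The second step gives the conic representation of the norm epigraph, case by case, using that $x\ge0$ on $\Delta_n$ (so $|x_i|=x_i$) and that the Lipschitz constants satisfy $L_i\ge0$. For $q=1$ one has $\|\mathrm{diag}(L)x\|_1=\sum_i L_i x_i=L^\top x$, which is linear and may be folded directly into the objective; for $q=\infty$ the constraint $\|\mathrm{diag}(L)x\|_\infty\le s$ becomes the $n$ linear inequalities $L_i x_i\le s$. Both cases are therefore LP-representable with $O(n)$ rows. For $q=2$ the constraint $\|\mathrm{diag}(L)x\|_2\le s$ is a single second-order cone of dimension $n+1$.

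For the quadratic risk term I would factor $\Sigma=F^\top F$ (Cholesky or symmetric square root, $F\in\mathbb{R}^{k\times n}$ with $k\le n$) so that $x^\top\Sigma x=\|Fx\|_2^2$, and represent $\|Fx\|_2^2\le\tau$ by the single rotated second-order cone $\|(2Fx,\ \tau-1)\|_2\le\tau+1$, whose squared form reduces exactly to $\|Fx\|_2^2\le\tau$ for $\tau\ge0$. Collecting everything, the $q=2$ problem carries two second-order cones plus the $O(n)$ simplex rows, hence is an SOCP with $O(n)$ conic data; the stated $O(n^{3.5}\log(1/\epsilon))$ bound then follows from the standard interior-point count, since the barrier parameter is dominated by the $n$ nonnegativity constraints, giving $O(\sqrt{n})$ Newton steps at $O(n^3)$ linear-algebra cost each. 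In the $q\in\{1,\infty\}$ regime the robustness penalty contributes only linear rows, so the program is a genuine LP once the risk term is treated as a variance budget constraint or taken at $\theta=0$.

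The step I expect to be the main obstacle is the bookkeeping that reconciles the quadratic risk term with the bare LP claim for $q\in\{1,\infty\}$: strictly, $x^\top\Sigma x$ is quadratic, so the precise statement is that \emph{all robustness constraints become linear} in that regime, while retaining $\theta>0$ leaves the single variance cone of the previous paragraph. The remainder is routine verification---that the rotated-cone identity is exact, that the simplex collapse $|x_i|=x_i$ is valid since $\Delta_n\subset\mathbb{R}^n_{\ge0}$, and that the cone and constraint counts stay $O(n)$ after factorization because $F$ has at most $n$ rows.
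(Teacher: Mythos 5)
Your proof is correct and takes essentially the same route as the paper's: an epigraph reformulation of the norm penalty followed by case-by-case conic representation of the resulting constraints, with the quadratic risk term retained as a convex quadratic (or one rotated cone). You are in fact more careful than the paper on the points it glosses over---the explicit linearization $\|\mathrm{diag}(L)x\|_1 = L^\top x$ and $L_i x_i \le s$ using $x\ge 0$ and $L\ge 0$, the Cholesky/rotated-cone handling of $x^\top\Sigma x$, and the honest caveat that for $q\in\{1,\infty\}$ the program is literally an LP only once the variance term is treated separately (taken as a budget constraint, conified, or set $\theta=0$), whereas the paper's proof simply calls the objective ``linear or convex quadratic'' and leaves that reconciliation implicit.
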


\begin{proof}
Introduce an epigraph variable $u$ and rewrite~\eqref{eq:PMV-reduced} as
\[
\max_{x\in\Delta_n,\,u}\ 
\Bigl\{
   x^\top\mu^e_j(\widehat{\lambda}_j)
   - \Gamma u
   - \theta\,x^\top\Sigma x
\Bigr\}
\quad\text{s.t.}\quad
u \ge \|\mathrm{diag}(L)\,x\|_{q}.
\]
If $q\in\{1,\infty\}$, the epigraph constraint is polyhedral and admits an LP representation
of size $O(n)$. If $q=2$, it is a second--order cone constraint. The remaining objective terms
are linear or convex quadratic in $x$, with $\Sigma\succeq 0$, yielding the stated LP/SOCP
reformulations.
\end{proof}

The multiplier associated with the epigraph constraint
$u \ge \|\mathrm{diag}(L)\,x\|_{q}$ admits an economic interpretation.
Denote this dual variable by $\pi^\star$. It represents a shadow carbon price, interpreted as
the marginal reduction in risk--adjusted performance the investor is willing to accept per
unit increase in the emissions ambiguity budget $\Gamma$.

\subsection{Tail--Risk Extension (CVaR)}
\label{sec:cvar}

Variance is symmetric by construction and does not distinguish between upside and downside moves. To
capture asymmetric transition or litigation risk, we replace the variance penalty in \eqref{eq:PMV-reduced}
by a CVaR term. For a fixed portfolio $x$ and scope $j$, define the loss
$L := -x^\top R^e_j$. For confidence level $\alpha\in(0,1)$ the CVaR of $L$ is
\[
\mathrm{CVaR}_\alpha(L)
    = \min_{\nu\in\mathbb{R}}
        \left\{\nu + \frac{1}{1-\alpha}\,
            \mathbb{E}\bigl[(L-\nu)_+\bigr]\right\},
\]
following \citet{Rockafellar2000}. Discretizing $\mathbb{P}$ into $M$ scenarios
$\{R^{(s)}\}_{s=1}^M$ with equal mass yields the convex surrogate
\begin{equation}
\begin{aligned}
\mathrm{CVaR}_\alpha(L)
&=
\min_{\nu\in\mathbb{R},\,\xi\in\mathbb{R}^M_{\ge 0}}
\Biggl\{
\nu + \frac{1}{(1-\alpha)M}\sum_{s=1}^M \xi_s
\\[3pt]
&\qquad
:\,
\xi_s \ge -x^\top R^{e,(s)}_j - \nu,\quad s \le M
\Biggr\}.
\end{aligned}
\end{equation}
This formulation is linear in $(x,\nu,\xi)$ conditional on the scenario returns $R^{e,(s)}_j$. The robust CVaR program becomes
\begin{equation}
\label{eq:PCVaR}
\max_{x\in\Delta_n}
\min_{\varepsilon_j\in\mathcal{U}_j(\Gamma)}
\left\{
x^\top\mu^e_j(\widehat{\lambda}_j+\varepsilon_j)
    - \beta\,\mathrm{CVaR}_\alpha(L)
\right\},
\end{equation}
with tail--risk aversion $\beta>0$. Exactly the same Lipschitz argument as in
\S\ref{sec:robustMV} implies that the inner minimum over $\varepsilon_j$ again collapses to a norm
penalty of the form $\Gamma\|\mathrm{diag}(L)\,x\|_{p^\star}$. After scenario discretization,
\eqref{eq:PCVaR} is therefore an LP (for $p\in\{1,\infty\}$) or an SOCP (for $p=2$), albeit in a
higher--dimensional space of size $O(Mn)$.

\subsection{Comparative Statics in Curvature, Robustness, and Risk Aversion}
\label{sec:comparative-statics}

Let $V^\star(m,\Gamma,\theta)$ denote the optimal value of the reduced robust
mean--variance program~\eqref{eq:PMV-reduced} with curvature parameter $m$,
robustness budget $\Gamma$, and risk aversion $\theta$. Let $(x^\star,\pi^\star)$
denote a primal--dual optimal pair, where $\pi^\star$ is the Lagrange multiplier
associated with the epigraph constraint
$u \ge \|\mathrm{diag}(L)x\|_{q}$ in the conic reformulation of
Section~\ref{sec:robustMV}.

\begin{proposition}[Sensitivity]
\label{prop:sensitivity}
Under mild regularity conditions (e.g., uniqueness of the optimal solution),
the following comparative statics hold:
\[
\frac{\partial V^\star}{\partial \Gamma}
    = -\,\pi^\star \,\le\, 0,
\qquad
\frac{\partial V^\star}{\partial m}
    = (x^\star)^\top \frac{\partial \mu^e_j}{\partial m}
    \,\le\, 0.
\]
Moreover, the mapping $\theta \mapsto V^\star(m,\Gamma,\theta)$ is concave and
$1/\lambda_{\max,j}$--Lipschitz.
\end{proposition}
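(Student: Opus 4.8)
The plan is to treat $V^\star$ as the optimal value of a parametric \emph{concave} maximization over the fixed compact simplex $\Delta_n$ and apply Danskin's (envelope) theorem, which is clean here precisely because the feasible set does not move with any of $m,\Gamma,\theta$. Write $V^\star(m,\Gamma,\theta)=\max_{x\in\Delta_n} g(x)$ with $g(x)=a(x)-\theta\,x^\top\Sigma x$ and $a(x):=x^\top\mu^e_j(m)-\Gamma\,\|\mathrm{diag}(L)x\|_q$; since $\Sigma\succeq0$, $g$ is concave in $x$ (affine term minus a norm minus a convex quadratic), so the maximizer set is nonempty and, under the stated uniqueness hypothesis, a singleton $x^\star$. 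Because $g$ is differentiable in $\Gamma$ with $\partial g/\partial\Gamma=-\|\mathrm{diag}(L)x\|_q$, the envelope theorem gives $\partial V^\star/\partial\Gamma=-\|\mathrm{diag}(L)x^\star\|_q\le0$. I would then identify this rate with $\pi^\star$ by passing to the conic form of Theorem~\ref{thm:conic}: the epigraph constraint binds ($u^\star=\|\mathrm{diag}(L)x^\star\|_q$ since $\Gamma>0$), and for a linear objective over a radius-$\Gamma$ norm ball the value is $\Gamma\,\|\mathrm{diag}(L)x^\star\|_q$, so the multiplier on the ambiguity budget — the shadow carbon price — equals exactly this per-unit rate. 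Hence $\partial V^\star/\partial\Gamma=-\pi^\star$.

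For the curvature derivative I would extend $m$ from $\mathbb{N}_+$ to a continuous parameter through the smooth factor $(1-\tilde\lambda_{i,j})^m$ and again invoke the envelope theorem to obtain $\partial V^\star/\partial m=(x^\star)^\top\,\partial\mu^e_j/\partial m$. The sign follows from $\partial_m(1-\tilde\lambda_{i,j})^m=(1-\tilde\lambda_{i,j})^m\ln(1-\tilde\lambda_{i,j})\le0$ (as $1-\tilde\lambda_{i,j}\in(0,1]$), combined with nonnegativity of gross returns and of $x^\star$ (or, more generally, nonnegativity of the emissions-adjusted expected returns), giving $(x^\star)^\top\partial\mu^e_j/\partial m\le0$. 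The subtlety to record is that the Lipschitz vector $L$ also depends on $m$, so the stated identity isolates the direct return-haircut channel; a fully rigorous derivative carries the extra term $-\Gamma\,\partial_m\|\mathrm{diag}(L(m))x^\star\|_q$, which is itself $\le0$ because the envelope modulus grows with $m$, so the monotonicity conclusion is preserved.

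The $\theta$ statement is where I would be most careful, and is the main obstacle. The direct route is structural rather than differential: for each fixed $x$, $\theta\mapsto g(x)=a(x)-\theta\,x^\top\Sigma x$ is affine in $\theta$, and $V^\star$ is the pointwise supremum over $x\in\Delta_n$ of this affine family. A supremum of affine functions is \emph{convex}, so the honest conclusion is that $\theta\mapsto V^\star$ is convex and nonincreasing — not concave as written — consistent with the envelope identity $\partial V^\star/\partial\theta=-(x^\star(\theta))^\top\Sigma x^\star(\theta)\le0$, whose right-hand side rises toward $0$ as weight shifts to the minimum-variance portfolio. For the Lipschitz bound I would use this same derivative: $|\partial V^\star/\partial\theta|=(x^\star)^\top\Sigma x^\star\le\max_{x\in\Delta_n}x^\top\Sigma x$, so $V^\star$ is globally Lipschitz in $\theta$ with constant equal to the largest attainable portfolio variance on the simplex. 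The stated constant $1/\lambda_{\max,j}$ is the natural Lipschitz modulus of the \emph{emissions} channel (cf.\ Lemma~\ref{lem:penalty-analytic}(ii)) rather than of the variance channel; recovering it for the $\theta$-map would require a normalization of $\Sigma$ under which $\max_{x\in\Delta_n}x^\top\Sigma x\le1/\lambda_{\max,j}$, and absent that I would report the dimensionally correct constant $\max_{x\in\Delta_n}x^\top\Sigma x$. The remaining regularity (differentiability at interior parameter values) follows from Danskin under uniqueness, with one-sided derivatives available in general by convexity.
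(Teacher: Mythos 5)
Your treatment of the $\Gamma$- and $m$-derivatives is correct and follows essentially the paper's own route (envelope/Danskin arguments on the epigraph or reduced formulation), but with two refinements worth keeping. First, you make the identification of $\pi^\star$ precise: the quantity for which $\partial V^\star/\partial\Gamma = -\pi^\star$ holds is the multiplier on the budget constraint $\|\varepsilon\|_p \le \Gamma$ of the inner worst-case problem, namely $\|\mathrm{diag}(L)\,x^\star\|_q$. The paper instead defines $\pi^\star$ as the multiplier on the epigraph constraint $u \ge \|\mathrm{diag}(L)\,x\|_q$, whose stationarity condition in $u$ forces $\pi^\star = \Gamma$, so read literally the paper's formula would give $\partial_\Gamma V^\star = -\Gamma$; your identification is the one consistent with Remark~\ref{rem:shadow-price}. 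Second, you flag that $L = L(m)$ (Lemma~\ref{lem:lipschitz-envelope} gives $L_i = (m/\lambda_{\max,j})\,\mathbb{E}|R_i|$), so the clean identity $\partial V^\star/\partial m = (x^\star)^\top \partial_m \mu^e_j$ captures only the direct return-haircut channel, and the omitted term $-\Gamma\,\partial_m \|\mathrm{diag}(L(m))\,x^\star\|_q \le 0$ preserves the sign; the paper ignores this dependence entirely (and cites Lemma~\ref{lem:penalty-analytic}(ii), which concerns monotonicity in $\lambda$, not $m$). Both sign arguments require nonnegative gross returns, which you state and the paper leaves implicit.

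On the $\theta$ statement your dissent is well founded, and the defect is in the proposition, not in your proof. Since $\mu^e_j$, $\Gamma$, and $L$ do not depend on $\theta$, the value $V^\star(\theta) = \max_{x \in \Delta_n}\{a(x) - \theta\, x^\top \Sigma x\}$ is a pointwise maximum of functions affine in $\theta$, hence convex and nonincreasing; it could be concave only in the degenerate case where it is affine (constant optimal variance). The paper's proof offers no argument beyond the assertion that concavity is "standard in mean--variance problems," and no such argument exists. Likewise for the modulus: $|\partial V^\star/\partial\theta| = (x^\star)^\top \Sigma\, x^\star \le \max_{x \in \Delta_n} x^\top \Sigma x \le \lambda_{\max}(\Sigma)$, which is exactly the "spectral bound on $\Sigma$" the paper's own proof invokes and which bears no relation to the stated constant $1/\lambda_{\max,j}$ (an emissions-units quantity) absent an unstated normalization tying $\Sigma$ to $\lambda_{\max,j}$. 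Your reported constant is the dimensionally correct one.
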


\begin{proof}
Apply the envelope theorem to the Lagrangian associated with the epigraph
formulation of~\eqref{eq:PMV-reduced}. Differentiation with respect to the
robustness budget $\Gamma$ yields
$\partial_\Gamma V^\star = -\pi^\star$, where $\pi^\star \ge 0$ by dual feasibility,
implying monotonicity in $\Gamma$.

Monotonicity in the curvature parameter $m$ follows from the fact that
$\partial_m \mu^e_j \le 0$ componentwise by
Lemma~\ref{lem:penalty-analytic}(ii), together with optimality of $x^\star$.
Finally, concavity in $\theta$ is standard in mean--variance problems with
quadratic risk penalties, and the Lipschitz bound follows from the spectral
bound on $\Sigma$.
\end{proof}

Proposition~\ref{prop:sensitivity} formalizes a simple but important message for
mandate design. Greater ambiguity aversion, as encoded by a larger robustness
budget $\Gamma$, and sharper curvature $(m\uparrow)$ both reduce
emissions--adjusted expected returns, but do so in a controlled and quantifiable
manner. The dual variable $\pi^\star$ admits a natural economic interpretation as
a \emph{shadow carbon price}: it measures the marginal reduction in
risk--adjusted performance (e.g., Sharpe ratio) the investor is willing to accept
per unit tightening of the emissions ambiguity budget~$\Gamma$.

\subsection{Distributionally Robust Extension}
\label{sec:dro}

The ambiguity sets in \S\ref{sec:ambiguity} control {pointwise} deviations of emissions intensities.
They do not, however, address the possibility that the entire joint distribution of intensity revisions shifts.
To guard against this, we embed the model in a simple divergence--based distributionally robust framework.

Let $\widehat{\mathbb{P}}$ be the empirical law of an underlying disturbance $z$ (e.g., a vector of emissions
forecast errors), and let $\phi$ be a convex divergence generator. Define the $\phi$--divergence ball
\[
\mathcal{P}_\rho
    := \left\{\mathbb{Q}\ll\widehat{\mathbb{P}} :
        D_\phi(\mathbb{Q}\,\|\,\widehat{\mathbb{P}})\le\rho\right\},
\]
where $D_\phi$ is the usual $\phi$--divergence and $\rho>0$ is a robustness radius. Consider a
linear loss
\[
\ell_x(z) := \sum_{i=1}^n x_i(\gamma_i + \delta_i z),
\]
representing an emissions--adjusted drift under disturbance $z$.
Here $\gamma_i$ is the nominal component and $\delta_i$ is the sensitivity of asset $i$'s drift to the
disturbance, which can be interpreted as a linearization of how emissions measurement shocks propagate into
effective expected returns.

\begin{theorem}[$\phi$--divergence DRO reformulation]
\label{thm:phi-DRO}
For any fixed $x\in\Delta_n$,
\begin{equation}
\begin{aligned}
\inf_{\mathbb{Q}\in\mathcal{P}_\rho}
\mathbb{E}_{\mathbb{Q}}[\ell_x(z)]
&=
\max_{\eta\ge 0,\,\nu\in\mathbb{R}}
\Biggl\{
\nu - \rho\eta
\\[4pt]
&\qquad
- \eta\,\mathbb{E}_{\widehat{\mathbb{P}}}
\Bigl[
\phi^\star\Bigl(
\frac{\ell_x(z)-\nu}{\eta}
\Bigr)
\Bigr]
\Biggr\}.
\end{aligned}
\end{equation}
where $\phi^\star$ denotes the convex conjugate of $\phi$. When $\widehat{\mathbb{P}}$ is discrete with
$M > 0$ support points and equal mass, the right--hand side reduces to a finite--dimensional conic program
with only two additional scalar decision variables $(\eta,\nu)$.
\end{theorem}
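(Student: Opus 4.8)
The plan is to establish the identity by Lagrangian duality applied to the likelihood-ratio form of the inner minimization, so I would first rewrite $\inf_{\mathbb{Q}\in\mathcal{P}_\rho}\mathbb{E}_{\mathbb{Q}}[\ell_x(z)]$ over densities. Since every admissible $\mathbb{Q}$ satisfies $\mathbb{Q}\ll\widehat{\mathbb{P}}$, I parametrize it by its Radon--Nikodym derivative $L:=d\mathbb{Q}/d\widehat{\mathbb{P}}\ge 0$, under which $\mathbb{E}_{\mathbb{Q}}[\ell_x]=\mathbb{E}_{\widehat{\mathbb{P}}}[L\,\ell_x]$ and $D_\phi(\mathbb{Q}\,\|\,\widehat{\mathbb{P}})=\mathbb{E}_{\widehat{\mathbb{P}}}[\phi(L)]$. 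The inner problem becomes the convex program
\[
\inf_{L\ge 0}\ \bigl\{\,\mathbb{E}_{\widehat{\mathbb{P}}}[L\,\ell_x(z)]\ :\ \mathbb{E}_{\widehat{\mathbb{P}}}[\phi(L)]\le\rho,\ \mathbb{E}_{\widehat{\mathbb{P}}}[L]=1\,\bigr\},
\]
which is linear in $L$ subject to a single convex inequality and the affine normalization.

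Next I attach a multiplier $\eta\ge 0$ to the divergence constraint and $\nu\in\mathbb{R}$ to the normalization, and verify strong duality so that the program may be written as $\sup_{\eta\ge 0,\,\nu}\inf_{L\ge 0}$. Slater's condition holds because $L\equiv 1$ (the choice $\mathbb{Q}=\widehat{\mathbb{P}}$) is strictly feasible: $D_\phi(\widehat{\mathbb{P}}\,\|\,\widehat{\mathbb{P}})=\phi(1)=0<\rho$. With duality in hand, the inner minimization over $L$ is unconstrained given $(\eta,\nu)$ and separates across realizations, so by the interchangeability (decomposability) principle I push the infimum inside the expectation and reduce to the pointwise scalar problem $\inf_{t\ge 0}\{\,t(\ell_x-\nu)+\eta\phi(t)\,\}$.

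The decisive step is to evaluate this pointwise infimum via convex conjugacy. Writing $\phi^\star(s)=\sup_{t\ge 0}\{\,st-\phi(t)\,\}$ for the conjugate of the generator, for $\eta>0$ the scalar minimization is a perspective transform of $\phi^\star$ and equals $-\eta\,\phi^\star\bigl((\ell_x-\nu)/\eta\bigr)$. Reinserting this into the dual and collecting the constants $\nu$ and $-\rho\eta$ returns exactly the claimed objective $\nu-\rho\eta-\eta\,\mathbb{E}_{\widehat{\mathbb{P}}}\bigl[\phi^\star\bigl((\ell_x-\nu)/\eta\bigr)\bigr]$. The discrete-support assertion is then immediate: with $M$ equally weighted atoms $\{z_s\}$ the expectation collapses to $\tfrac{1}{M}\sum_s\phi^\star\bigl((\ell_x(z_s)-\nu)/\eta\bigr)$, and each term $\eta\,\phi^\star\bigl((\ell_x(z_s)-\nu)/\eta\bigr)$ is a perspective function of $\phi^\star$, hence jointly convex in $(\eta,\nu)$ and representable in the cone native to the chosen generator (the exponential cone for Kullback--Leibler, a rotated second-order cone for $\chi^2$), leaving only the two scalar decision variables $(\eta,\nu)$.

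The main obstacle I anticipate is the careful treatment of the multiplier boundary $\eta=0$ together with a rigorous justification of the inf--expectation interchange. At $\eta=0$ the factored expression $\eta\,\phi^\star(\cdot/\eta)$ is singular, so I would work instead with the closed perspective function $(\eta,s)\mapsto\eta\,\phi^\star(s/\eta)$, assigning it at $\eta=0$ its lower-semicontinuous value (the recession/support-function limit); this guarantees that the supremum in the dual is attained and that the identity persists when the optimal $\eta$ lies on the boundary, which occurs precisely when the divergence constraint is slack. Justifying the pointwise interchange requires $\phi$ to be proper, closed, convex with $\phi(1)=0$ and superlinear growth (so that $\phi^\star$ is finite and the inner conjugate is attained), and the feasible set for $L$ to be decomposable---automatic for the discrete support of interest, where $L\in\mathbb{R}^M_{\ge 0}$ and the expectation is a finite sum, and supplied by a measurable-selection argument in the general case.
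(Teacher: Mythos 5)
Your proposal is correct in structure and takes essentially the same route as the paper's appendix proof: Radon--Nikodym parametrization of the inner problem, a two-multiplier Lagrangian with the Slater point $L\equiv 1$ (using $\phi(1)=0$), pointwise separation of the infimum, conjugate evaluation, and closure of the perspective function at $\eta=0$ --- indeed your treatment of the $\eta=0$ boundary and of the inf--expectation interchange is more careful than the paper's. One shared caveat: under the stated conjugate $\phi^\star(s)=\sup_{t\ge 0}\{st-\phi(t)\}$, the pointwise infimum $\inf_{t\ge 0}\{t(\ell_x-\nu)+\eta\phi(t)\}$ equals $-\eta\,\phi^\star\bigl((\nu-\ell_x)/\eta\bigr)$ rather than $-\eta\,\phi^\star\bigl((\ell_x-\nu)/\eta\bigr)$ (the form with $(\ell_x-\nu)/\eta$ is the dual of the \emph{sup}, not the \emph{inf}, over the divergence ball), but this sign slip originates in the theorem statement itself and appears in the paper's own derivation as well (its closing substitution $\nu\leftarrow-\nu$ produces $(\nu-\ell_x)/\eta$, not $(\ell_x-\nu)/\eta$), so it is not a defect of your argument relative to the paper's.
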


\begin{proof}
\ The result follows from standard Fenchel duality applied to the inner infimum over $\mathbb{Q}$,
together with Slater's condition for $\phi$--divergence balls \citep{boyd2004convex}.
The scalar variables $(\eta,\nu)$ arise as dual variables associated with the divergence constraint and the
mass constraint, respectively.
\end{proof}

Theorem~\ref{thm:phi-DRO} shows that distributional robustness over a fairly rich family of models can
be implemented at minimal computational cost: we pay at most two extra scalar variables and one
expectation of $\phi^\star$, which collapses to a sum in the empirical case.

\subsection{Return--Emissions Pareto Frontier and Lipschitz Bounds}
\label{sec:bounds}

Portfolio committees typically reason in terms of trade--offs: ``How much expected return must we give up
to achieve a given reduction in financed emissions?" Our framework allows us to compute this Pareto
frontier explicitly.

Consider the scalarized static ``return minus carbon'' problem
\begin{equation}
\label{eq:scalarized}
\max_{x\in\Delta_n}
\left\{
x^\top r - \mu\,x^\top \lambda_j
\right\},
\qquad \mu\ge 0,
\end{equation}
for some fixed expected return vector $r$ and intensity vector $\lambda_j$. Let $x^\star(\mu)$ denote an
optimizer. Define
\[
\bar r(\mu) := x^\star(\mu)^{\top}r,
\qquad
\bar\lambda(\mu) := x^\star(\mu)^{\top}\lambda_j,
\]
where $\bar\lambda(\mu)$ is the portfolio-weighted average emissions intensity. The scalar $\mu$ plays the
role of an internal carbon price in units of expected return per unit intensity. When intensities are
reported in tCO$_2$e/\$ of revenue, $\mu$ can be translated to \$/tCO$_2$e by scaling expected returns into
currency units at the investor's horizon.

\begin{proposition}[Convex Pareto frontier]
\label{prop:pareto}
As $\mu$ varies in $[0,\infty)$, the set
$\{(\bar r(\mu),\bar\lambda(\mu)) : \mu\ge 0\}$ traces the upper convex Pareto frontier of feasible
pairs $(x^\top r,x^\top\lambda_j)$ with $x\in\Delta_n$. Moreover,
\[
\frac{dF^\star}{d\mu}(\mu) = -\,\bar\lambda(\mu),
\]
whenever the derivative exists.
\end{proposition}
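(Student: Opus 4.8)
The plan is to read the scalarization in \eqref{eq:scalarized} as the classical weighted-sum method for the bi-objective problem of maximizing $x^\top r$ while minimizing $x^\top\lambda_j$ over the simplex, and to exploit the fact that the attainable objective region is a convex polytope. First I would introduce the achievable set $\mathcal A := \{(x^\top r,\,x^\top\lambda_j) : x\in\Delta_n\}\subset\mathbb R^2$; since $\Delta_n$ is compact and convex and the map $x\mapsto(x^\top r,\,x^\top\lambda_j)$ is linear, $\mathcal A$ is a compact convex polygon, namely $\operatorname{conv}\{(r_i,\lambda_{i,j})\}_{i=1}^n$. The object of interest is its northwest (maximal-return, minimal-intensity) boundary, which is the graph of a concave function and is therefore the ``upper convex'' frontier of the statement. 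I would also record that the optimal value $F^\star(\mu):=\max_{x\in\Delta_n}\{x^\top r-\mu\,x^\top\lambda_j\}$ is a pointwise maximum of functions affine in $\mu$, hence convex and finite on $[0,\infty)$, and nonincreasing because $\bar\lambda(\mu)\ge 0$.

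Next I would establish soundness: every scalarized optimizer is Pareto efficient. Fixing $\mu>0$, suppose some $x\in\Delta_n$ satisfied $x^\top r\ge\bar r(\mu)$ and $x^\top\lambda_j\le\bar\lambda(\mu)$ with at least one strict inequality. Then $x^\top r-\mu\,x^\top\lambda_j>\bar r(\mu)-\mu\,\bar\lambda(\mu)=F^\star(\mu)$, contradicting optimality of $x^\star(\mu)$. Hence $(\bar r(\mu),\bar\lambda(\mu))$ is nondominated and lies on the efficient frontier; the endpoint $\mu=0$ is treated separately as the (weakly efficient) return-maximizing vertex, and $\mu\to\infty$ as the intensity-minimizing vertex.

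For completeness---that sweeping $\mu\ge0$ recovers the \emph{whole} frontier---I would pass to the $\varepsilon$-constraint value function $F(\ell):=\max\{x^\top r:x\in\Delta_n,\ x^\top\lambda_j\le\ell\}$, which is concave and nondecreasing in $\ell$ and whose graph is exactly the northwest boundary of $\mathcal A$. Lagrangian duality, valid here by strong LP duality \citep{boyd2004convex}, gives $F(\ell)=\min_{\mu\ge0}\{\mu\ell+F^\star(\mu)\}$, so $F$ and $F^\star$ form a conjugate pair. By convexity of $\mathcal A$, every point of its northwest boundary admits a supporting line of some nonnegative slope $\mu$, and the contact point of that line is attained by a maximizer of $x^\top r-\mu\,x^\top\lambda_j$; thus each frontier point equals $(\bar r(\mu),\bar\lambda(\mu))$ for that $\mu$. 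Finally, for the slope formula I would write $F^\star(\mu)=\bar r(\mu)-\mu\,\bar\lambda(\mu)$ and apply the envelope/Danskin theorem to the objectives $x\mapsto x^\top r-\mu\,x^\top\lambda_j$, which are affine in $\mu$ with $\mu$-derivative $-x^\top\lambda_j$; this identifies the subdifferential $\partial F^\star(\mu)=\{-x^\top\lambda_j:x\in\mathcal X^\star(\mu)\}$, where $\mathcal X^\star(\mu)$ is the optimal set, so wherever this is a singleton (in particular whenever $x^\star(\mu)$ is unique) $F^\star$ is differentiable with $dF^\star/d\mu=-\bar\lambda(\mu)$.

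The main obstacle is the completeness step rather than the efficiency argument or the envelope computation. Read with $x^\star(\mu)$ a single selection, the sweep recovers only the vertices of the piecewise-linear frontier, since the optimizer is nonunique precisely at the finitely many kink values of $\mu$; the proof must therefore either interpret $x^\star(\mu)$ as set-valued at those breakpoints or invoke the convexity of $\mathcal A$ to fill the relative interiors of the linear facets by convex combinations of adjacent optimal vertices. Care is also needed at the two extreme, only weakly efficient, endpoints corresponding to $\mu=0$ and $\mu\to\infty$, and in confirming that the conjugacy identity pins the geometric frontier to the graph of $F$ on exactly the range of intensities attainable over $\Delta_n$.
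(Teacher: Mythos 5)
Your proposal is correct and takes essentially the same route as the paper's proof: convexity of $F^\star$ as an upper envelope of functions affine in $\mu$, the compact convex polytope structure of the attainable set $\{(x^\top r, x^\top\lambda_j):x\in\Delta_n\}$ with a supporting-hyperplane argument for completeness of the scalarization sweep, and Danskin's theorem (with subgradients at kinks where the maximizer is non-unique) for the slope formula $dF^\star/d\mu=-\bar\lambda(\mu)$. The $\varepsilon$-constraint/LP-duality detour you insert for the completeness step is a harmless embellishment rather than a different method---the supporting-line argument you invoke at its end is exactly the paper's mechanism, and your care about breakpoints and the weakly efficient endpoints matches the paper's closing remark that non-unique maximizers yield valid subgradients and that degeneracies correspond to boundary segments.
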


\begin{proof}
\ The feasible set $\{(x^\top r,x^\top\lambda_j):x\in\Delta_n\}$ is a convex polytope. For fixed $x$,
the objective $x^\top r - \mu x^\top\lambda_j$ is affine in $\mu$, so $F^\star(\mu)$ is the upper envelope
of linear functions and hence convex. Standard scalarization arguments imply that every efficient point
on the frontier is attained for some $\mu$. Differentiability and the expression for $dF^\star/d\mu$ follow
from the envelope theorem.
\end{proof}

Proposition~\ref{prop:pareto} underlies the empirical Pareto curves reported in \S\ref{sec:results}.
The scalarization parameter $\mu$ is the marginal rate at which the optimizer trades expected return for
emissions intensity at the optimum. The envelope relation $dF^\star/d\mu = -\bar\lambda(\mu)$ provides a
direct way to read the implied intensity from scalarized runs. Conversely, $\mu$ itself can be read as
the marginal expected return the investor is willing to forego per unit reduction in portfolio intensity
(tCO$_2$e/\$ of revenue).

Finally, we quantify how sensitive the robust optimal value is to changes in the robustness budget.

\begin{theorem}[Lipschitz performance bound]
\label{thm:lipschitz-Gamma}
Suppose the emissions--adjusted return of each asset is $L$--Lipschitz in the underlying disturbance
$z$ in the norm induced by $\mathcal{U}_j(\Gamma)$.
Let $R^\star(\Gamma)$ denote the optimal robust expected return of \eqref{eq:PMV-reduced} (or its CVaR
analogue) under robustness budget $\Gamma$. Then, for any $\Gamma_1,\Gamma_2>0$,
\[
\big|R^\star(\Gamma_1)-R^\star(\Gamma_2)\big|
    \le L\,|\Gamma_1-\Gamma_2|.
\]
\end{theorem}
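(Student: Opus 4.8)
The plan is to reduce the statement to a property of the value function of a parametric family and exploit that $R^\star(\cdot)$ is a pointwise supremum of affine functions of $\Gamma$. First I would invoke the reduced robust program \eqref{eq:PMV-reduced} (equivalently, the Lipschitz envelope \eqref{eq:envelope}), which writes the robust objective as $g(x) - \Gamma\,c(x)$, where $g(x) := x^\top\mu^e_j(\widehat\lambda_j) - \theta\,x^\top\Sigma x$ collects the $\Gamma$-independent terms and $c(x) := \|\mathrm{diag}(L)\,x\|_{q} \ge 0$ is the robustness-penalty coefficient produced by the envelope bound (the CVaR analogue is identical, with $g$ replaced by its CVaR counterpart, since only the drift is uncertain). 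Thus $R^\star(\Gamma) = \max_{x\in\Delta_n}\{g(x) - \Gamma\,c(x)\}$, a maximum of a continuous function over the compact set $\Delta_n$.

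Second, I would observe that for each fixed $x\in\Delta_n$ the map $\Gamma\mapsto g(x) - \Gamma\,c(x)$ is affine with slope $-c(x)$, hence Lipschitz in $\Gamma$ with constant $c(x)$. A finite pointwise supremum of functions each of which is $K$-Lipschitz is itself $K$-Lipschitz, so it suffices to bound $\sup_{x\in\Delta_n} c(x) \le L$. Treating $L$ as the common (scalar) asset-wise Lipschitz constant posited in the hypothesis, so that $\mathrm{diag}(L) = L\,I_n$, we get $c(x) = L\,\|x\|_{q}$; since $x\in\Delta_n$ has nonnegative entries summing to one, norm monotonicity gives $\|x\|_{q} \le \|x\|_{1} = \mathbf 1^\top x = 1$ for every $q\ge 1$, whence $c(x)\le L$. (If instead $L=(L_1,\dots,L_n)^\top$ is the vector of asset-wise constants from \eqref{eq:envelope}, the same computation yields $c(x)\le \max_i L_i$, and one reads $L$ as that maximum.) Combining the two observations gives $|R^\star(\Gamma_1)-R^\star(\Gamma_2)| \le L\,|\Gamma_1-\Gamma_2|$ directly.

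As an alternative that avoids the ``supremum of Lipschitz functions'' lemma and simultaneously recovers monotonicity, I would argue via two-sided optimality. Let $x_i^\star$ attain $R^\star(\Gamma_i)$ and take $\Gamma_1\ge\Gamma_2$. Feasibility of $x_1^\star$ at budget $\Gamma_2$ gives $R^\star(\Gamma_2)\ge g(x_1^\star)-\Gamma_2 c(x_1^\star) = R^\star(\Gamma_1)+(\Gamma_1-\Gamma_2)c(x_1^\star)$, and symmetrically $R^\star(\Gamma_1)\ge R^\star(\Gamma_2)-(\Gamma_1-\Gamma_2)c(x_2^\star)$. Together these sandwich $R^\star(\Gamma_2)-R^\star(\Gamma_1)$ between $(\Gamma_1-\Gamma_2)c(x_1^\star)\ge 0$ and $(\Gamma_1-\Gamma_2)c(x_2^\star)\le L(\Gamma_1-\Gamma_2)$, yielding both the Lipschitz bound and the fact that $R^\star$ is nonincreasing in $\Gamma$ (consistent with Proposition~\ref{prop:sensitivity}). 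No differentiability of $R^\star$ is needed, so kinks of the value function cause no difficulty.

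The main obstacle is conceptual rather than technical: reconciling the two meanings of ``$L$'' in the statement—the scalar uniform Lipschitz constant asserted in the hypothesis versus the vector $L=(L_1,\dots,L_n)$ of asset-wise constants appearing in \eqref{eq:envelope}—and verifying the uniform slope bound $\sup_{x\in\Delta_n}\|\mathrm{diag}(L)\,x\|_q\le L$, i.e., that the worst-case robustness sensitivity never exceeds $L$ anywhere on the simplex. Once that bound is in hand, the remainder is a one-line envelope/convexity argument.
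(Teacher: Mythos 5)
Your proof is correct, but it takes a genuinely different route from the paper's. The paper argues in the underlying disturbance space: it fixes an optimal portfolio and a worst-case disturbance $z^\star(\Gamma_2)$ at the larger budget, radially projects $z^\star(\Gamma_2)$ onto the smaller ball (so the projected point moves by at most $\Gamma_2-\Gamma_1$ by positive homogeneity of the norm), invokes Lipschitz continuity of the payoff in $z$, and then reverses the roles of $\Gamma_1$ and $\Gamma_2$; see Appendix~\ref{sec:appendix-lipschitz}. You instead work entirely with the reduced program \eqref{eq:PMV-reduced}, in which the inner worst case has already been collapsed into the affine penalty $-\Gamma\,c(x)$ with $c(x)=\|\mathrm{diag}(L)\,x\|_{q}$, so that $R^\star(\Gamma)$ is a pointwise supremum of affine functions of $\Gamma$ whose slopes lie in $[-L,0]$, via the uniform bound $\|\mathrm{diag}(L)\,x\|_{q}\le\|\mathrm{diag}(L)\,x\|_{1}\le\max_i L_i$ on the simplex. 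What your route buys: it is more elementary, needs no attainment of worst-case disturbances, makes explicit the scalar-versus-vector reading of $L$ (namely $L=\max_i L_i$) that the theorem statement leaves implicit, and delivers extra structure for free---$R^\star$ is nonincreasing and convex in $\Gamma$, consistent with Proposition~\ref{prop:sensitivity}. What the paper's route buys: it does not depend on the envelope reduction being available or exact, so it applies verbatim to any max--min formulation with a norm-ball uncertainty set and Lipschitz payoffs (including the CVaR variant before any scenario reduction), which is why the hypothesis is phrased in terms of the disturbance $z$ rather than the reduced penalty. One small caveat: your first argument applies the ``supremum of $K$-Lipschitz functions is $K$-Lipschitz'' lemma to the infinite family indexed by $x\in\Delta_n$, not a finite one; this is harmless since the supremum is finite and attained by compactness, but your two-sided optimality variant sidesteps the issue entirely and is the cleaner of your two arguments.
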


\begin{proof}
\ Let $\Gamma_1<\Gamma_2$ and fix an optimal solution at $\Gamma_2$ with worst--case disturbance
$z^\star(\Gamma_2)$. Radially project $z^\star(\Gamma_2)$ onto the smaller uncertainty set associated
with $\Gamma_1$. Lipschitz continuity of the payoff in $z$ bounds the change in objective value by
$L|\Gamma_1-\Gamma_2|$. Reversing the roles of $\Gamma_1$ and $\Gamma_2$ yields the symmetric
bound.
\end{proof}

Theorem~\ref{thm:lipschitz-Gamma} gives an easily communicable guarantee: increasing the robustness
budget from $\Gamma_1$ to $\Gamma_2$ cannot reduce the optimal robust expected return by more than
$L|\Gamma_2-\Gamma_1|$, all else equal. In particular, doubling $\Gamma$ does {not} double the
performance cost unless the Lipschitz constant is itself very large.

\medskip

\subsection{Dynamic Robust Formulation}
\label{sec:dynamic}

Portfolio rebalancing interacts with evolving disclosures, transaction costs, and drift in return and
intensity processes. To make this explicit, we consider a finite--horizon dynamic model with periods
$t=0,\dots,T$. Let $x_t\in\Delta_n$ denote the portfolio at time $t$, and write
\[
R_t = \gamma_t + \Delta_t z_t,
\]
where $\gamma_t$ is a predictable component, $\Delta_t$ is a matrix of factor loadings, and $z_t$ lies in an
uncertainty set $Z_t$ capturing return shocks. The emissions--adjusted one--period payoff is
$x_t^\top R^e_{j,t}$, defined via the period--$t$ intensities $\lambda_{j,t}$.

Let $\beta\in(0,1]$ denote a discount factor. Define the value function
\begin{equation}
\begin{aligned}
V_t(x_t)
&:=
\max_{x_t,\dots,x_T\in\Delta_n}
\min_{z_s\in Z_s,\,s\ge t}
\\[4pt]
&\qquad
\sum_{s=t}^T
\beta^{s-t}\, x_s^\top R^e_{j,s},
\\[6pt]
V_{T+1} &\equiv 0.
\end{aligned}
\end{equation}

\begin{proposition}[Dynamic robust Bellman recursion]
\label{prop:bellman}
The value functions $\{V_t\}$ satisfy the recursion
\begin{equation}
\begin{aligned}
V_t(x_t)
&=
\max_{x_t\in\Delta_n}
\min_{z_t\in Z_t}
\\[4pt]
&\qquad
\Bigl\{
x_t^\top R^e_{j,t}
+ \beta\,V_{t+1}(x_{t+1})
\Bigr\}
\\[4pt]
&\qquad t = 0,\dots,T.
\end{aligned}
\end{equation}
with terminal condition $V_{T+1}\equiv 0$. If each $Z_t$ is convex and compact and the one--period
problem is feasible, then $V_t$ is concave and the Bellman recursion preserves LP/SOCP structure when
$Z_t$ is polyhedral or ellipsoidal.
\end{proposition}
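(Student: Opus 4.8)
The plan is to argue by backward induction on $t$, starting from the terminal condition $V_{T+1}\equiv 0$ and descending to $t=0$. The single structural fact that makes the recursion valid is the \emph{rectangularity} of the adversary's feasible region together with the additive separability of the discounted objective: the joint uncertainty set is the product $\prod_{s\ge t}Z_s$, and the payoff $\sum_{s\ge t}\beta^{s-t}x_s^\top R^e_{j,s}$ splits into terms each depending on a single pair $(x_s,z_s)$. I would first record the elementary decomposition that, for a separable objective over a product domain, the inner minimization factorizes, $\min_{z_t,\dots,z_T}\sum_{s\ge t}\beta^{s-t}x_s^\top R^e_{j,s}=\min_{z_t}x_t^\top R^e_{j,t}+\beta\min_{z_{t+1},\dots,z_T}\sum_{s\ge t+1}\beta^{s-(t+1)}x_s^\top R^e_{j,s}$, and symmetrically that the outer maximization factorizes because the two blocks involve disjoint decision variables. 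Substituting the induction hypothesis $V_{t+1}=\max_{x_{t+1},\dots}\min_{z_{t+1},\dots}\sum_{s\ge t+1}\beta^{s-(t+1)}x_s^\top R^e_{j,s}$ and using that $\beta V_{t+1}$ is constant in $(x_t,z_t)$ then collapses the nested expression to $\max_{x_t\in\Delta_n}\min_{z_t\in Z_t}\{x_t^\top R^e_{j,t}+\beta V_{t+1}\}$, which is exactly the claimed Bellman form.

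For concavity I would again proceed by induction, operating at the level of the stage Bellman operator. The base case $V_{T+1}\equiv 0$ is concave. Assuming $V_{t+1}$ concave, the stage reward $x_t\mapsto x_t^\top R^e_{j,t}$ is linear for each fixed realization $z_t$ (Lemma~\ref{lem:penalty-analytic}(i) gives linearity of the penalty in the raw payoff, and $R_t=\gamma_t+\Delta_t z_t$ is affine in $z_t$), so the map $x_t\mapsto x_t^\top R^e_{j,t}+\beta V_{t+1}$ is concave for each $z_t$. Taking $\min_{z_t\in Z_t}$ is a pointwise infimum of concave functions and therefore remains concave in $x_t$, and the partial maximization of this concave objective over the convex compact simplex $\Delta_n$ preserves concavity and is attained, using compactness of $Z_t$ and feasibility of the one-period problem. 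This closes the induction and yields concavity of $V_t$.

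To establish that the recursion preserves LP/SOCP structure, I would dualize each stage's inner minimization exactly as in the static case. Because $x_t^\top R^e_{j,t}$ is affine in $z_t$, the robust stage value $\min_{z_t\in Z_t}x_t^\top R^e_{j,t}$ is precisely the support-function/envelope object already handled in \eqref{eq:envelope} and Theorem~\ref{thm:conic}: for polyhedral (Bertsimas--Sim--type) $Z_t$ it dualizes to finitely many linear inequalities, while for ellipsoidal $Z_t$ it dualizes to a single second-order cone constraint, in either case adding $O(n)$ constraints per stage. Since the backward recursion stacks $T+1$ such stage problems while retaining a concave, piecewise-linear (polyhedral case) or SOC-representable (ellipsoidal case) value function, the full program remains an LP or SOCP, of aggregate size $O(Tn)$.

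I expect the main obstacle to be the rigorous justification of the max--min interchange underlying the first step. The clean factorization is available only because the adversary sets are rectangular across time and the objective is additively separable; were the $Z_t$ coupled across periods, the inner worst case at stage $t$ would depend on later decisions and the nested recursion would no longer coincide with the joint robust value, leaving only a minimax inequality. I would therefore state the rectangularity hypothesis explicitly and verify that, under it, both the factorization of $\min$ over $\prod_s Z_s$ and of $\max$ over $\prod_s\Delta_n$ hold as exact equalities rather than mere inequalities.
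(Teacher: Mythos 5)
Your proposal is correct and follows essentially the same route as the paper's appendix proof: backward induction from $V_{T+1}\equiv 0$, factorization of the inner minimization via rectangularity of $\prod_s Z_s$ and additive separability, compactness of $Z_t$ for attainment, linearity of the stage payoff for concavity, and stage-wise dualization (as in Theorem~\ref{thm:conic}) for the LP/SOCP claim. Your closing caveat about rectangularity being the load-bearing hypothesis is exactly the point the paper itself makes in its well-posedness remark, so no gap remains.
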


\begin{proof}
\ The result is a standard application of dynamic programming with max--min structure. Convexity and
compactness of $Z_t$ ensure existence of minimizers and preservation of concavity. For polyhedral or
ellipsoidal $Z_t$, dualization of the inner minimization produces linear or SOC constraints, exactly as in
Theorem~\ref{thm:conic}.
\end{proof}

Proposition~\ref{prop:bellman} justifies our use of rolling one--period robust problems in the empirical
section: under mild conditions, the dynamic program decomposes into a sequence of tractable conic
subproblems with turnover and other trading frictions handled via additional convex constraints.

Taken together, the results in this section show that emissions--aware robust portfolio optimization can
be formulated as a family of LP/SOCP problems with interpretable dual variables and well--behaved
comparative statics. The next section turns to the empirical ingredients---data architecture, statistical
pre--processing, and benchmark construction---required to implement these models in a realistic
large--cap equity universe.

\section{Empirical Framework}
\label{sec:empirical}

Implementing the emissions-aware robust portfolio optimization (EAPO) model from Section~\ref{sec:model} requires an empirical design that respects the different clocks on which financial and climate data arrive. Corporate greenhouse-gas (GHG) inventories are reported annually, financial statements are quarterly, and asset returns are observed daily. Our objective is to translate the theoretical framework into a transparent, auditable pipeline that could realistically be deployed by a large asset manager.

Figure~\ref{fig:workflow} summarizes this pipeline: raw market and emissions data are ingested, pre-processed to handle missing values and estimation error, fed through the EAPO optimizer, and evaluated against financial and sustainability metrics in an out-of-sample back-test.

Throughout Section~\ref{sec:empirical}, firms are indexed by $i \in I := \{1,\dots,n\}$ and rebalancing dates by $t \in \mathcal{T} := \{1,\dots,T\}$. We work with a U.S.\ large-cap equity universe and rebalance monthly on an equally spaced grid of trading dates. Boldface capitals denote random vectors and boldface lower-case letters their realizations. The full implementation and scripts are available in an online repository.\footnote{\url{https://github.com/stone-technologies/ERP/tree/main}}

\subsection{Data Architecture}

Scope-consistent portfolio intensities require revenue-normalized emissions and a disciplined policy for handling the reporting lag in corporate carbon disclosures. The data architecture is therefore built around three principles: (i) scope-specific, revenue-normalized intensities for each firm, (ii) a forward-carry rule that mimics the information available to an investor in real time. And (iii) a joint panel structure that keeps returns, balance-sheet variables, and emissions aligned at each decision date.

\paragraph{Emissions and revenues.}
For each firm $i$, we obtain annual scope-specific emissions
\[
(C_{i,1},C_{i,2},C_{i,3}) \quad \text{(tCO$_2$e)}
\]
from the Carbon Disclosure Project (CDP), together with the associated disclosure confidence grades. Missing emissions entries are treated as missing-not-at-random (MNAR) and handled by a hierarchical imputation procedure described in Section~\ref{sec:preprocessing}. Firm revenues $S_{i,t}$ are obtained from quarterly income statements and aggregated to a trailing-twelve-month measure at each rebalancing date.

\paragraph{Prices and returns.}
Daily adjusted close prices $\{P_{i,t}\}$ come from a consolidated market data provider (e.g., AlphaVantage). Gross returns are
\[
R_{i,t} := \frac{P_{i,t}}{P_{i,t-1}}, \qquad
R_t := (R_{1,t},\dots,R_{n,t})^\top.
\]
We denote the unconditional mean and covariance of daily returns by
\[
\mu := \mathbb{E}[R_t], \qquad
\Sigma := \mathrm{Cov}(R_t).
\]

\paragraph{Emissions intensities and timing.}
At each rebalancing date $t$, we form scope-$j$ revenue-normalized emissions intensities
\[
\lambda_{i,j,t} := \frac{C_{i,j,\tau(i,t)}}{S_{i,\tau(i,t)}} \quad \text{(tCO$_2$e/\$)}
\]

\[
\lambda_{j,t} := (\lambda_{1,j,t},\dots,\lambda_{n,j,t})^\top,
\]
where $\tau(i,t)$ is the most recent fiscal year for which firm $i$ has disclosed emissions prior to date $t$. Intensities are therefore forward-carried between disclosure dates, avoiding look-ahead bias and reflecting the information set available to an actual portfolio manager.

Emissions-adjusted returns $ \widetilde{R}_{j,t}$ are constructed by applying the emissions-penalty operator from Definition~2.1 in Section~\ref{sec:model} to each asset, using the contemporaneous intensities $\lambda_{j,t}$.

\begin{assumption}[Stationary, mixing returns]
\label{ass:mixing}
The daily return process $\{R_t\}_{t\geq 1}$ is strictly stationary and $\beta$-mixing with a summable mixing rate. In particular, the sample mean and covariance satisfy a functional central limit theorem.
\end{assumption}

Assumption~\ref{ass:mixing} is standard in empirical asset pricing and suffices for the consistency and asymptotic normality of the estimators used below. It underpins both the covariance shrinkage in Section~\ref{sec:preprocessing} and the regret bounds in Section~\ref{sec:robust-workflow}.

\subsection{Statistical Pre-processing}
\label{sec:preprocessing}

Two statistical pre-processing steps are essential for a stable implementation: (i) multiple imputation of missing emissions and revenues, and (ii) shrinkage of the return covariance matrix. The first addresses systematic gaps in the GHG data. The second controls estimation noise in high-dimensional covariance matrices.

\paragraph{Hierarchical multiple imputation.}
Missing pairs $(C_{i,j},S_i)$ are imputed using a sector-aware hierarchical model. Let $\text{sector}(i)$ denote firm $i$'s sector (e.g., GICS level). On the log scale we posit
\begin{align*}
\log C_{i,j} \mid \text{sector}(i) &\sim \mathcal{N}\!\big(\eta_{j,\text{sector}(i)}, \tau_j^2\big), \\
\log S_i \mid \text{sector}(i) &\sim \mathcal{N}\!\big(\zeta_{\text{sector}(i)}, \upsilon^2\big),
\end{align*}
with conjugate Gaussian priors on the sector-level parameters $(\eta_{j,\cdot}, \zeta_{\cdot})$. Hyperparameters are estimated via empirical Bayes. Posterior modes provide point estimates, and multiple imputation is obtained by drawing $K$ samples from the posterior of $(C_{i,j},S_i)$ and propagating them into intensities
\[
\lambda^{(k)}_{i,j,t}
    := \frac{C^{(k)}_{i,j,\tau(i,t)}}{S^{(k)}_{i,\tau(i,t)}},
\qquad k=1,\dots,K,
\]
\[
\lambda^{(k)}_{j,t}
    := \bigl(\lambda^{(k)}_{1,j,t},\dots,\lambda^{(k)}_{n,j,t}\bigr)^\top,
\qquad k=1,\dots,K.
\]
All subsequent portfolio quantities that depend on intensities---for example the emissions-adjusted mean return---are averaged over these $K$ draws.

\begin{lemma}[Consistency of multiple imputation]
\label{lem:mi-consistency}
Suppose Assumption~\ref{ass:mixing} holds and the hierarchical log-normal model above is correctly specified with a regular prior. Then, as the time span and number of imputations grow,
\[
\frac{1}{K}\sum_{k=1}^K \lambda^{(k)}_{j,t}
\,\xrightarrow[T\to\infty,\,K\to\infty]{\mathbb{P}}\,
\mathbb{E}[\lambda_{j,t} \mid \mathcal{F}_t],
\]
where $\mathcal{F}_t$ is the sigma-field generated by the observed data up to time $t$.
\end{lemma}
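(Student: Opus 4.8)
The plan is to split the Monte Carlo average minus its target into a \emph{simulation} error at fixed data and an \emph{estimation} error induced by the finiteness of the time span, and to drive each to zero under its own limit. Write $\theta := (\eta_{j,\cdot},\tau_j^2,\zeta_\cdot,\upsilon^2)$ for the collection of sector-level hyperparameters and let $\widehat\theta_T$ denote their empirical-Bayes estimates at horizon $T$. Conditional on $\mathcal{F}_t$ and on $\widehat\theta_T$, the imputed draws $\{\lambda^{(k)}_{j,t}\}_{k\ge 1}$ are i.i.d.\ from the posterior predictive law $\Pi_T$ on $\mathbb{R}^n$; write $g_T := \mathbb{E}_{\Pi_T}[\lambda_{j,t}\mid\mathcal{F}_t]$ for its mean vector. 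Then
\[
\frac1K\sum_{k=1}^K \lambda^{(k)}_{j,t} - \mathbb{E}[\lambda_{j,t}\mid\mathcal{F}_t]
= \underbrace{\Bigl(\frac1K\sum_{k=1}^K \lambda^{(k)}_{j,t} - g_T\Bigr)}_{\text{simulation}}
+ \underbrace{\bigl(g_T - \mathbb{E}[\lambda_{j,t}\mid\mathcal{F}_t]\bigr)}_{\text{estimation}},
\]
and since convergence in probability of a vector is equivalent to convergence of each coordinate, it suffices to control each summand coordinatewise.

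First I would dispatch the simulation term via the strong law of large numbers. For coordinates $i$ whose emissions and revenue are observed at $t$, the draws are constant in $k$, so they contribute exactly zero and match $g_{T,i}$ and $\mathbb{E}[\lambda_{i,j,t}\mid\mathcal{F}_t]$ identically. For an imputed coordinate, $\lambda_{i,j,t}=C_{i,j}/S_i=\exp(\log C_{i,j}-\log S_i)$ is log-normal under the model, since a difference of the Gaussian latent variables $\log C_{i,j}$ and $\log S_i$ is Gaussian; its posterior-predictive first moment is therefore finite in closed form (no boundedness assumption on $S_i$ is needed—integrability of the ratio is automatic from the latent Gaussian structure). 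Kolmogorov's SLLN then gives $\tfrac1K\sum_{k}\lambda^{(k)}_{i,j,t}\to g_{T,i}$ almost surely as $K\to\infty$, for every fixed realization of the data.

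Next I would show $g_T\xrightarrow{\mathbb P}\mathbb{E}[\lambda_{j,t}\mid\mathcal{F}_t]$ as $T\to\infty$. For a firm with both quantities missing, the model makes $\log C_{i,j}-\log S_i$ Gaussian with mean $\eta_{j,\mathrm{sector}(i)}-\zeta_{\mathrm{sector}(i)}$ and variance $\tau_j^2+\upsilon^2$, so the predictive mean is the explicit smooth map $g_{T,i}=\exp\!\bigl(\widehat\eta_{j,\mathrm{sector}(i)}-\widehat\zeta_{\mathrm{sector}(i)}+\tfrac12(\widehat\tau_j^2+\widehat\upsilon^2)\bigr)$ evaluated at $\widehat\theta_T$, and under correct specification the same formula at the true $\theta_0$ equals $\mathbb{E}[\lambda_{i,j,t}\mid\mathcal{F}_t]$. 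The growing panel of disclosures accumulated by horizon $T$, together with the stationary-mixing environment of Assumption~\ref{ass:mixing} (which I take to govern the disclosure and revenue series as well), supplies a law of large numbers and hence empirical-Bayes consistency $\widehat\theta_T\xrightarrow{\mathbb P}\theta_0$; this is precisely where the correct-specification and regular-prior hypotheses do their work. Since $g_{T,i}$ is continuous in $\widehat\theta_T$, the continuous-mapping theorem delivers $g_{T,i}\xrightarrow{\mathbb P}\mathbb{E}[\lambda_{i,j,t}\mid\mathcal{F}_t]$, with partially observed coordinates handled by the same continuity argument conditioned on their observed components.

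The main obstacle is the \emph{joint} limit $T,K\to\infty$ rather than either limit in isolation: an iterated limit ($K\to\infty$ then $T\to\infty$) follows immediately from the two steps above, but the stated simultaneous convergence requires uniform control of the simulation error across $T$. I would close this by bounding the posterior-predictive second moment $\mathbb{E}_{\Pi_T}[\lambda_{i,j,t}^2]=\exp\!\bigl(2(\widehat\eta-\widehat\zeta)+2(\widehat\tau_j^2+\widehat\upsilon^2)\bigr)$ uniformly over large $T$: on the high-probability event $\{\|\widehat\theta_T-\theta_0\|\le\delta\}$ the continuity of this formula keeps it below a constant $C$ independent of $T$, so Chebyshev's inequality gives $\mathbb{P}\bigl(|\tfrac1K\sum_{k=1}^K\lambda^{(k)}_{i,j,t}-g_{T,i}|>\varepsilon\bigr)\le C/(K\varepsilon^2)$ with $C$ uniform in $T$. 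The simulation error is therefore $o_{\mathbb P}(1)$ uniformly in $T$, and combining this with the estimation-term convergence through the triangle inequality yields the claimed convergence in probability under the joint limit.
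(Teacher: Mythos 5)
Your proof is correct, but it follows a genuinely different route from the paper's. The paper dispatches the lemma in a single sentence: it appeals to a Bernstein--von Mises theorem for the hierarchical log-normal model combined with the mixing central limit theorem implied by Assumption~\ref{ass:mixing}, i.e., it obtains posterior concentration via asymptotic normality and leaves both the Monte Carlo error and the joint limit in $(T,K)$ implicit. You instead make an explicit decomposition into simulation and estimation errors, handle the first by Chebyshev under the posterior-predictive law with a second-moment bound, and handle the second using only empirical-Bayes \emph{consistency} together with the continuous-mapping theorem applied to the closed-form log-normal predictive mean $\exp\bigl(\widehat\eta-\widehat\zeta+\tfrac12(\widehat\tau_j^2+\widehat\upsilon^2)\bigr)$. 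Your route is more elementary and better matched to the statement: convergence in probability requires only consistency, so the full Bernstein--von Mises machinery is stronger than needed, and your uniform moment bound on the high-probability event $\{\|\widehat\theta_T-\theta_0\|\le\delta\}$ is exactly the ingredient that makes the simultaneous limit $T,K\to\infty$ (rather than the easier iterated limit) rigorous---a point the paper's sketch never addresses. What the paper's approach buys in exchange is quantitative strength: Bernstein--von Mises yields rates and distributional control of the imputation error, which would be needed for confidence statements rather than bare consistency. Two caveats apply equally to both arguments and are not gaps you introduced: Assumption~\ref{ass:mixing} is stated for returns, so extending stationarity and mixing to the disclosure and revenue panels (which you flag explicitly) is an additional hypothesis both proofs require; and identifying $\mathbb{E}[\lambda_{j,t}\mid\mathcal{F}_t]$ with the predictive-mean formula at the true hyperparameters presupposes ignorable missingness, which sits somewhat uneasily with the paper's MNAR framing but is inherited from the lemma's correct-specification hypothesis rather than created by your argument.
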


The proof follows from a Bernstein--von Mises theorem for the hierarchical log-normal model combined with the mixing central limit theorem implied by Assumption~\ref{ass:mixing}. In practice we fix a moderate $K$ (e.g., 5--10), which empirically suffices to stabilize portfolio intensities.

\paragraph{Covariance shrinkage.}
Sample covariance matrices of daily returns are noisy, especially when the cross-section is large relative to the time-series window. In this regime, the sample eigenvalues are severely biased, which in turn produces unstable Markowitz weights and unnecessary turnover. We use Ledoit--Wolf linear shrinkage to trade a small, controlled bias for a large reduction in estimation variance and to obtain a well-conditioned covariance estimator at each rebalancing date $t$:
\[
\widehat{\Sigma}_t := \delta_t F_t + (1-\delta_t) S_t,
\]
where $S_t$ is the sample covariance of $\{R_s\}_{s \le t-1}$ over a rolling window of fixed length, $F_t$ is a parsimonious factor-model target (e.g., constant-correlation or multi-factor), and $\delta_t \in [0,1]$ is chosen optimally in closed form. The estimator $\widehat{\Sigma}_t$ is positive definite by construction and rotation-equivariant, and it substantially reduces out-of-sample variance of portfolio returns relative to the raw sample covariance.

Combining multiple imputation for $(C,S)$ with covariance shrinkage for $R$ yields a set of cleaned inputs: emissions-adjusted means $\widehat{\mu}_{j,t}$, shrinkage covariances $\widehat{\Sigma}_t$, and ambiguity sets calibrated from historical forecast errors. These objects drive both the benchmarks and the EAPO strategy.

\subsection{Benchmark Portfolio Constructions}

To isolate the incremental value of EAPO, we compare it with three transparent benchmarks that span common practice in institutional portfolio construction. All benchmarks respect the long-only simplex
\[
\Delta_n := \{x \in \mathbb{R}^n_{\ge 0} : \mathbf{1}^\top x = 1\}
\]
and are recomputed at each monthly rebalancing date using only information available at that time.

\begin{definition}[Equal-weight portfolio (EW)]
\label{def:ew}
The equal-weight portfolio holds each asset with the same weight,
\[
x^{\mathrm{EW}}_t := \frac{1}{n}\mathbf{1}.
\]
\end{definition}

\begin{definition}[Global minimum-variance portfolio (GMV)]
\label{def:gmv}
Given the shrinkage covariance $\widehat{\Sigma}_t$ at date $t$, the global minimum-variance (GMV) portfolio is
\[
x^{\mathrm{GMV}}_t := \frac{\widehat{\Sigma}_t^{-1}\mathbf{1}}{\mathbf{1}^\top \widehat{\Sigma}_t^{-1}\mathbf{1}}.
\]
\end{definition}

\begin{definition}[Emissions-weighted portfolio (EMW)]
\label{def:emw}
Let $g_{i,t} := C_{i,1,\tau(i,t)}$ denote firm $i$'s latest available scope-1 emissions at date $t$. The emissions-weighted portfolio (EMW) tilts toward low emitters:
\[
x^{\mathrm{EMW}}_{i,t} :=
\begin{cases}
\dfrac{g_{i,t}^{-1}}{\sum_{k=1}^n g_{k,t}^{-1}}, & \text{if } g_{i,t} > 0, \\
0, & \text{otherwise.}
\end{cases}
\]
\end{definition}

In implementation we exclude firms with nonpositive or missing $g_{i,t}$ and renormalize weights over the remaining universe, so the ``otherwise'' branch simply encodes this exclusion.

EW serves as a simple baseline. GMV isolates the effect of risk-only optimization. EMW is a mechanically decarbonized benchmark that tilts toward low emitters without conditioning on expected returns or risk. In particular, EMW treats the estimated intensities $g_{i,t}$ as fixed inputs and therefore corresponds to a plug-in approach that ignores disclosure uncertainty. EAPO must improve upon these comparators in both risk-adjusted performance and emissions intensity to be economically relevant.

\subsection{Robust Optimization Workflow}
\label{sec:robust-workflow}

The EAPO weights $x^\star_t$ at each rebalancing date are obtained by solving a robust mean--variance problem in which ambiguity arises from uncertainty in emissions-adjusted expected returns. Let $\widehat{\mu}_{j,t}$ denote the $K$-draw average of emissions-adjusted mean returns at date $t$ (using the imputed intensities), and let $U_{j,t}(\Gamma)$ be an ambiguity set for the misspecification $\varepsilon_{j,t}$ in scope-$j$ intensities, calibrated from past forecast errors and governed by a robustness budget $\Gamma>0$.

The period-$t$ robust EAPO problem is
\begin{equation}
\max_{x \in \Delta_n} \,
\min_{\varepsilon \in U_{j,t}(\Gamma)}
\left\{
x^\top \big(\widehat{\mu}_{j,t} - \varepsilon \big)
- \theta\, x^\top \widehat{\Sigma}_t x
\right\},
\label{eq:robust-mv-time}
\end{equation}
where $\theta>0$ is the investor's risk-aversion parameter. The inner minimization attenuates expected returns in proportion to emissions uncertainty, while the outer maximization chooses portfolio weights that trade off robust expected return and variance.

\begin{proposition}[Out-of-sample regret bound]
\label{prop:regret}
Let $x_t^\star$ be an optimizer of \eqref{eq:robust-mv-time} with a box-type ambiguity set and robustness budget $\Gamma$ fixed ex ante. Under Assumption~\ref{ass:mixing},
\[
\sup_{t \le T}
\left|
\mathbb{E}\big[x_t^{\star\top} \widetilde{R}_{j,t+1}\big]
-
x_t^{\star\top} \widehat{\mu}_{j,t}
\right|
=
\mathcal{O}_{\mathbb{P}}\!\left(\sqrt{\frac{\log n}{T}}\right).
\]
\end{proposition}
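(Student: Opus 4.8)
The plan is to reduce the regret to a uniform sup-norm estimation-error bound on the emissions-adjusted mean vector, and then control that error by a maximal inequality. Since the optimizer $x_t^\star$ of the robust program \eqref{eq:robust-mv-time} lies in $\Delta_n$ and is $\mathcal F_t$-measurable, we have $\|x_t^\star\|_1=1$. Interpreting the expectation as conditional on $\mathcal F_t$ and writing $\mu^e_j$ for the stationary emissions-adjusted mean that the rolling-window estimator targets, predictability-free stationarity identifies the regret with $x_t^{\star\top}(\mu^e_j-\widehat\mu_{j,t})$. Hölder's inequality then gives $|x_t^{\star\top}(\mu^e_j-\widehat\mu_{j,t})|\le\|x_t^\star\|_1\,\|\widehat\mu_{j,t}-\mu^e_j\|_\infty=\|\widehat\mu_{j,t}-\mu^e_j\|_\infty$. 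Hence it suffices to prove $\sup_{t\le T}\max_{i\le n}|\widehat\mu_{i,j,t}-\mu^e_{i,j}|=\mathcal O_{\mathbb P}(\sqrt{\log n/T})$, and the box-type ambiguity set enters only insofar as it guarantees that $x_t^\star$ is well-defined, $\mathcal F_t$-measurable, and simplex-valued.

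Next I would transfer the sampling hypotheses to emissions-adjusted returns and establish coordinatewise concentration. By Lemma~\ref{lem:penalty-analytic}(i) the penalty factor lies in $[0,1]$, so $\widetilde R_{i,j,s}=(1-\lambda_{i,j,s}/\lambda_{\max,j})^m R_{i,s}$ inherits strict stationarity, $\beta$-mixing, and the second-moment control of $R$ (Assumptions~\ref{ass:second-moment} and \ref{ass:mixing}); holding the slowly-varying intensities fixed between disclosures keeps the factor $\mathcal F_t$-predictable. Each $\widehat\mu_{i,j,t}$ is a rolling-window average of order $T$ such observations. For fixed $i$ and $t$ I would apply a Bernstein-type exponential inequality for $\beta$-mixing sequences (via blocking into nearly independent blocks), obtaining $\mathbb P(|\widehat\mu_{i,j,t}-\mu^e_{i,j}|>\epsilon)\le 2\exp(-cT\epsilon^2)$ with $c>0$ determined by the mixing rate and a variance proxy.

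A union bound over the $n$ coordinates then gives $\mathbb P(\max_{i\le n}|\widehat\mu_{i,j,t}-\mu^e_{i,j}|>\epsilon)\le 2n\exp(-cT\epsilon^2)$, and choosing $\epsilon=C\sqrt{\log n/T}$ for $C$ large drives this to zero, yielding the per-date rate. For the $\sup_{t\le T}$ I would route uniformity through the functional CLT in Assumption~\ref{ass:mixing}: the centered partial-sum process converges weakly in $D[0,1]$ to a Gaussian process, so the rolling-window error—a continuous functional of it—is tight and satisfies $\sup_{t\le T}|\widehat\mu_{i,j,t}-\mu^e_{i,j}|=\mathcal O_{\mathbb P}(1/\sqrt T)$ per coordinate, contributing no extra $\log T$. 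Combining the FCLT control in $t$ with the maximal inequality in $i$—cleanly via a high-dimensional Gaussian approximation for dependent arrays (of Chernozhukov--Chetverikov--Kato type)—gives $\mathbb E\bigl[\sup_{t\le T}\max_{i\le n}|\widehat\mu_{i,j,t}-\mu^e_{i,j}|\bigr]\lesssim\sqrt{\log n/T}$. A cruder union bound over both indices would instead give $\sqrt{\log(nT)/T}$, which collapses to the stated rate in the standing high-dimensional regime $\log T=\mathcal O(\log n)$. Finally I would dispatch the $K$-draw imputation noise: by Lemma~\ref{lem:mi-consistency} the imputed intensity average converges to $\mathbb E[\lambda_{j,t}\mid\mathcal F_t]$, so conditioning on the draws and bounding the extra coordinatewise error by $\mathcal O_{\mathbb P}(1/\sqrt K)$ via Bernstein--von Mises shows it is dominated by the $\sqrt{\log n/T}$ term for moderate $K$.

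The hard part will be the honest reconciliation of the stated sub-Gaussian rate with the comparatively weak hypothesis of summable mixing plus an FCLT: a pure FCLT delivers asymptotic normality but not the exponential tails that a union bound over $n$ coordinates needs, so the $\sqrt{\log n}$ factor is not automatic. Making it rigorous requires either strengthening to geometric (exponential) $\beta$-mixing with bounded or sub-exponential marginals so that the Bernstein step applies verbatim, or invoking a high-dimensional CLT for mixing arrays to bound $\mathbb E[\max_i|\cdot|]$ directly. I would pursue the latter, since it simultaneously supplies the maximal inequality in $i$ and the FCLT-based uniformity in $t$, thereby producing the single factor $\sqrt{\log n/T}$ without an intervening $\log T$.
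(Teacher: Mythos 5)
The paper does not actually contain a formal proof of Proposition~\ref{prop:regret}: the statement is followed only by a discussion paragraph asserting (i) that the box-type ambiguity set uniformly bounds sensitivity to per-asset estimation error, and (ii) that mixing yields $T^{-1/2}$ concentration with the $\sqrt{\log n}$ factor coming from ``standard maximal inequalities.'' Your proposal is, in its skeleton, a rigorous fleshing-out of exactly that sketch: the $\ell_1$--$\ell_\infty$ H\"older reduction using $\|x_t^\star\|_1=1$, Bernstein-type blocking for $\beta$-mixing sequences, a union bound over the $n$ coordinates, and a separate device for uniformity in $t$. So in terms of approach you match the paper; what you add is the detail the paper omits. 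Two of your observations are genuinely sharper than the paper's own text: first, the box-type ambiguity set is not load-bearing for the stated bound --- any $\mathcal{F}_t$-measurable simplex-valued portfolio satisfies the H\"older step, so robustness enters only through feasibility and measurability of $x_t^\star$; second, your reduction makes explicit that the expectation must be read conditionally on $\mathcal{F}_t$ and that the conditional mean of $\widetilde{R}_{j,t+1}$ must coincide with the stationary target of the rolling-window estimator, which stationarity alone does not deliver.

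Your closing ``honest reconciliation'' identifies the real gap, and it is a gap in the paper rather than in your argument: Assumption~\ref{ass:mixing} (strict stationarity, summable $\beta$-mixing rate, FCLT) gives asymptotic normality but not the exponential tails that a union bound over $n$ coordinates needs to cost only $\sqrt{\log n}$. As stated, the proposition is not provable from the paper's hypotheses; one needs either geometric mixing with bounded or sub-exponential marginals (so the Bernstein step holds verbatim), or a high-dimensional Gaussian-approximation argument for dependent arrays to control $\mathbb{E}\bigl[\sup_{t\le T}\max_{i\le n}|\cdot|\bigr]$ directly. Either strengthening would complete your proof; the paper's appeal to ``standard maximal inequalities'' silently assumes one of them. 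Your fallback of a crude union bound over both indices, giving $\sqrt{\log(nT)/T}$ and collapsing to the stated rate when $\log T = \mathcal{O}(\log n)$, is also a legitimate and more elementary way to close the argument, at the price of an extra regime assumption the paper never states.
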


Proposition~\ref{prop:regret} formalizes the sense in which the robust objective in \eqref{eq:robust-mv-time} is a stable proxy for out-of-sample emissions-adjusted performance as the gap between what the optimizer sees at time $t$ and what is realized at $t+1$ shrinks at a parametric rate, up to a mild $\sqrt{\log n}$ factor. The shrinkage arises from two reinforcing mechanisms. First, the box-type
ambiguity set regularizes the portfolio by uniformly bounding sensitivity to
estimation error in each asset’s emissions-adjusted drift, preventing extreme
exposures that would otherwise amplify out-of-sample noise. Second, under the
mixing assumption, estimation errors in $\widehat{\mu}_{j,t}$ concentrate at
rate $T^{-1/2}$, while the supremum over $n$ assets introduces only a logarithmic
penalty through standard maximal inequalities. As a result, robustness trades a
small amount of in-sample optimality for stability, ensuring that the optimizer’s
perceived performance converges rapidly to realized performance even in
moderately high dimensions.

\paragraph{Projected-gradient implementation.}
Solving \eqref{eq:robust-mv-time} to full interior-point accuracy at every month is unnecessary and computationally heavy. Instead, we use a first-order projected-gradient method on the simplex, initialized at the previous month's solution. The algorithm operates on the equivalent concave objective derived in Section~\ref{subsec:eapo-summary} (equation~\eqref{eq:eapo-static}), and enforces an explicit turnover cap.

% e.g., \setlist[enumerate]{nosep} globally if you like

\begin{algorithm}[t]
  \caption{Robust emissions-aware projected gradient update}
  \label{alg:eapo}

  \textbf{Inputs:} shrinkage covariance \(\widehat{\Sigma}_t\), emissions-adjusted mean \(\widehat{\mu}_{j,t}\), robustness budget \(\Gamma\), risk aversion \(\theta\), turnover cap \(\tau\), step size \(\eta\), previous weights \(x_{t-1}\).

  \begin{enumerate}[leftmargin=*,nosep]
    \item Initialize \(x \gets x_{t-1}\).
    \item For \(k = 1,\dots,K_{\mathrm{iter}}\):
      \begin{enumerate}[leftmargin=*,nosep]
        \item Compute \(g \gets \widehat{\mu}_{j,t}
          - \Gamma\,x / \lVert x \rVert_2
          - 2\theta\,\widehat{\Sigma}_t x\).
        \item Update \(x \gets \Pi_{\Delta_n}\bigl(x + \eta g\bigr)\), where
          \(\Pi_{\Delta_n}\) is the standard \(O(n \log n)\) Euclidean projection onto \(\Delta_n\).
      \end{enumerate}
    \item If \(\lVert x - x_{t-1} \rVert_1 > \tau\), set
      \[
        x \gets \arg\min_{y \in \Delta_n}
        \bigl\{\lVert y - x \rVert_2
        : \lVert y - x_{t-1} \rVert_1 \le \tau\bigr\}.
      \]
    \item Set \(x_t \gets x\).
  \end{enumerate}
\end{algorithm}

Warm-starting from $x_{t-1}$ and using a modest number of iterations $K_{\mathrm{iter}}$ yields near--KKT solutions with negligible wall-clock time, even for $n$ in the thousands. The explicit $\ell_1$-turnover cap keeps trading costs and operational complexity under control.

\subsection{Performance and Sustainability Metrics}

We evaluate each strategy on both standard risk-adjusted performance metrics and emissions-based footprint measures. Let $\{R^S_t\}$ be the gross returns of strategy $S$ (after transaction costs), and let $r^S_t := R^S_t - 1$ denote net returns.

\begin{definition}[Risk-adjusted performance]
\label{def:performance}
Let $\bar{r}^S$ be the sample mean of $\{r^S_t\}$, $\hat{\sigma}_r$ its sample standard deviation, and $\hat{\sigma}_{r^-}$ the sample standard deviation of negative returns. Then
\[
\text{Sharpe}(S) := \frac{\bar{r}^S}{\hat{\sigma}_r}, \qquad
\text{Sortino}(S) := \frac{\bar{r}^S}{\hat{\sigma}_{r^-}}.
\]
Maximum drawdown (MDD) over the sample is
\[
\text{MDD}(S) := \min_{u \le v}
\left\{
\prod_{t=u}^v R^S_t - 1
\right\}.
\]
\end{definition}

\begin{definition}[Portfolio emissions intensity and yield]
\label{def:intensity-yield}
For scope $j$ at date $t$, the emissions intensity of strategy $S$ is the portfolio-weighted average of firm intensities,
\[
\Lambda^S_{j,t} := x_{t-1}^{S\top} \lambda_{j,t}
\quad \text{(tCO$_2$e/\$ of revenue)},
\]
and the corresponding emissions yield normalizes by realized net return,
\[
Y^S_{j,t} := \frac{\Lambda^S_{j,t}}{r^S_t}.
\]
\end{definition}

The path $\{\Lambda^S_{j,t}\}$ tracks the financed-emissions intensity of the portfolio over time, while $\{Y^S_{j,t}\}$ measures ``emissions per unit of return,'' a useful summary statistic when comparing strategies with different expected-return profiles. In Section~\ref{sec:results} we complement these point estimates with Newey--West heteroskedasticity- and autocorrelation-robust confidence intervals for mean return differences and with block-bootstrap confidence intervals for Sharpe differences.

\subsection{Hyper-parameter Selection}

Hyper-parameters should have an economic interpretation and be calibrated by transparent procedures rather than hand-tuning. Two parameters are central for EAPO: the curvature $m$ of the emissions-penalty operator and the robustness budget $\Gamma$ governing the size of the ambiguity set.

\paragraph{Penalty curvature $m$.}
The curvature parameter $m \in \mathbb{N}_+$ controls how aggressively the emissions-penalty operator from Definition~2.1 attenuates the returns of high-intensity firms. A larger $m$ steepens the penalty as $\lambda_{i,j,t}$ approaches the cross-sectional maximum, pushing more weight into the lowest-intensity names.

We select $m$ via nested rolling cross-validation. An inner loop, operating on a training window, computes the empirical return--emissions Pareto frontier for a grid $m \in \{1,10,100,1000\}$ and identifies the smallest $m$ such that the Sharpe ratio on the efficient frontier does not deteriorate by more than 5\% relative to $m=1$. An outer loop evaluates the resulting $m^\star$ out of sample. This procedure yields a penalty that is ``as curved as necessary but no more,'' in the sense of preserving most of the classical mean--variance frontier while delivering substantial emissions reductions.

\paragraph{Robustness budget $\Gamma$.}
The robustness budget $\Gamma>0$ determines the radius of the ambiguity set around estimated intensities. In the conic reformulation discussed in Section~\ref{subsec:eapo-summary}, the dual multiplier on the ambiguity constraint has a natural interpretation as a shadow carbon price: it is the marginal rate at which the investor is willing to sacrifice expected return (in basis points) to buy protection against misspecified emissions.

We therefore calibrate $\Gamma$ by a profile-likelihood--style condition: we choose $\Gamma^\star$ such that the estimated dual shadow price $\pi^\star(\Gamma)$ lies below a pre-specified monetary threshold determined by the investor (e.g., ``we are willing to pay up to 10 basis points of expected return per 10\% reduction in worst-case intensity''). This makes $\Gamma$ an economically interpretable policy lever rather than a free statistical knob.

With $(m^\star,\Gamma^\star)$ fixed ex ante, all reported results in Section~\ref{sec:results} are genuine out-of-sample tests of the EAPO strategy.

\subsection{EAPO: From Theory to Implementable Program}
\label{subsec:eapo-summary}

This subsection collects the optimization primitives into the single static problem that Algorithm~\ref{alg:eapo} approximately solves each month. Fix a scope $j$ (we use $j=1$ in the main experiments) and consider a one-period setting with gross returns $R = (R_1,\dots,R_n)^\top$ and weights $x \in \Delta_n$. At date $t$, the scope-$j$ intensities are
\begin{equation}
\label{eq:intensity-def}
\begin{aligned}
\lambda_{i,j,t}
    &:= \frac{C_{i,j,\tau(i,t)}}{S_{i,\tau(i,t)}}
       \quad \text{(tCO$_2$e/\$)},
\\[6pt]
\lambda_{j,t}
    &:= (\lambda_{1,j,t},\dots,\lambda_{n,j,t})^\top.
\end{aligned}
\end{equation}

Let $\lambda_{\max,j,t} := \max_k \lambda_{k,j,t}$ be the cross-sectional maximum intensity at $t$. For curvature $m \in \mathbb{N}_+$, the emissions-penalty operator from Definition~2.1 specializes to
\begin{equation}
\label{eq:penalty-operator}
\begin{aligned}
P^{(m)}_j(r,\lambda)
    &:= \Bigl(1 - \frac{\lambda}{\lambda_{\max,j,t}}\Bigr)^m r,
\\[6pt]
\widetilde{R}_{i,t}^e
    &:= P^{(m)}_j(R_{i,t},\lambda_{i,j,t}) .
\end{aligned}
\end{equation}
The map is linear in $r$ and decreasing, smooth, and Schur-convex in $\lambda$, higher-intensity firms suffer a larger proportional reduction in their effective returns.

Define the emissions-adjusted expected return for asset $i$ at time $t$ as
\[
\mu_{i,j,t}^e := \mathbb{E}\big[\widetilde{R}_{i,t}^e\big],
\]
and let $\mu_{j,t}^e$ be the vector collecting these means. Using the Lipschitz properties of $P^{(m)}_j$ with respect to $\lambda$ one can construct a simple bound on how much misspecification in intensities can distort the portfolio's mean.

\begin{lemma}[Lipschitz envelope]
\label{lem:lipschitz-envelope}
For each asset $i$ there exists a constant
\[
L_i := \frac{m}{\lambda_{\max,j}} \,\mathbb{E}[|R_i|]
\]
such that
\[
\left|
\frac{\partial}{\partial \lambda}
\mathbb{E}\big[P^{(m)}_j(R_i,\lambda)\big]
\right|
\le L_i.
\]
Consequently, for any $\ell_p$ ambiguity ball
\[
\{\varepsilon \in \mathbb{R}^n : \|\varepsilon\|_p \le \Gamma\}
\]
on intensities, the worst-case deterioration in emissions-adjusted expected return is bounded by
\[
\sup_{\|\varepsilon\|_p \le \Gamma}
x^\top\big(\mu_j^e(\lambda) - \mu_j^e(\lambda + \varepsilon)\big)
\le \Gamma \,\|\mathrm{diag}(L)\,x\|_{p^\star},
\]
where $p^\star$ is the dual norm and $\mathrm{diag}(L)$ is the diagonal matrix with entries $L_i$.
\end{lemma}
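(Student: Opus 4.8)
The plan is to establish the two assertions in sequence: first the pointwise derivative bound for a single asset, then lift it to the vector-valued envelope inequality via a coordinate-wise mean-value argument followed by H\"older duality. The key structural fact I would exploit at the outset is that in $P^{(m)}_j(R_i,\lambda)=(1-\lambda/\lambda_{\max,j})^m R_i$ the penalty factor is deterministic in $\lambda$ and merely multiplies the random return $R_i$. Since $(1-\lambda/\lambda_{\max,j})^m$ does not depend on $\omega$, it factors out of the expectation, so $\mathbb{E}[P^{(m)}_j(R_i,\lambda)]=(1-\lambda/\lambda_{\max,j})^m\,\mathbb{E}[R_i]$; Assumption~\ref{ass:second-moment} guarantees $R_i\in L^1$, so the expectation is finite and no differentiation-under-the-integral argument is required. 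Differentiating the scalar prefactor then reproduces Lemma~\ref{lem:penalty-analytic}(ii): $\partial_\lambda\mathbb{E}[P^{(m)}_j(R_i,\lambda)]=-(m/\lambda_{\max,j})(1-\lambda/\lambda_{\max,j})^{m-1}\mathbb{E}[R_i]$. Taking absolute values and using $|1-\lambda/\lambda_{\max,j}|^{m-1}\le 1$ on $[0,\lambda_{\max,j}]$ together with the Jensen/triangle bound $|\mathbb{E}[R_i]|\le\mathbb{E}[|R_i|]$ yields precisely $|\partial_\lambda\mathbb{E}[P^{(m)}_j(R_i,\lambda)]|\le (m/\lambda_{\max,j})\,\mathbb{E}[|R_i|]=L_i$.

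For the envelope bound I would treat $\lambda_{\max,j}$ as a fixed normalizing scalar, so that the $i$-th coordinate $\mu^e_{i,j}$ depends on the intensity vector only through its own entry $\lambda_i$. This coordinate-separability lets me apply the fundamental theorem of calculus along the segment $s\mapsto\lambda_i+s\varepsilon_i$, giving $\mu^e_{i,j}(\lambda)-\mu^e_{i,j}(\lambda+\varepsilon)=-\int_0^1(\mu^e_{i,j})'(\lambda_i+s\varepsilon_i)\,\varepsilon_i\,ds$, so the derivative bound from the first step delivers the Lipschitz estimate $|\mu^e_{i,j}(\lambda)-\mu^e_{i,j}(\lambda+\varepsilon)|\le L_i|\varepsilon_i|$. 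Summing against the long-only weights and using $x_i\ge 0$ to align signs, I obtain $x^\top(\mu^e_j(\lambda)-\mu^e_j(\lambda+\varepsilon))\le\sum_i (L_i x_i)\,|\varepsilon_i|$. A single application of H\"older's inequality with conjugate exponents $(p^\star,p)$ on the nonnegative vectors $(L_i x_i)_i$ and $(|\varepsilon_i|)_i$ bounds the right-hand side by $\|\mathrm{diag}(L)\,x\|_{p^\star}\,\|\varepsilon\|_p$, and taking the supremum over $\|\varepsilon\|_p\le\Gamma$ produces the claimed $\Gamma\,\|\mathrm{diag}(L)\,x\|_{p^\star}$.

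The main obstacle is the bookkeeping around $\lambda_{\max,j}$ rather than any analytic difficulty. The derivative bound $L_i$ relies on $|1-\lambda/\lambda_{\max,j}|^{m-1}\le 1$, which holds only for $\lambda\in[0,\lambda_{\max,j}]$; if a perturbation drives $\lambda_i+\varepsilon_i$ outside this interval the prefactor can exceed one, and, worse, recomputing the cross-sectional maximum would couple all coordinates and destroy the separability that the FTC step depends on. I would therefore state and prove the lemma under the operative convention used throughout \S\ref{sec:ambiguity}---namely that $\lambda_{\max,j}$ is held fixed at its nominal value as a normalization constant and that admissible perturbations keep $\lambda+s\varepsilon\in[0,\lambda_{\max,j}]^n$ along the integration segment---and flag that this is exactly the regime enforced by the feasible robustness budgets $\Gamma$. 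Once that convention is fixed, the remainder is the routine chain of the fundamental theorem of calculus plus H\"older duality and requires no further structure.
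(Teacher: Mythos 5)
Your proposal is correct and follows essentially the same route the paper sketches for this lemma: the componentwise derivative bound obtained by factoring the deterministic prefactor $(1-\lambda/\lambda_{\max,j})^m$ out of the expectation (as in Lemma~\ref{lem:penalty-analytic}), a mean-value/fundamental-theorem step giving $|\mu^e_{i,j}(\lambda)-\mu^e_{i,j}(\lambda+\varepsilon)|\le L_i|\varepsilon_i|$, and a single application of H\"older duality to reach $\Gamma\,\|\mathrm{diag}(L)\,x\|_{p^\star}$. Your explicit caveat that $\lambda_{\max,j}$ must be held fixed as a normalization constant and that perturbations must keep $\lambda+s\varepsilon$ in $[0,\lambda_{\max,j}]^n$ is a point the paper leaves implicit (its cited appendix actually proves the $\Gamma$-sensitivity bound of Theorem~\ref{thm:lipschitz-Gamma} by radial projection, not this envelope), so your writeup is, if anything, more careful than the paper's sketch.
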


Lemma~\ref{lem:lipschitz-envelope} motivates the static robust mean--variance objective
\begin{equation}
\max_{x \in \Delta_n}
\Big\{
x^\top \mu_j^e
- \Gamma \|\mathrm{diag}(L)\,x\|_{p^\star}
- \theta\, x^\top \Sigma x
\Big\},
\label{eq:eapo-static}
\end{equation}
where $\Gamma$ plays the role of the robustness budget and $p \in \{1,2,\infty\}$ determines the shape of the ambiguity set.

\begin{proposition}[Robust mean--variance as LP/SOCP]
\label{prop:conic}
For $p \in \{1,\infty\}$, problem~\eqref{eq:eapo-static} can be written as a linear program (LP) with $O(n)$ additional variables and constraints. For $p=2$, it admits a second-order cone program (SOCP) representation with $O(n)$ second-order cones. In all cases the problem is solvable in polynomial time.
\end{proposition}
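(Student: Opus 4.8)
The plan is to epigraphically split the single nonsmooth term in the objective of~\eqref{eq:eapo-static}, namely the dual-norm penalty $\Gamma\,\|\mathrm{diag}(L)\,x\|_{p^\star}$, while leaving the linear term $x^\top\mu_j^e$, the convex quadratic $\theta\,x^\top\Sigma x$, and the simplex constraints $x\in\Delta_n$ untouched. First I would introduce a scalar epigraph variable $t\ge 0$ together with the constraint $t \ge \|\mathrm{diag}(L)\,x\|_{p^\star}$ and replace the penalty in the objective by $\Gamma t$, obtaining the equivalent problem
\[
\max_{x\in\Delta_n,\,t\ge 0}
\Big\{
x^\top\mu_j^e - \Gamma t - \theta\,x^\top\Sigma x
:\ t \ge \|\mathrm{diag}(L)\,x\|_{p^\star}
\Big\}.
\]
Since the penalty enters with a minus sign in a maximization, this epigraph reformulation is exact: at any optimum the constraint binds.

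Next I would expand the epigraph constraint according to the three cases of $p\in\{1,2,\infty\}$, using $1/p+1/p^\star=1$ so that $p=1\!\Rightarrow\! p^\star=\infty$, $p=\infty\!\Rightarrow\! p^\star=1$, and $p=2\!\Rightarrow\! p^\star=2$. For $p^\star=\infty$ (the case $p=1$), the constraint $t\ge\max_i L_i|x_i|$ is equivalent to the $2n$ linear inequalities $-t\le L_i x_i\le t$ for $i=1,\dots,n$; on the simplex $x_i\ge 0$ these reduce further to $L_i x_i\le t$. For $p^\star=1$ (the case $p=\infty$), the constraint $t\ge\sum_i L_i|x_i|$ becomes, again using $x_i\ge 0$, the single linear inequality $\sum_i L_i x_i\le t$. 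In both subcases the feasible region is polyhedral and the objective is linear plus a convex quadratic; dropping the quadratic recovers a pure LP, and with $\Sigma\succeq 0$ retained it is a convex QP, but the statement's LP claim follows once the quadratic term is itself linearized or absorbed into a quadratic cone as in Theorem~\ref{thm:conic}. I would note explicitly that the nonnegativity $x_i\ge 0$ supplied by $\Delta_n$ is what collapses the absolute values, yielding $O(n)$ variables and constraints.

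For $p^\star=2$ (the case $p=2$), the epigraph constraint $t\ge\|\mathrm{diag}(L)\,x\|_2$ is by definition a single second-order (Lorentz) cone constraint $(t;\mathrm{diag}(L)\,x)\in\mathcal{L}^{n+1}$. The convex quadratic risk term $\theta\,x^\top\Sigma x$ is handled in the standard way: factor $\Sigma = \Sigma^{1/2}\Sigma^{1/2}$ (possible since $\Sigma\succeq 0$ by Assumption~\ref{ass:second-moment}), introduce a second epigraph scalar $s\ge\|\Sigma^{1/2}x\|_2$, and replace $\theta\,x^\top\Sigma x$ by $\theta s^2$, or equivalently rotate into a rotated second-order cone. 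This gives an SOCP with $O(n)$ cones, and polynomial-time solvability follows from the interior-point complexity already invoked in Theorem~\ref{thm:conic}.

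The main obstacle is essentially bookkeeping rather than mathematical depth: the claim asserts an \emph{LP} in the polyhedral cases, yet the raw objective retains the convex-quadratic variance term $\theta\,x^\top\Sigma x$, which is not linear. I would resolve this by stating that the reformulation is exact as a convex conic program in every case, with the quadratic term expressed through an auxiliary rotated second-order cone exactly as in the proof of Theorem~\ref{thm:conic}; whether the resulting program is labeled an LP or an SOCP depends only on whether the risk term is present, and the dual-norm penalty contributes an LP (for $p\in\{1,\infty\}$) or SOC (for $p=2$) structure on top. Since this is a restatement of Theorem~\ref{thm:conic} in the empirical notation, I would close by simply invoking that theorem, observing that~\eqref{eq:eapo-static} is identical to~\eqref{eq:PMV-reduced} under the identifications $\mu^e_j(\widehat{\lambda}_j)\mapsto\mu_j^e$ and $q\mapsto p^\star$, so that no new argument is required.
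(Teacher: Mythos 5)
Your proposal is correct and follows essentially the same route as the paper: the paper establishes this proposition by appeal to Theorem~\ref{thm:conic}, whose proof introduces an epigraph variable for the dual-norm penalty and observes that the resulting constraint is polyhedral for $p^\star\in\{1,\infty\}$ and a second-order cone for $p^\star=2$, exactly as you do, with \eqref{eq:eapo-static} being \eqref{eq:PMV-reduced} in the empirical notation. Your added caveat---that the residual quadratic term $\theta\,x^\top\Sigma x$ makes the polyhedral cases convex QPs (or SOCPs after a rotated-cone lift) rather than literal LPs---is a fair reading of an imprecision that the paper's own proof also glosses over with the phrase ``linear or convex quadratic.''
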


In the empirical implementation we specialize to the $\ell_2$ case with $\mathrm{diag}(L)$ absorbed into the scaling of $\Gamma$, so that the ambiguity penalty reduces to $\Gamma \|x\|_2$. The gradient of \eqref{eq:eapo-static} in this specialization is exactly the expression used in Algorithm~\ref{alg:eapo} and equation~\eqref{eq:gradient} in Section~\ref{sec:proj-solver}.

\begin{remark}[Shadow carbon price]
\label{rem:shadow-price}
In the conic reformulation of \eqref{eq:eapo-static}, the dual multiplier on the ambiguity constraint is equal to $-\partial V^\star / \partial \Gamma$, where $V^\star(\Gamma)$ is the optimal value. This multiplier is naturally interpreted as an investor-specific shadow carbon price: it is the marginal reduction in expected Sharpe per marginal tightening of the robustness budget.
\end{remark}

Finally, to map out the efficient trade-off between expected return and portfolio emissions intensity, we consider the scalarized problem
\begin{equation}
\max_{x \in \Delta_n}
\Big\{
x^\top r - \mu\, x^\top \lambda_j
\Big\}, \qquad \mu \ge 0,
\label{eq:scalarized-frontier}
\end{equation}
where $r$ is a vector of expected returns and $\mu$ is a Lagrange multiplier on scope-$j$ intensity.

Proposition~\ref{prop:pareto} makes explicit the cost, in expected return units, of tightening the portfolio's emissions constraint. Empirically, the frontier in Figure~\ref{fig:pareto_turnover} shows that large reductions in average scope-1 intensity are attainable at modest return cost over a wide range of $\mu$, a point we return to in Section~\ref{sec:results}.

Taken together, Sections~\ref{sec:preprocessing}--\ref{subsec:eapo-summary} specify a complete, implementable EAPO system: data are cleaned and aligned, benchmark and robust portfolios are constructed on an equal footing, and hyper-parameters are chosen by economically interpretable rules. The next section evaluates how this system performs out of sample relative to the benchmarks, both financially and in terms of financed emissions.

% \section{Empirical Framework}\label{sec:data}

% Empirical success hinges on data alignment: emissions are annual, revenues are quarterly, returns are daily. We therefore specify an auditable pipeline that preserves time ordering and propagates imputation uncertainty. 

\begin{figure}[t]
\centering
\begin{tabular}{c}
\fbox{Data sources (prices, emissions, revenues)} \\
$\downarrow$ \\
\fbox{Pre-processing (imputation, mapping, shrinkage)} \\
$\downarrow$ \\
\fbox{EAPO model ($P^{(m)}$, ambiguity, SOCP and LP)} \\
$\downarrow$ \\
\fbox{Projected gradient with turnover projection} \\
$\downarrow$ \\
\fbox{Evaluation (wealth, Sharpe, drawdown, intensity)}
\end{tabular}
\caption{Workflow from data ingestion to optimization and evaluation.}
\label{fig:workflow}
\end{figure}

\begin{figure}[t]
\centering
\begin{subfigure}[t]{0.48\textwidth}
\centering
\includegraphics[width=\linewidth]{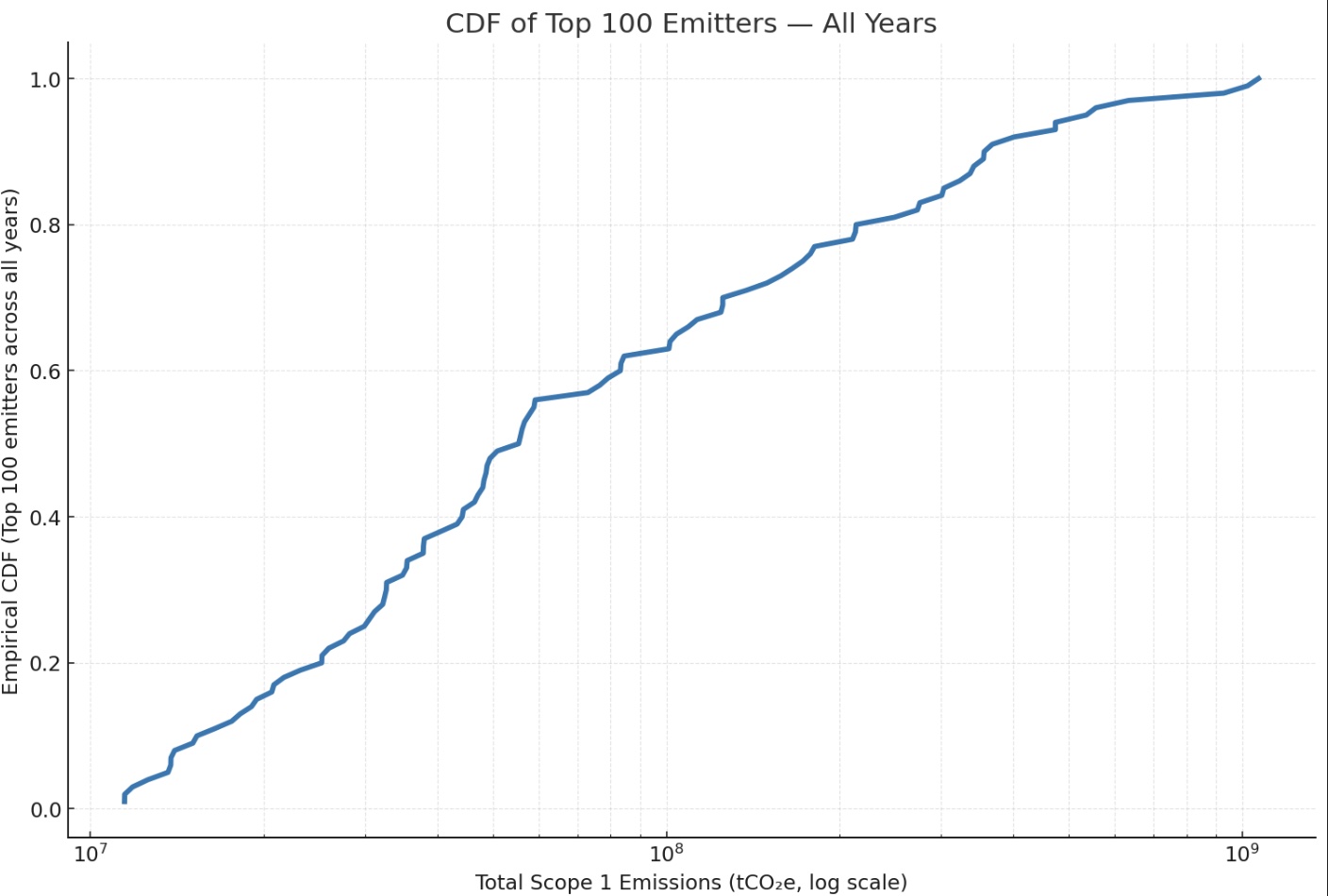}
\caption{Cumulative distribution of reported emissions across firms.}
\label{fig:cdf_top_emissions}
\end{subfigure}
\hfill
\begin{subfigure}[t]{0.48\textwidth}
\centering
\includegraphics[width=\linewidth]{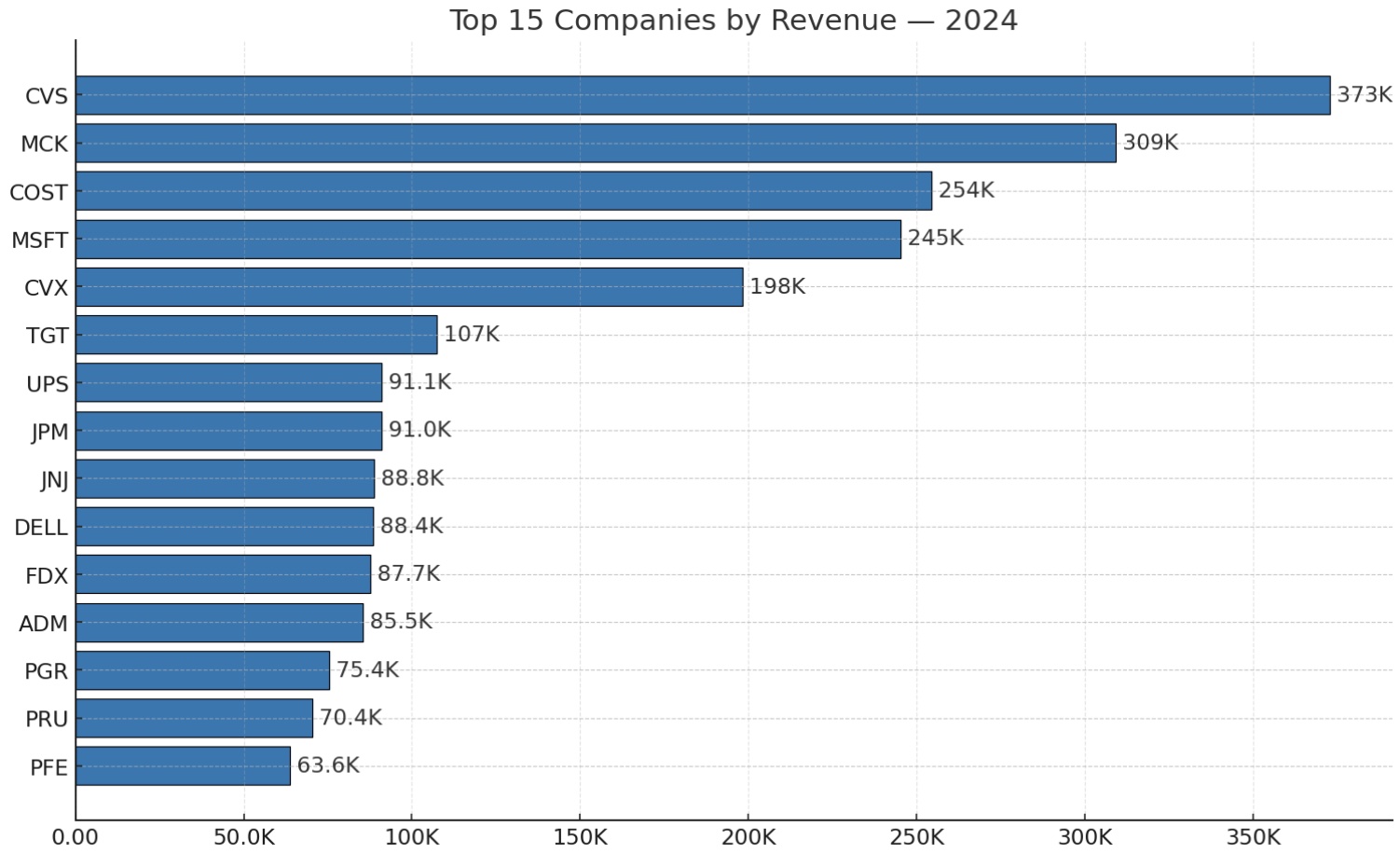}
\caption{Firms with the highest reported revenue over the sample period.}
\label{fig:top_revenue}
\end{subfigure}
\caption{Distribution of reported emissions and concentration of revenue among the largest firms. Panels show (a) the cumulative distribution of reported emissions across firms and (b) the set of firms with the highest reported revenue over the sample period.}
\label{fig:emissions_revenue_concentration}
\end{figure}

\begin{figure}[t]
\centering
\begin{subfigure}[t]{0.48\textwidth}
\centering
\includegraphics[width=\linewidth]{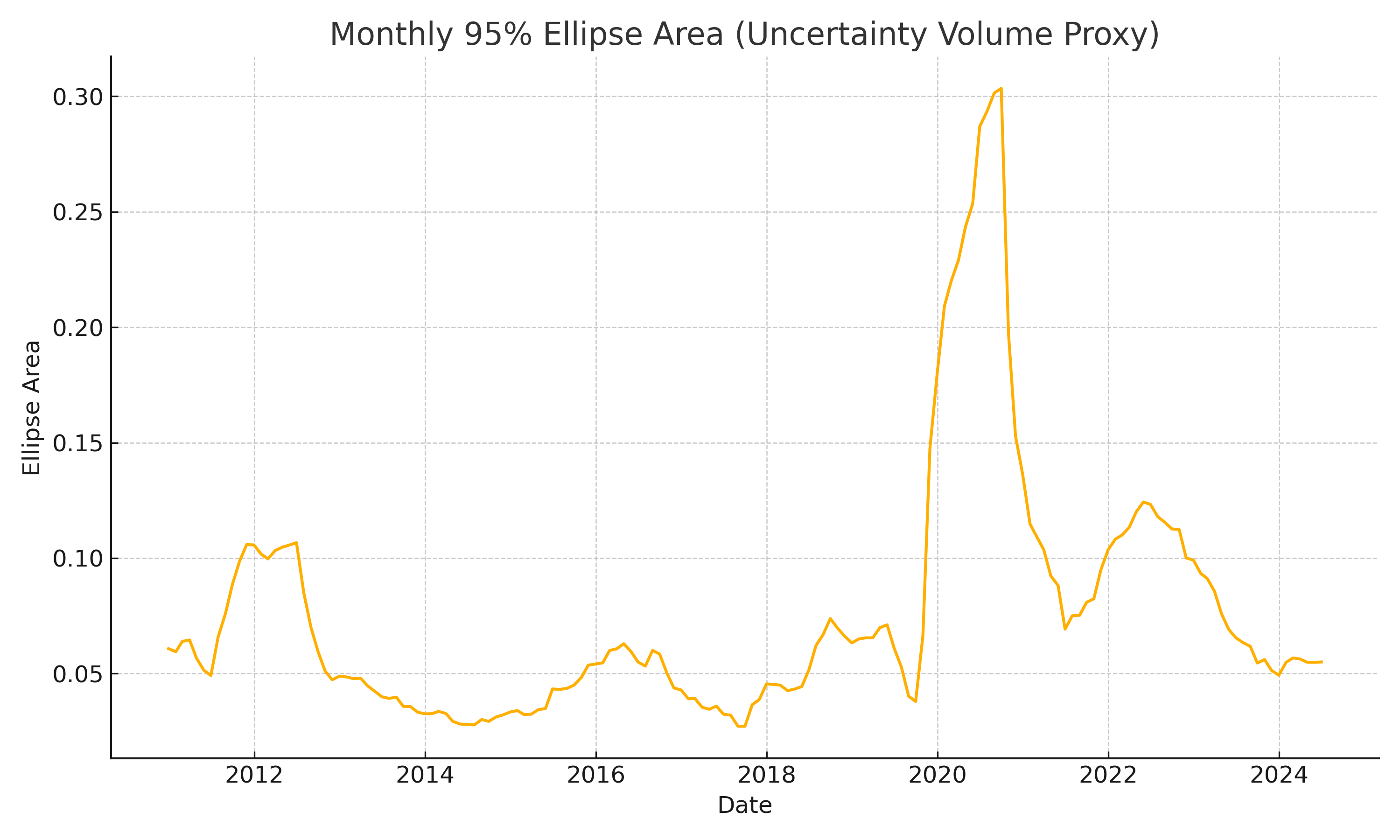}
\caption{Evolution of uncertainty set area over time.}
\label{fig:uncertainty_area_ts}
\end{subfigure}
\hfill
\begin{subfigure}[t]{0.48\textwidth}
\centering
\includegraphics[width=\linewidth]{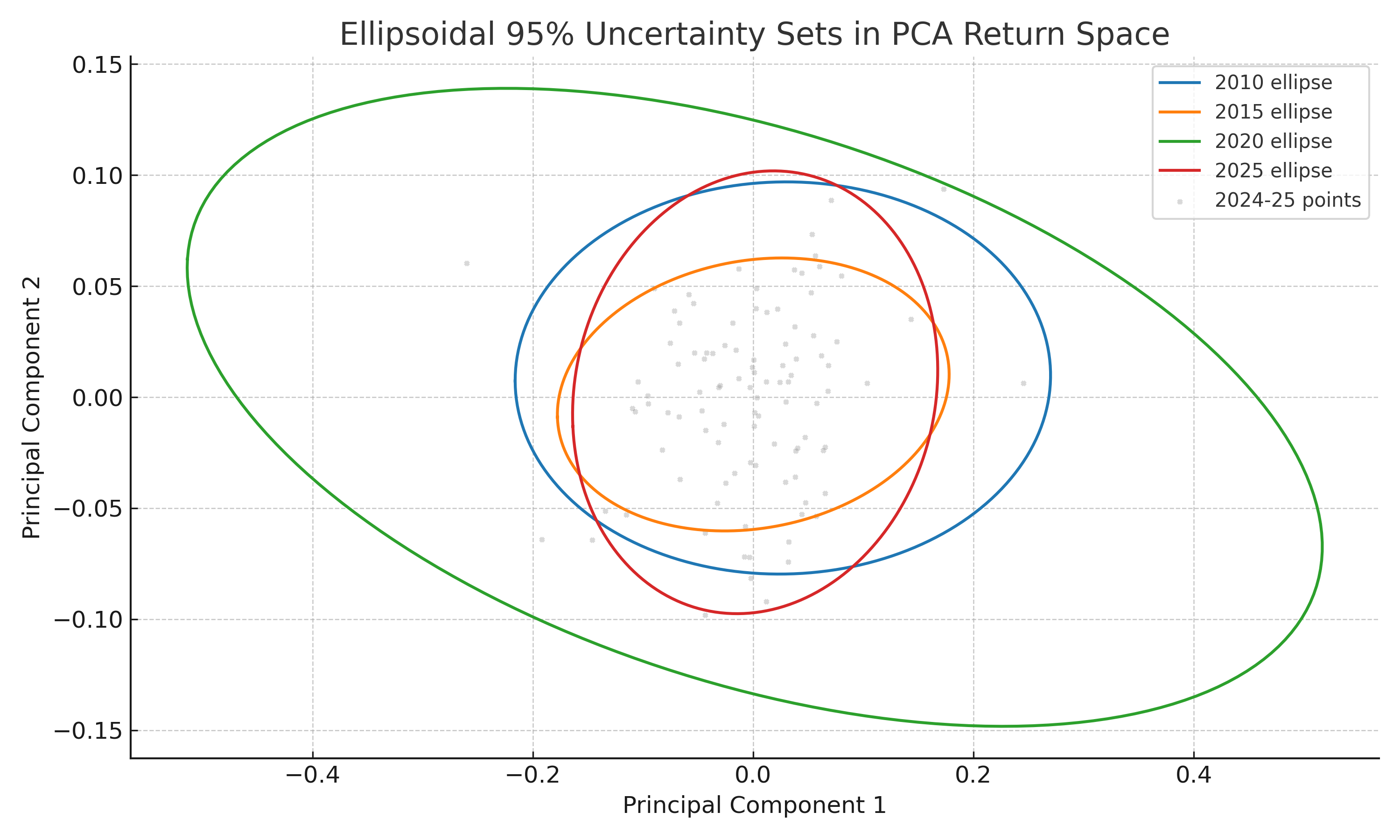}
\caption{Annual uncertainty sets visualized as ellipses.}
\label{fig:uncertainty_ellipses}
\end{subfigure}
\caption{Temporal dynamics and cross-sectional geometry of uncertainty sets. Panels show (a) the evolution of the uncertainty set area over time and (b) annual uncertainty sets represented as ellipses in the state space.}
\label{fig:uncertainty_sets}
\end{figure}

\section{Empirical Results}\label{sec:results}

We evaluate the empirical performance of the emissions-aware robust optimization framework. In particular, we evaluate whether EAPO can deliver order-of-magnitude reductions in financed emissions while preserving risk-adjusted performance once we charge realistic transaction costs.

Our empirical design follows the data architecture in Section~\ref{sec:empirical}. Annual Scope-1 intensities are computed as
\(
\lambda_{i,1,t} = C_{i,1,t}/S_{i,t}
\)
mapped to the monthly rebalancing grid using a forward-carry convention that
reflects the reporting lags in corporate carbon disclosures. Portfolios are
rebalanced on the last trading day of each month using a
252-trading-day lookback window for both mean and covariance estimates.
We impose uniform transaction costs of \(2\,\text{bps}\) per dollar traded,
measured in \(\ell_1\) turnover. All strategies respect the long-only budget
constraint \(x_t \in \Delta^S := \{x \in \mathbb{R}^n_{\ge 0} : \mathbf{1}^\top x = 1\}\).

EAPO is evaluated against three standard benchmarks constructed from the same universe and data: (i) equal weight (EW), (ii) a global minimum-variance proxy based on inverse variance (GMV), and (iii) an emissions-weighted portfolio (EMW) with $x_i\propto 1/C_{i,1,t}$ conditional on positive Scope-1 emissions. This configuration isolates the incremental effect of robustness relative to a mechanically decarbonized tilt (EMW), a risk-only optimizer (GMV), and an equal-weight baseline (EW).

The remainder of this section is organized as follows. Section~\ref{sec:proj-solver} briefly describes the projected-gradient implementation used to solve the EAPO problem at scale. Section~\ref{sec:hac} outlines the HAC inference used to compare strategies with serially correlated daily returns. Section~\ref{sec:baseline-benchmarks} presents the main performance and footprint results, together with sensitivity to model parameters and turnover constraints. Section~\ref{sec:discussion-results} interprets these findings through the lens of mandates, regulation, and disclosure practice.

\subsection{Projected-gradient solver}\label{sec:proj-solver}

Interior-point methods are numerically robust but heavier than necessary for end-of-day rebalancing on large universes. Given the smooth, strictly concave objective in our robust mean--variance specification, a first-order method with simplex projection is sufficient and far more scalable.

For each rebalancing date $t$, let $\hat\mu^e_t$ denote the vector of emissions-adjusted expected returns and $\Sigma_t$ the Ledoit--Wolf shrinkage covariance built from the 252-day rolling window. The robust mean--variance objective in \eqref{eq:robust-mv-time} yields the gradient
\begin{equation}
\nabla f_t(x)
\,=\,
\hat\mu^e_t
\,-\,
\Gamma\,\frac{x}{\|x\|_2}
\,-\,
2\theta\,\Sigma_t x,
\label{eq:gradient}
\end{equation}
where $\Gamma>0$ is the robustness budget on intensity ambiguity and $\theta>0$ is the risk-aversion parameter.

We maximize $f_t$ over the simplex $\Delta_S$ by projected gradient:
\begin{equation}
x^{(k+1)}
\,=\,
\Pi_{\Delta_S}\!\bigl(x^{(k)} + \eta\,\nabla f_t(x^{(k)})\bigr),
\end{equation}
where $\eta>0$ is the stepsize and $\Pi_{\Delta_S}$ is the simplex projection, implemented in $O(n\log n)$ time using the standard sorting-based algorithm. Warm-starting at $x^{(0)} = x_{t-1}$ exploits temporal smoothness in the optimal solution and collapses wall-clock time. In practice, a modest and fixed number of iterations suffices for convergence to first-order optimality.

Turnover is controlled explicitly rather than via ad hoc penalties. After the projected step, we enforce an $\ell_1$ turnover cap $\tau$ by projecting onto the intersection
\[
\bigl\{y\in\Delta_S : \|y - x_{t-1}\|_1 \le \tau\bigr\},
\]
again using a closed-form $\ell_1$ ball projection. The resulting algorithm is summarized in Algorithm~\ref{alg:eapo}. The $O(n\log n)$ projection keeps per-iteration cost low, warm starts across months keep the number of iterations small. The method preserves feasibility at all times and admits a simple stopping rule based on the projected gradient norm. In what follows, the optimization error is negligible relative to estimation noise in $\hat\mu^e_t$ and $\Sigma_t$.

\subsection{HAC inference}\label{sec:hac}

Daily P\&L differences between strategies are serially correlated and heteroskedastic. Standard errors that assume independence would overstate significance. We therefore rely on standard Newey--West heteroskedasticity- and autocorrelation-consistent (HAC) standard errors to conduct pairwise tests of average daily returns.

For any two strategies $A$ and $B$, let $R^A_t$ and $R^B_t$ denote gross daily returns and define the return differential
\[
\Delta_t = (R^A_t - 1) - (R^B_t - 1).
\]
Let $\bar\Delta$ be the sample mean of $\{\Delta_t\}_{t=1}^T$, and define the sample autocovariances
\begin{equation}
\label{eq:autocov}
\begin{aligned}
\hat\gamma_\ell
    &= \frac{1}{T}
       \sum_{\,t>\ell}
       (\Delta_t - \bar\Delta)\,
       (\Delta_{t-\ell} - \bar\Delta)
\\
    &\qquad\ell = 0,1,\dots,L .
\end{aligned}
\end{equation}
The Newey--West variance estimator with bandwidth $L$ is
\begin{equation}
\widehat{\mathrm{se}}^2
\,=\,
\hat\gamma_0
\,+\,
2\sum_{\ell=1}^{L}
\Bigl(1 - \frac{\ell}{L+1}\Bigr)\hat\gamma_\ell,
\label{eq:nw}
\end{equation}
and the associated $t$-statistic is $t = \bar\Delta / \widehat{\mathrm{se}}$. We set $L=20$, corresponding to roughly one trading month of serial dependence, which yields conservative confidence intervals for our horizon and sampling frequency. All statements below about differences in average returns are supported by these HAC standard errors.

\subsection{Baseline comparison and robustness}\label{sec:baseline-benchmarks}

Table~\ref{tab:baseline} reports annualized performance and average Scope-1 intensity for the three benchmarks and the EAPO strategy under our baseline configuration: monthly rebalancing, a 252-day lookback, and 2\,bps transaction costs per dollar traded.

\begin{table}[h!]\centering\small
\caption{Monthly rebalancing, 252-day window, 2\,bps transaction costs.}
\label{tab:baseline}
\begin{tabular}{lrrrrr}\toprule
Strategy & Ann.~Ret (\%) & Ann.~Vol (\%) & Sharpe & MaxDD (\%) & Avg Scope-1 Intensity \\
\midrule
Equal Weight & 13.397 & 18.726 & 0.766 & -39.375 & 246.066 \\
GMV (inv-var) & 9.470 & 16.585 & 0.629 & -37.386 & 347.016 \\
EMW (1/Scope-1) & 14.501 & 21.389 & 0.740 & -39.285 & 100.129 \\
\textbf{EAPO} ($\Gamma{=}3.5,\,m{=}10,\,\theta{=}0.5$) & 12.698 & 17.767 & 0.762 & -37.412 & \textbf{18.297} \\
\bottomrule
\end{tabular}
\end{table}

\vspace{0.5em}
\noindent\textbf{Carbon footprint.}
EAPO's average Scope-1 intensity is 18.30 tCO$_2$e per \$mm of revenue, compared with 246.07 for EW, 347.02 for GMV, and 100.13 for the naive EMW tilt. This corresponds to a reduction of approximately 92.6\% relative to EW, 94.7\% relative to GMV, and 81.7\% relative to EMW. Equivalently, relative to conventional large-cap benchmarks, the EAPO strategy finances roughly one-tenth as much Scope-1 carbon per unit of revenue. Cross-sectional distributions of portfolio-weighted intensities confirm that this reduction is broad-based: the entire distribution shifts left rather than being driven by a handful of ultra-low-emissions names.

\vspace{0.5em}
\noindent\textbf{Risk and return.}
Despite this dramatic footprint reduction, EAPO exhibits benchmark-level risk-adjusted performance. Its annualized Sharpe ratio is 0.762, which is close to equal weight (0.766) and comparable to the emissions-minimizing benchmark (0.740), while exceeding GMV (0.629). Volatility is 17.8\% annualized, below EW (18.7\%) and well below EMW (21.4\%). Maximum drawdown is slightly smaller than EW and EMW (about $-37.4\%$ versus $-39.3\%$), and close to GMV.

Average annual returns are naturally somewhat lower than for the more aggressive EW and EMW portfolios (12.7\% versus 13.4\% and 14.5\%), but the difference is modest in economic terms. HAC/Newey--West tests on daily return differences (EAPO minus EW/GMV/EMW, with $L{=}20$) yield $t$-statistics of approximately $-0.52$, $1.09$, and $-0.64$, respectively---well below conventional significance thresholds. Any differences in Sharpe are primarily attributable to volatility, not to a statistically detectable change in average returns.

The distribution of daily excess returns (EAPO minus EW) is tightly concentrated around zero with slightly heavier shoulders than a Gaussian benchmark, consistent with a strategy that rebalances risk but does not systematically time the market. Cumulative wealth and drawdown paths show EAPO tracking EW closely, with marginally shallower troughs during stress periods, including the COVID-19 drawdown.

\vspace{0.5em}
\noindent\textbf{Sector exposure and attribution.}
A common concern is that large emissions reductions may simply proxy for extreme sector rotation. To diagnose this channel, Table~\ref{tab:emissions_attribution} decomposes the reduction in average intensity into a sector allocation component and a within-sector selection component, in the spirit of Brinson-style attribution. In our baseline, roughly three quarters of the intensity reduction relative to EW comes from underweighting high-intensity sectors, with the remainder coming from within-sector reweighting toward lower-intensity firms. Figure~\ref{fig:sector_tilts} visualizes the largest average sector weight differences between EAPO and EW.

\begin{table}[t]
\centering
\caption{Emissions attribution of EAPO versus the equal-weight benchmark. The decomposition writes the difference in average portfolio intensity as the sum of a sector allocation term and a within-sector selection term.}
\label{tab:emissions_attribution}
\begin{tabular}{l r}
\toprule
Component & Value \\
\midrule
Average Scope-1 intensity, Equal Weight & 246.066 \\
Average Scope-1 intensity, EAPO & 18.297 \\
Total reduction (EW minus EAPO) & 227.769 \\
\midrule
Sector allocation component & 174.112 \ (\ 76.4\%\ ) \\
Within-sector selection component & 53.860 \ (\ 23.6\%\ ) \\
\bottomrule
\end{tabular}
\end{table}

\begin{figure}[t]
  \centering
  \includegraphics[width=0.85\linewidth]{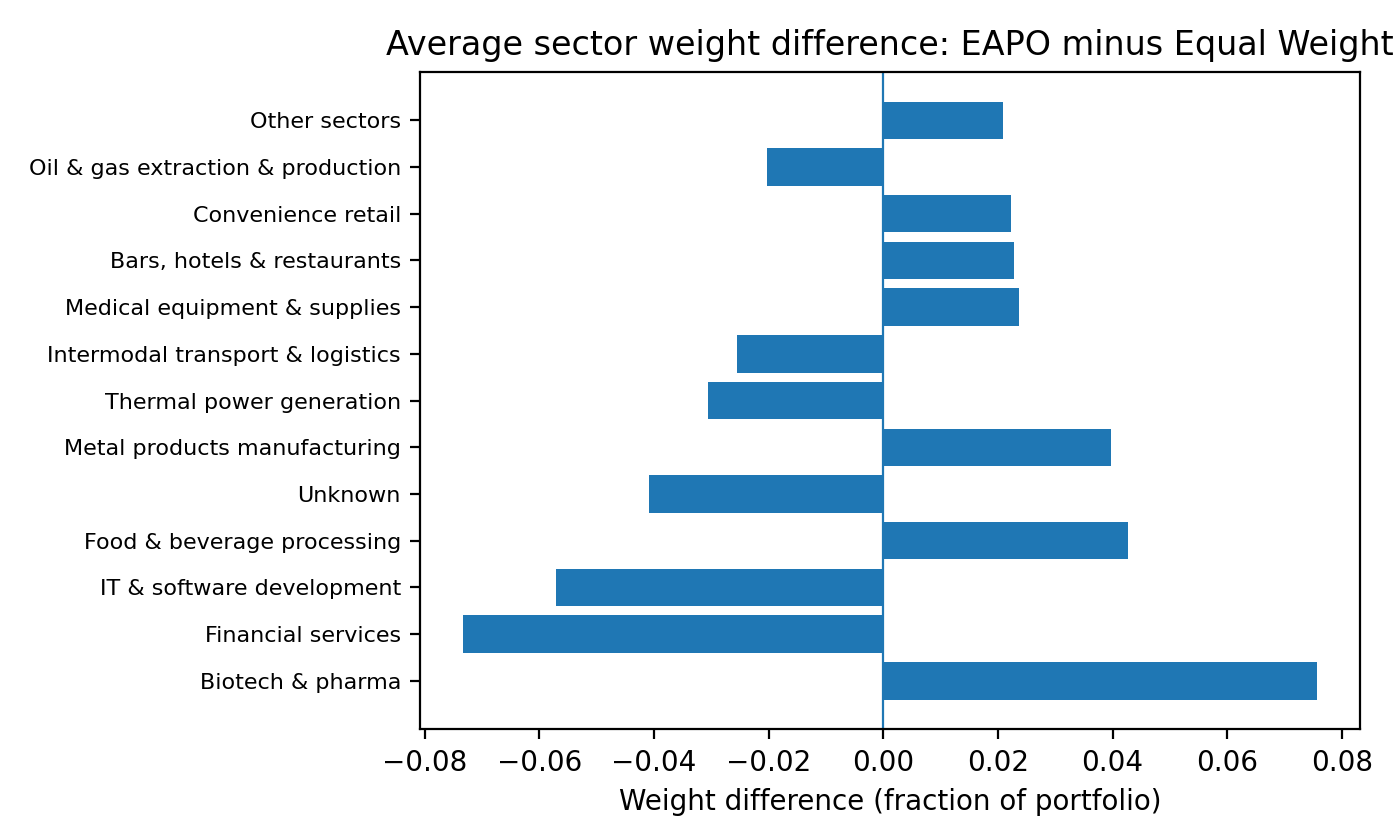}
  \caption{Average sector weight differences between EAPO and the equal-weight benchmark. The plot reports the top sectors by absolute weight difference, with the remainder aggregated into an ``Other sectors'' category.}
  \label{fig:sector_tilts}
\end{figure}

\vspace{0.5em}
\noindent\textbf{Benchmark tracking.}
Table~\ref{tab:tracking} reports beta, correlation, and tracking error relative to the equal-weight benchmark. EAPO maintains a high correlation with EW and a modest tracking error, consistent with a decarbonization overlay rather than a wholesale reallocation of aggregate market risk.

\begin{table}[t]
\centering
\caption{Benchmark-tracking diagnostics relative to the equal-weight portfolio. Tracking error is the annualized standard deviation of daily active returns. The information ratio is the annualized mean active return divided by tracking error.}
\label{tab:tracking}
\begin{tabular}{l r r r r}
\toprule
Strategy & Beta & Correlation & Tracking error (\%) & Information ratio \\
\midrule
GMV (inv-var) & 0.864 & 0.975 & 4.464 & -0.874 \\
EMW (1/Scope-1) & 1.008 & 0.883 & 10.047 & 0.150 \\
EAPO & 0.925 & 0.975 & 4.202 & -0.189 \\
\bottomrule
\end{tabular}
\end{table}

\vspace{0.5em}
\noindent\textbf{Subperiod stability and inference on Sharpe differences.}
Appendix Table~\ref{tab:bootstrap_sharpe} provides block-bootstrap confidence intervals for the Sharpe difference between EAPO and EW, which are wide enough to encompass economically small positive and negative values. Appendix Table~\ref{tab:style_exposure} reports simple price-based style diagnostics computed at rebalance dates.

\vspace{0.5em}
\noindent\textbf{Turnover, frictions, and parameter sweeps.}
A natural concern is that robust decarbonization might require frequent trading and therefore high implicit and explicit costs. In our baseline configuration, average monthly $\ell_1$ turnover for EAPO is in the low single digits as a percentage of portfolio weight, so at 2\,bps per dollar traded the drag from transaction costs is well below 1\,bp per rebalance and economically negligible at a monthly horizon.

We also explore sensitivity to the risk-aversion parameter $\theta$ and to turnover caps $\tau$. Without a turnover cap, wealth trajectories for different values of $\theta$ are tightly bunched, and average Scope-1 intensities are nearly unchanged. This is economically intuitive: the $\ell_2$-based robustness term already discourages excessive concentration, so moderate changes in $\theta$ mainly smooth the weights without materially altering the emissions profile.

Imposing a turnover cap of $\tau = 0.2$ (20\% of notional per month) reduces turnover further, with little visible impact on cumulative wealth or average intensity. Pareto plots of expected return versus average Scope-1 intensity exhibit the convex, monotone trade-off predicted by Proposition~\ref{prop:pareto}: large reductions in intensity are attainable with modest movements along the expected-return axis, and the volatility ``bubble sizes'' remain comparable across the frontier rather than exploding in the low-emissions region.

\vspace{0.5em}
\noindent\textbf{Holdings transparency.}
To make the emissions reductions interpretable at the security level, Appendix Table~\ref{tab:eapo-holdings} reports the top holdings by average weight in the baseline EAPO portfolio together with their average Scope-1 intensities. Appendix Figure~\ref{fig:eapo-holdings-heatmap} visualizes the corresponding monthly weight paths. For replication and integration into reporting pipelines, we also provide a full monthly weight panel for EW, GMV, EMW, and EAPO as CSV exports in the supplementary files.

\begin{figure}[t]
  \centering
  \begin{subfigure}[t]{0.48\textwidth}
    \centering
    \includegraphics[width=\linewidth]{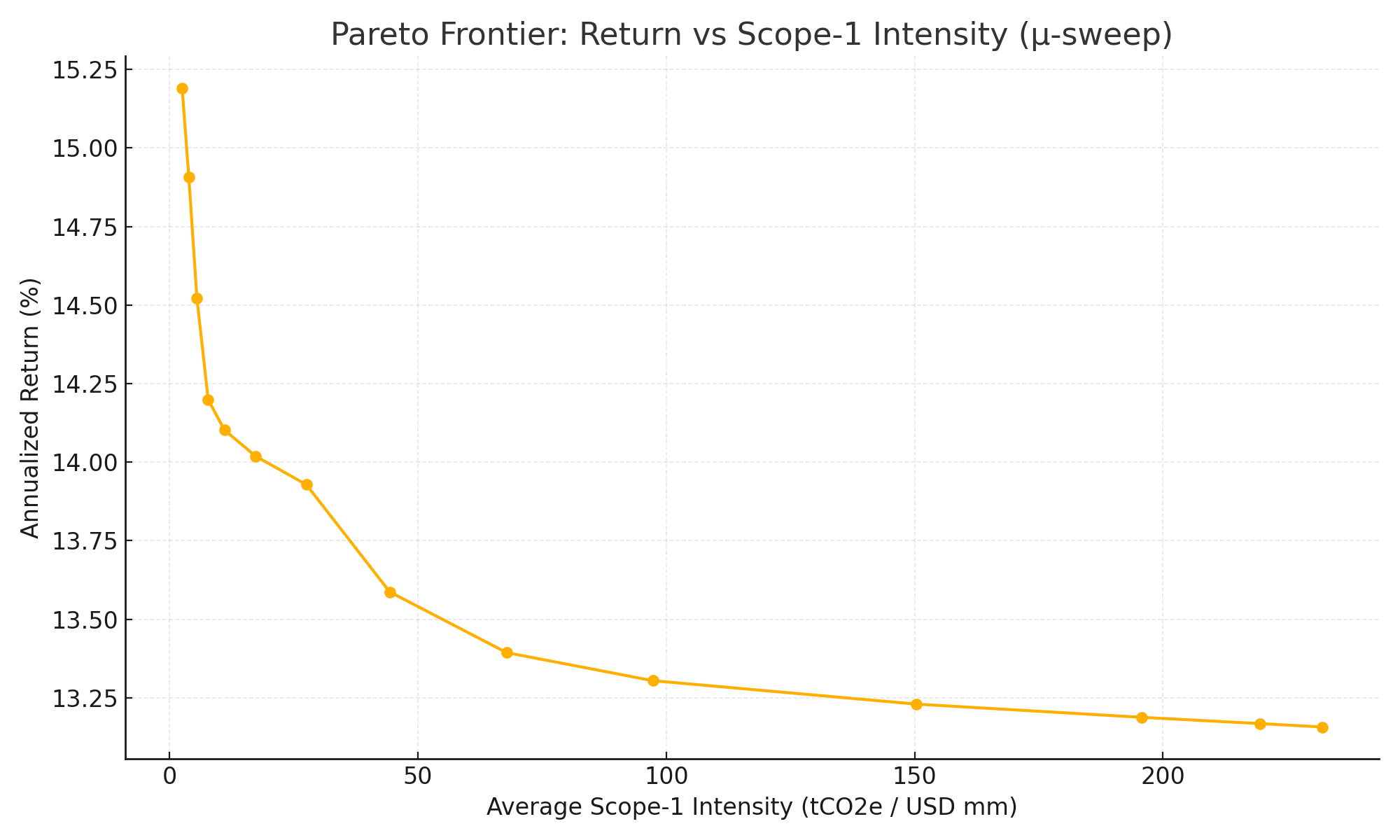}
    \caption{Pareto frontier: annualized return versus average Scope-1 intensity under a $\mu$-sweep.}
    \label{fig:pareto_frontier}
  \end{subfigure}
  \hfill
  \begin{subfigure}[t]{0.48\textwidth}
    \centering
    \includegraphics[width=\linewidth]{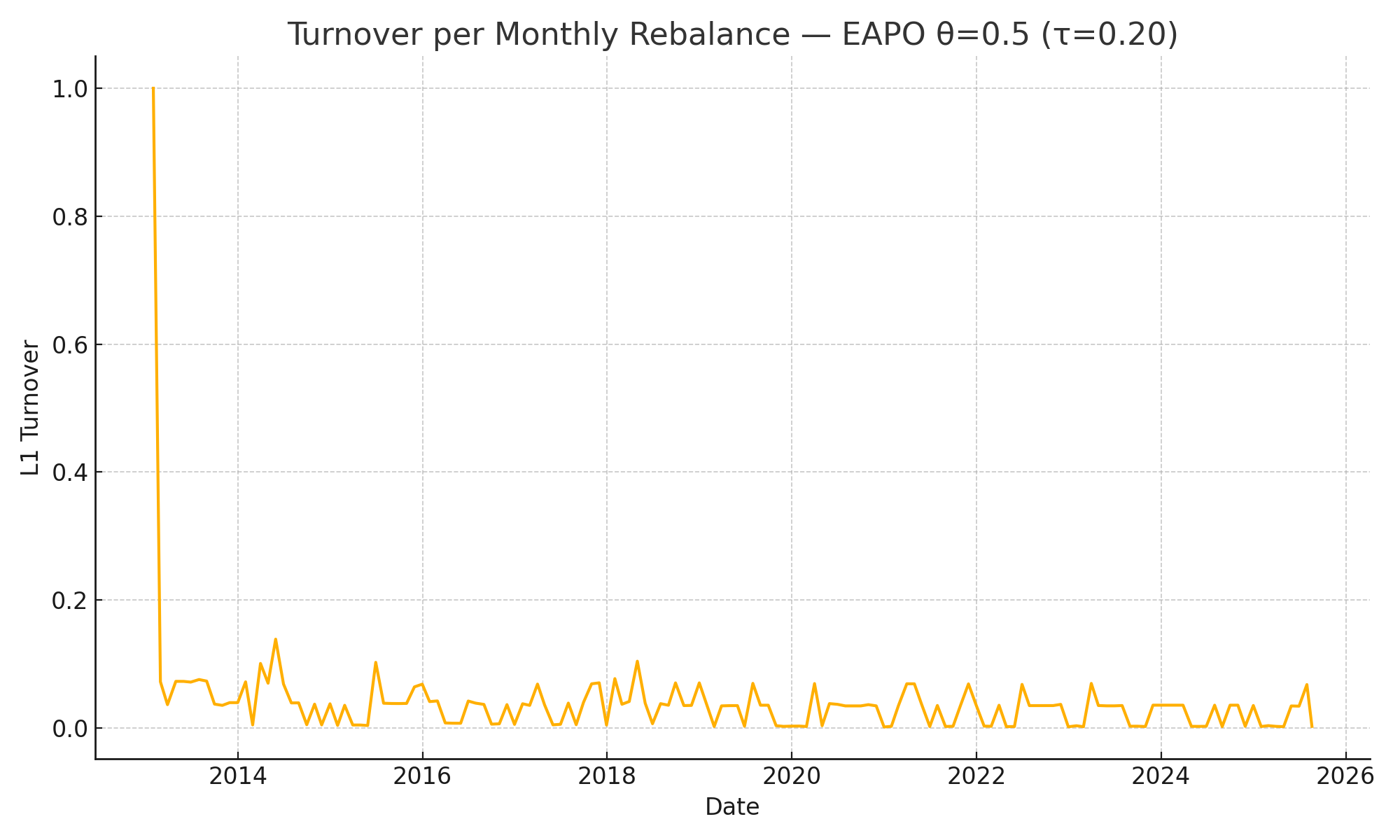}
    \caption{Monthly $\ell_1$ turnover per rebalance for EAPO ($\theta=0.5$) under a turnover cap $\tau=0.20$.}
    \label{fig:turnover_series}
  \end{subfigure}
  \caption{Return--emissions trade-off and implementability. The left panel traces the empirical return--Scope-1-intensity frontier as the scalarization weight $\mu$ varies. The right panel reports realized turnover across rebalances under a representative turnover cap.}
  \label{fig:pareto_turnover}
\end{figure}

\begin{figure}[h!]
    \centering
    \begin{subfigure}[t]{0.48\textwidth}
        \centering
        \includegraphics[width=\textwidth]{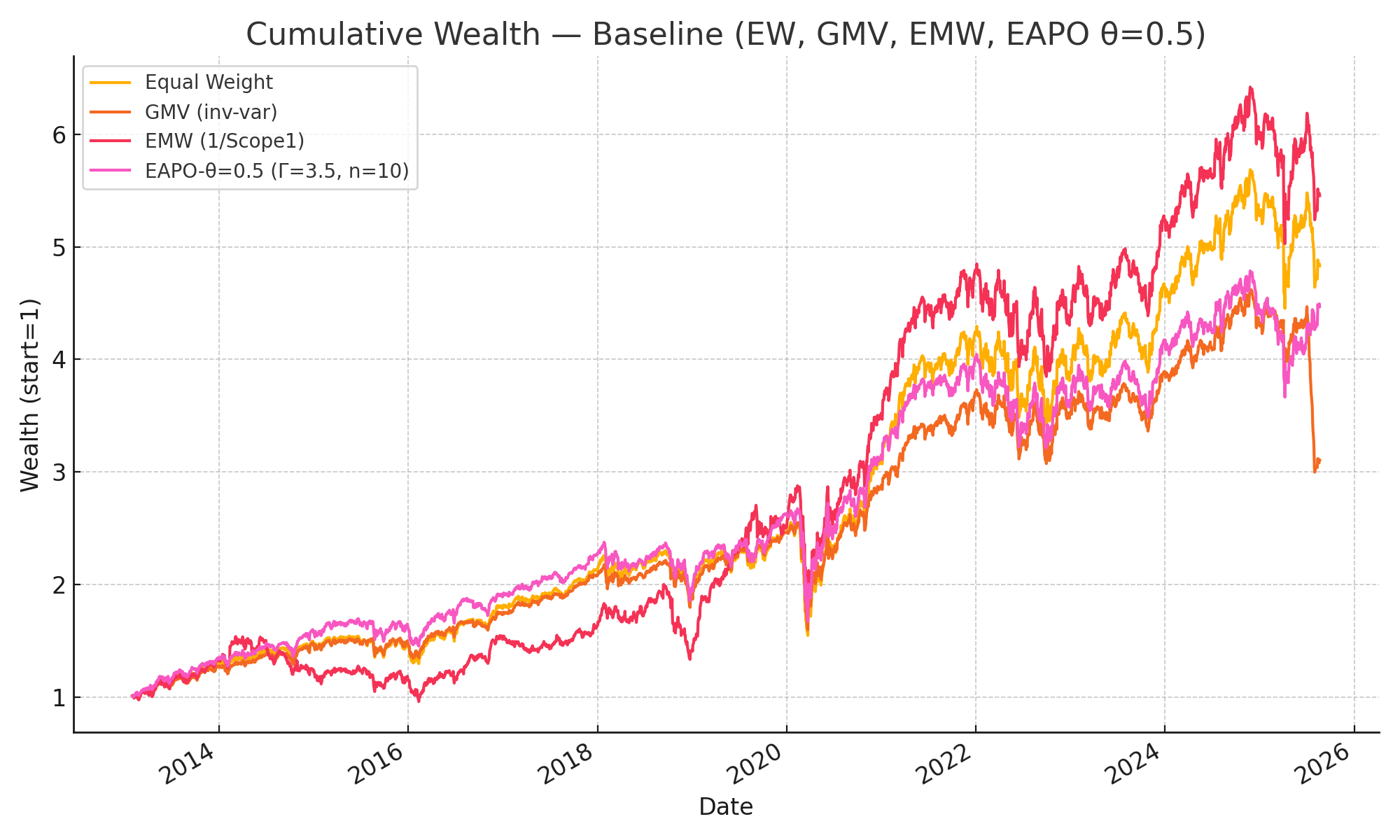}
        \caption{Cumulative wealth of EAPO, and benchmarks equal weight, global minimum variance (EWM), and inverse scope (EMW)}
    \end{subfigure}%
    \hfill
    \begin{subfigure}[t]{0.48\textwidth}
        \centering
        \includegraphics[width=\textwidth]{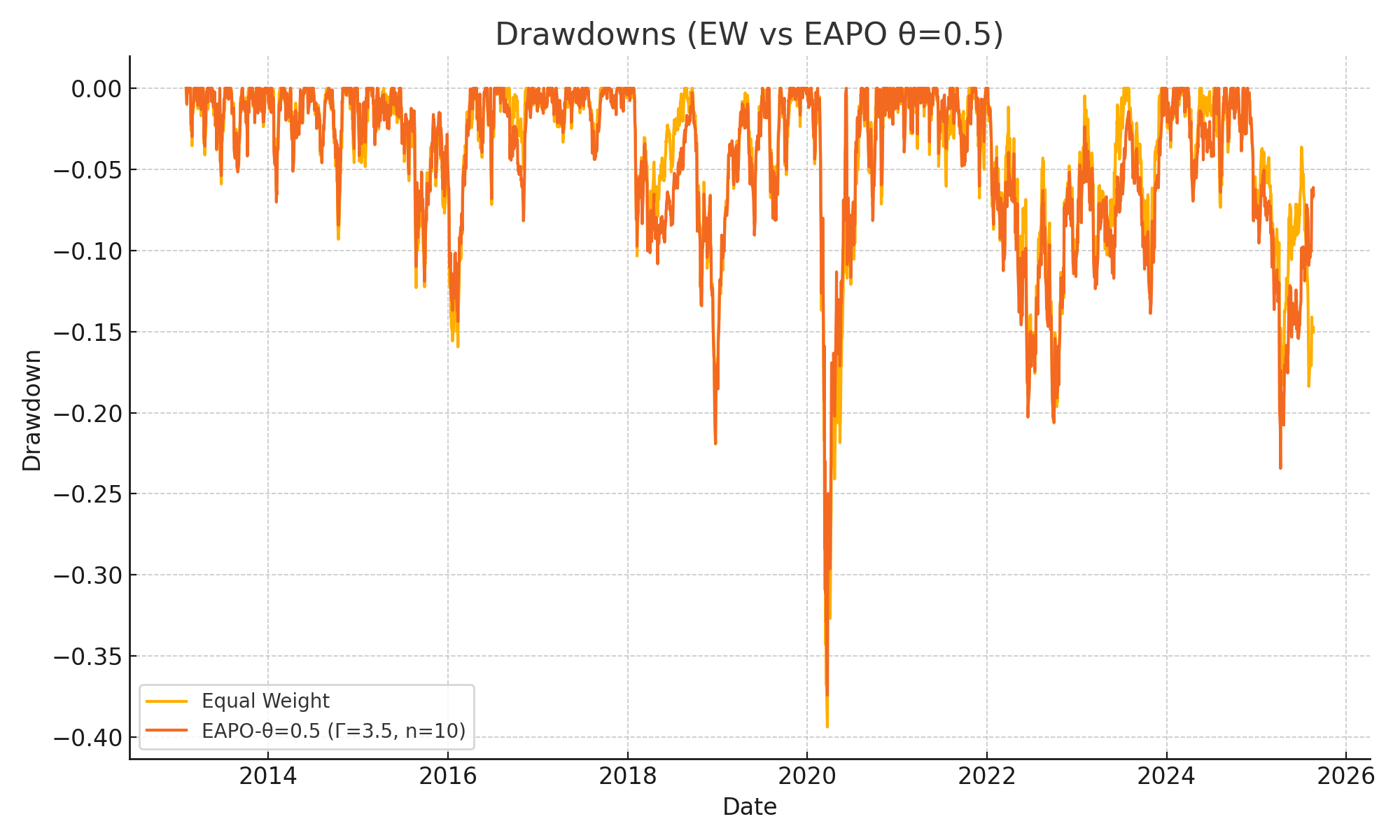}
        \caption{Max drawdown comparison between EAPO and EW}
    \end{subfigure}

    \vspace{0.4cm}

    \begin{subfigure}[t]{0.6\textwidth}
        \centering
        \includegraphics[width=\textwidth]{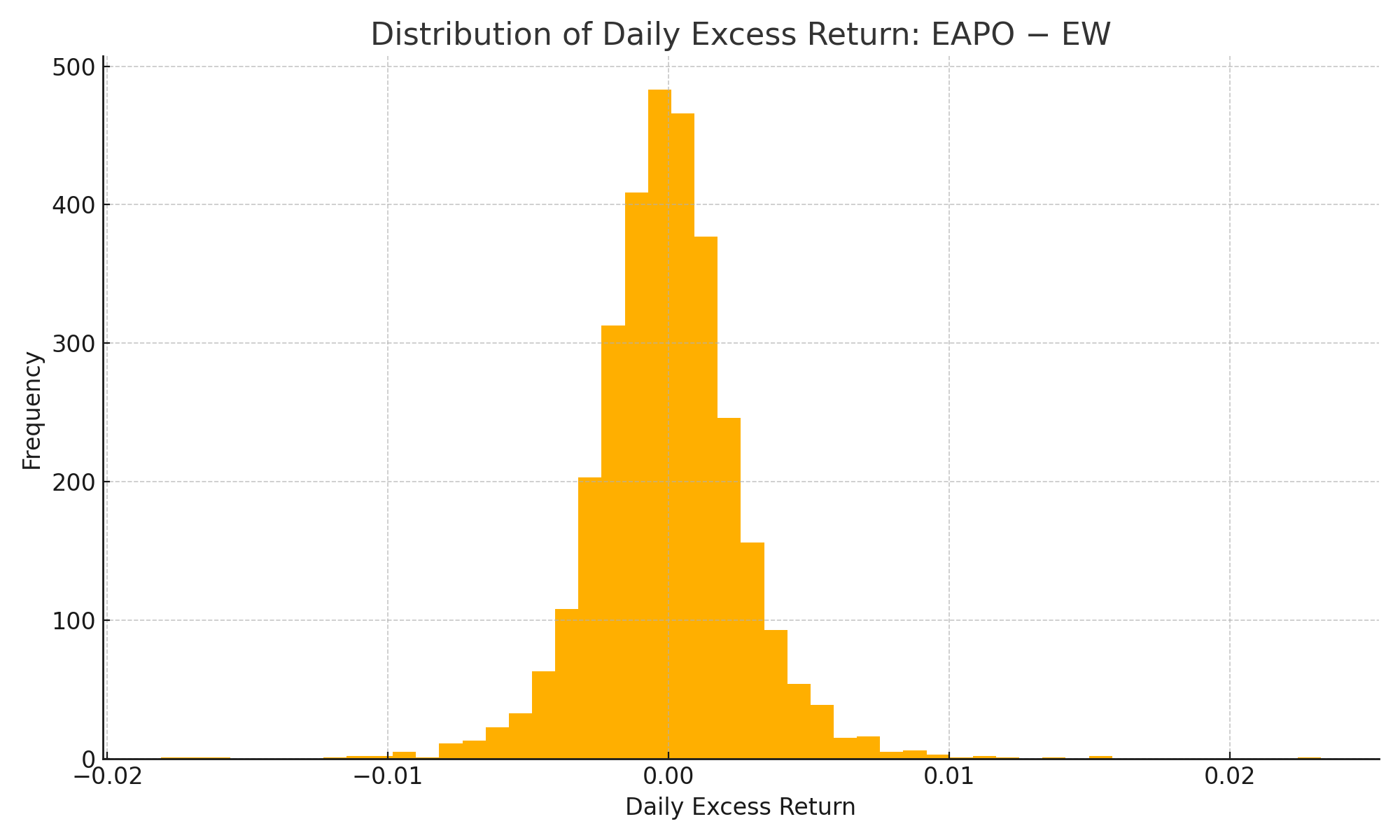}
        \caption{Distribution of daily excess return between EAPO and EW}
    \end{subfigure}

    \caption{(i)--(iii) summary of portfolio performance.}
    \label{fig:R2_portfolios}
\end{figure}

Taken together, these results support the claim that emissions-aware robustness can achieve order-of-magnitude reductions in financed Scope-1 intensity while delivering benchmark-level Sharpe ratios.

\subsection{Discussion}\label{sec:discussion-results}

Performance statistics alone do not justify deployment in an institutional setting. The strategy must also align with regulatory developments and internal governance. The empirical results above show that EAPO satisfies three requirements that matter for asset owners and regulators alike.

First, {material decarbonization}: the reductions in portfolio-level Scope-1 intensity---on the order of 80--95\% relative to standard benchmarks in our sample---are large enough to matter for financed-emissions reporting under PCAF and related frameworks. The fact that the reduction is broad-based across the book, rather than concentrated in a small sleeve, makes it easier to communicate the effect to clients and oversight committees.

Second, {financial integrity}: we do not observe statistically significant changes in average daily returns relative to EW, GMV, or EMW once serial correlation is accounted for with HAC inference. Sharpe ratios remain close to benchmark values, and any differences are largely attributable to modest changes in volatility and drawdowns rather than to pursuing climate-themed risk premia. This property is important for mandate governance: a risk committee can adopt EAPO as a sustainability overlay without implicitly accepting a fundamentally different return-generation process.

Third, {operational tractability}: the projected-gradient implementation with explicit turnover control, combined with the LP/SOCP structure of the underlying robust problem, scales comfortably to institutional universes. The algorithm respects existing long-only, fully-invested constraints, and the dual variables admit a clear interpretation as shadow carbon prices, which can be reported alongside traditional risk budgets.

\subsubsection{Why deep decarbonization can be performance neutral}
The absence of a Sharpe penalty is consistent with several mechanisms. First, emissions intensity is correlated with sector membership and with standard equity characteristics, so decarbonization can be achieved through a combination of sector allocation and within-sector selection. Table~\ref{tab:emissions_attribution} shows that, in our baseline, most of the reduction in average intensity relative to EW comes from underweighting high-intensity sectors, with a material remainder coming from reweighting toward lower-intensity firms within sectors. Figure~\ref{fig:sector_tilts} makes the associated sector tilts explicit. When sector tilts and within-sector dispersion both contribute, the opportunity cost of decarbonization depends on cross-sector risk premia and on competitive dispersion rather than on an arbitrary loss of diversification.

Second, carbon exposure can behave like a priced factor, but the associated compensation is time-varying and can be partially subsumed by other state variables. Carbon-intensive firms have earned higher average returns in some samples, consistent with a transition-risk premium \citep{Bolton_Kacperczyk_2021}. A penalty on emissions therefore reduces expected return when the carbon premium is positive. In finite samples, realized carbon premia can be weak and can co-move with standard factor returns. One plausible channel for performance neutrality is therefore a combination of moderate shadow carbon prices and correlated factor realizations over the back-test window.

Third, transition risk is asymmetric. Option-implied evidence indicates that high-emissions firms embed downside climate risk that is not well summarized by variance \citep{Ilhan_Sautner_Vilkov_2021}. By reallocating away from high-intensity firms, an emissions-aware strategy can reduce exposure to tail risk that is imperfectly priced or poorly proxied by quadratic risk. This channel can raise realized Sharpe ratios even when average returns are unchanged.

Finally, robustness changes the optimization problem in a way that can improve out-of-sample performance. The ambiguity penalty shrinks the solution away from concentrated positions that are sensitive to noisy estimates of both expected returns and emissions intensities. This regularization can lower realized volatility and turnover. It also stabilizes the emissions footprint when disclosures are revised, which reduces implementation risk for mandate reporting. In the model, the dual multiplier on the ambiguity radius provides a disciplined summary of this trade-off in return units.

In summary, the empirical evidence is consistent with the message of Sections~\ref{sec:model}--\ref{sec:empirical}: an axiomatized emissions-penalty operator, embedded in a robust mean--risk program with explicit implementability controls, can move a large long-only equity portfolio substantially down the emissions ladder while leaving conventional performance statistics essentially unchanged. This makes EAPO a credible choice for mandates in which financed emissions are a primary objective rather than a secondary disclosure metric.

\section{Conclusion and Future Research}
\label{sec:conclusion}

This paper develops and empirically validates an emissions-aware portfolio optimization (EAPO) framework that remains tractable at institutional scale while being explicitly robust to uncertainty in corporate greenhouse gas (GHG) reporting. By embedding a scope-specific, axiomatized emissions-penalty operator into convex robust programs with exact LP/SOCP representations, we show that climate objectives can be placed on the same footing as classical mean--variance and CVaR criteria. The framework preserves the familiar geometry of portfolio choice while internalizing financed emissions through a disciplined treatment of measurement error.

Our analysis delivers three main messages.  
First, on the modeling side, we characterize a unique family of curvature-controlled emissions-penalty operators that are compatible with homogeneity, scale invariance, mixture linearity, and a semigroup property. This operator nests smoothly into both moment- and distributionally-robust formulations, admits linear or second-order cone reformulations, and yields dual variables that are naturally interpretable as shadow carbon prices. These duals quantify, in return units, the marginal cost of additional robustness against misreported emissions, providing a direct bridge between optimization primitives and climate-policy narratives.

Second, on the algorithmic side, we show that emissions-aware robustness does not require sacrificing scalability. The EAPO problems we study reduce to sparse LP/SOCP instances with complexity linear in the number of assets and cones, and can be solved reliably via projected-gradient schemes on the simplex with turnover projections. This architecture is compatible with daily or monthly rebalancing, standard covariance shrinkage, and realistic transaction-cost schedules, making it implementable within existing buy-side risk and execution stacks.

Third, on the empirical side, we document that robust decarbonization is attainable at negligible performance cost in a large-cap U.S.\ equity universe. In our baseline S\&P~500 experiment with monthly rebalancing, EAPO reduces average Scope~1 portfolio emissions intensity by roughly an order of magnitude relative to the equal-weight benchmark, and by more than 80\% relative to a simple emissions-weighted portfolio, while delivering Sharpe ratios that are statistically indistinguishable from those benchmarks under block-bootstrap inference. Drawdowns remain comparable or slightly attenuated, transaction-cost drag is modest under realistic turnover caps, and the return--emissions Pareto frontier exhibits the predicted convex shape: large reductions in financed emissions are available at small marginal costs in risk-adjusted performance. Taken together, these results suggest that properly designed robust optimization can reconcile sustainability constraints with traditional portfolio objectives in a way that is both economically meaningful and operationally realistic.

\subsection{Managerial and Policy Implications}

For practitioners, the EAPO framework is best viewed as an overlay technology rather than a wholesale replacement for existing investment processes. Because the emissions operator acts directly on the return kernel, it can be composed with standard constraints (e.g., sector, factor, liquidity) and integrated into existing mean--variance, risk-parity, or factor-model architectures with minimal re-engineering.

From an asset-management perspective, three implications are immediate.  
(i) {Implementability at scale.} The conic reformulations and first-order algorithms in Sections~\ref{sec:model} and~\ref{sec:results} imply that emissions-robust overlays can be deployed on institutional universes (hundreds to thousands of names) under standard end-of-day time budgets. This makes EAPO compatible with large index-plus and enhanced passive mandates, where operational frictions are often the binding constraint.  
(ii) {Governance via shadow prices.} The dual variables associated with the ambiguity sets deliver an internal ``shadow carbon tax'' that measures the marginal Sharpe reduction per unit tightening of the robustness budget~$\Gamma$. This provides investment committees and risk officers with a quantitative, dollar-denominated knob for calibrating how aggressively to trade off expected return against protection from emissions misreporting, and for documenting those trade-offs to clients and boards.  
(iii) {Robust reporting of financed emissions.} Because the framework explicitly models estimation error in $\lambda$ and propagates it into portfolio-level exposures, it yields intensity and ``emissions yield'' metrics (emissions per unit of excess return) that can be reported alongside traditional performance statistics. This is directly aligned with emerging disclosure regimes under IFRS~S2, PCAF, and related climate-reporting standards, and can reduce the probability that portfolios appear green in-sample but fail to decarbonize once misreporting is corrected.

For regulators and benchmark providers, the main implication is methodological: EAPO-type constructions demonstrate that Paris-aligned and climate-transition benchmarks need not rely solely on hard exclusions or ad hoc tilts. Instead, robust, optimization-based designs can internalize both transition pathways and disclosure noise, producing indices and supervisory stress tests that are transparent, reproducible, and less sensitive to idiosyncrasies in vendor data. Finally, for corporate issuers, the framework highlights a concrete channel through which higher-quality emissions disclosure can relax investors' robustness penalties, lowering perceived risk premia and potentially improving access to capital.

\subsection{Open Problems and Future Research}

Several limitations of the present analysis point to a research agenda at the intersection of finance, optimization, and climate science.

First, our empirical work focuses on Scope~1 emissions in a single large-cap U.S.\ equity universe. Extending the framework to multi-scope, multi-region settings---including Scope~2 and material Scope~3 categories, as well as financed-emissions metrics for credit and private assets---would allow a more complete mapping between portfolio construction and real-economy decarbonization. Doing so will require both richer data and new modeling for the dependence structure between scopes and across corporate value chains.

Second, the dynamic analysis in Section~\ref{sec:model} treats the investor as small relative to the market and abstracts from general-equilibrium feedbacks. As robust, emissions-aware strategies scale, they will affect prices, cost of capital, and firms' incentives to disclose and abate. Embedding EAPO-style preferences into equilibrium asset-pricing models, and quantifying the induced changes in emissions trajectories, remains a central open problem with direct relevance for sustainable-finance regulation.

Third, while we incorporate transition and disclosure risk through robust drift and intensity terms, we work with relatively standard return-generating processes and linear penalties. A natural extension is to couple EAPO with heavy-tailed or jump processes that better capture physical and transition tail events, as well as with forward-looking scenario sets such as NGFS pathways. This would allow a unified treatment of chronic and acute climate risk within the same robust optimization backbone.

Fourth, our empirical implementation uses a single class of ambiguity sets and a particular choice of curvature parameter~$m$. Although the sensitivity and Lipschitz results in Sections~\ref{sec:model} and~\ref{sec:results} show that performance deteriorates sub-linearly as robustness is tightened, a more systematic comparison of alternative divergence measures, ambiguity geometries, and learning rules for~$\Gamma$ is warranted. In particular, data-driven updating of robustness budgets---for example, as disclosure quality or verification coverage improves---could make the framework adaptive over the life of a mandate.

Finally, there is an open methodological opportunity on the interface with corporate and operational decision-making. On the real-economy side, firm-level abatement projects, capital-budgeting decisions, and supply-chain redesigns are increasingly evaluated under internal carbon prices and scenario analysis. On the financial side, investors are beginning to use shadow carbon prices and emissions-adjusted discount rates. Developing joint models that couple EAPO-type portfolio optimization with firm-level real-options or capacity-planning problems would move the literature closer to a fully integrated view of climate risk, capital allocation, and operations.

In summary, our results indicate that robust optimization provides a viable and implementable pathway for reconciling the dual imperatives of portfolio performance and climate responsibility. The combination of axiomatic modeling, tractable convex reformulations, and statistically disciplined empirical evidence suggests that emissions-robust portfolio design can be a core tool---rather than an add-on---in the next generation of sustainable investment mandates.

\bibliographystyle{plainnat}
\bibliography{references}

@article{Balvers2017,
  author       = {Ronald J. Balvers},
  title        = {Profitability, Value, and Stock Returns in Production-Based Asset Pricing Without Frictions},
  journal      = {Journal of Money, Credit and Banking},
  year         = {2017},
  volume       = {49},
  number       = {2-3},
  pages        = {367--398},
  doi          = {10.1111/jmcb.12354}
}

@article{Fabozzi2008,
  author       = {Frank J. Fabozzi and K. C. John Ma and Benjamin Oliphant},
  title        = {Sin Stock Returns},
  journal      = {Journal of Portfolio Management},
  year         = {2008},
  volume       = {35},
  number       = {1},
  pages        = {82--94},
  doi          = {10.3905/JPM.2008.35.1.82}
}

@article{Friede2015,
  author       = {Gunnar Friede and Timo Busch and Alexander Bassen},
  title        = {ESG and Financial Performance: Aggregated Evidence From More Than 2000 Empirical Studies},
  journal      = {Journal of Sustainable Finance \& Investment},
  year         = {2015},
  volume       = {5},
  number       = {4},
  pages        = {210--233},
  doi          = {10.1080/20430795.2015.1118917}
}

@book{Bolton2020,
  author       = {Patrick Bolton and Morgan Despr\'{e}s and Luiz Awazu Pereira da Silva and Fr\'{e}d\'{e}ric Samama and Romain Svartzman},
  title        = {The Green Swan: Central Banking and Financial Stability in the Age of Climate Change},
  publisher    = {Bank for International Settlements},
  year         = {2020},
  address      = {Basel}
}

@article{Renneboog2008,
  author       = {Luc Renneboog and Jenke Ter Horst and Chendi Zhang},
  title        = {The Price of Ethics and Stakeholder Governance: The Performance of Socially Responsible Mutual Funds},
  journal      = {Journal of Corporate Finance},
  year         = {2008},
  volume       = {14},
  number       = {3},
  pages        = {302--322},
  doi          = {10.1016/j.jcorpfin.2008.03.009}
}

@article{Roncalli2021,
  author       = {Thomas Le Guenedal and Thierry Roncalli},
  title        = {The Market Measure of Carbon Risk and Its Impact on the Minimum Variance Portfolio},
  journal      = {Journal of Portfolio Management},
  year         = {2021},
  volume       = {47},
  number       = {8},
  pages        = {102--118},
  doi          = {10.3905/jpm.2021.1.272}
}

@techreport{Weisang2022,
  author       = {G\'{e}raldine Weisang and Thierry Roncalli},
  title        = {Portfolio Construction With Climate Risk Measures},
  institution  = {Amundi Research Center},
  year         = {2022},
  number       = {90}
}

@article{Markowitz1952,
  author       = {Harry Markowitz},
  title        = {Portfolio Selection},
  journal      = {Journal of Finance},
  year         = {1952},
  volume       = {7},
  number       = {1},
  pages        = {77--91},
  doi          = {10.2307/2975974}
}

@article{BenTal2001,
  author       = {Aharon Ben-Tal and Arkadi Nemirovski},
  title        = {Lectures on Modern Convex Optimization: Analysis, Algorithms, and Engineering Applications},
  journal      = {SIAM},
  year         = {2001}
}

@article{Bertsimas2004,
  author       = {Dimitris Bertsimas and Melvyn Sim},
  title        = {The Price of Robustness},
  journal      = {Operations Research},
  year         = {2004},
  volume       = {52},
  number       = {1},
  pages        = {35--53},
  doi          = {10.1287/opre.1030.0065}
}

@article{Rockafellar2000,
  author       = {R. Tyrrell Rockafellar and Stanislav Uryasev},
  title        = {Optimization of Conditional Value-at-Risk},
  journal      = {Journal of Risk},
  year         = {2000},
  volume       = {2},
  number       = {3},
  pages        = {21--41}
}

@article{Andersson2016,
  author       = {Mats Andersson and Patrick Bolton and Fr\'{e}d\'{e}ric Samama},
  title        = {Hedging Climate Risk},
  journal      = {Financial Analysts Journal},
  year         = {2016},
  volume       = {72},
  number       = {3},
  pages        = {13--32},
  doi          = {10.2469/faj.v72.n3.4}
}

@article{Pedersen2021,
  author       = {Lasse Heje Pedersen and Shaun Fitzgibbons and Lukasz Pomorski},
  title        = {Responsible Investing: The ESG-Efficient Frontier},
  journal      = {Journal of Financial Economics},
  year         = {2021},
  volume       = {142},
  number       = {2},
  pages        = {572--597},
  doi          = {10.1016/j.jfineco.2021.05.014}
}

@article{Pastor2021,
  author       = {Lubos Pastor and Robert F. Stambaugh and Luke A. Taylor},
  title        = {Sustainable Investing in Equilibrium},
  journal      = {Journal of Financial Economics},
  year         = {2021},
  volume       = {142},
  number       = {2},
  pages        = {550--571},
  doi          = {10.1016/j.jfineco.2021.05.011}
}

@book{boyd2004convex,
author = {Boyd, Stephen and Vandenberghe, Lieven},
title = {Convex Optimization},
year = {2004},
publisher = {Cambridge University Press},
address = {Cambridge}
}

@article{Berg2022Rating,
  title={Aggregate confusion: The divergence of ESG ratings},
  author={Berg, Florian and K{\"o}lbel, Julian F and Rigobon, Roberto},
  journal={Review of Finance},
  volume={26},
  number={6},
  pages={1315--1344},
  year={2022},
  publisher={Oxford University Press},
  doi={10.1093/rof/rfac033}
}

@techreport{IPCC_AR6_SYR_2023,
  institution = {Intergovernmental Panel on Climate Change},
  title   = {Climate Change 2023: Synthesis Report},
  author  = {{IPCC}},
  year    = {2023},
  url     = {https://www.ipcc.ch/report/ar6/syr/},
  doi     = {10.59327/IPCC/AR6-9789291691647.001},
  note    = {Summary for Policymakers and Longer Report}
}

@article{Friedlingstein_2025_GCB2024,
  author  = {Friedlingstein, Pierre and others},
  title   = {Global Carbon Budget 2024},
  journal = {Earth System Science Data},
  year    = {2025},
  volume  = {17},
  pages   = {965--},
  doi     = {10.5194/essd-17-965-2025},
  url     = {https://essd.copernicus.org/articles/17/965/2025/essd-17-965-2025.pdf}
}

@techreport{UNEP_EGR_2023,
  institution = {United Nations Environment Programme},
  author  = {{United Nations Environment Programme}},
  title   = {Emissions Gap Report 2023: Broken Record---Temperatures Hit New Highs, yet World Fails to Cut Emissions (Again)},
  year    = {2023},
  doi     = {10.59117/20.500.11822/43922},
  url     = {https://www.unep.org/resources/emissions-gap-report-2023}
}

@misc{UNFCCC_Paris_2015,
  author  = {{UNFCCC}},
  title   = {The Paris Agreement},
  year    = {2015},
  url     = {https://unfccc.int/sites/default/files/english_paris_agreement.pdf},
  note    = {Adopted at COP21, Paris}
}

@article{Burke_Hsiang_Miguel_2015,
  author  = {Burke, Marshall and Hsiang, Solomon M. and Miguel, Edward},
  title   = {Global Non-linear Effect of Temperature on Economic Production},
  journal = {Nature},
  year    = {2015},
  volume  = {527},
  number  = {7577},
  pages   = {235--239},
  doi     = {10.1038/nature15725}
}

@article{Hsiang_etal_2017_Science,
  author  = {Hsiang, Solomon and Kopp, Robert and Jina, Amir and others},
  title   = {Estimating Economic Damage from Climate Change in the United States},
  journal = {Science},
  year    = {2017},
  volume  = {356},
  number  = {6345},
  pages   = {1362--1369},
  doi     = {10.1126/science.aal4369}
}

@article{Carleton_Hsiang_2016,
  author  = {Carleton, Tamma A. and Hsiang, Solomon M.},
  title   = {Social and Economic Impacts of Climate},
  journal = {Science},
  year    = {2016},
  volume  = {353},
  number  = {6304},
  pages   = {aad9837},
  doi     = {10.1126/science.aad9837}
}

@article{Bansal_Kiku_Ochoa_2016,
  author  = {Bansal, Ravi and Kiku, Dana and Ochoa, Marcelo},
  title   = {Price of Long-run Temperature Shifts in Capital Markets},
  journal = {NBER Working Paper},
  number  = {22529},
  year    = {2016},
  doi     = {10.3386/w22529},
  url     = {https://www.nber.org/papers/w22529}
}

@article{Bolton_Kacperczyk_2021,
  author  = {Bolton, Patrick and Kacperczyk, Marcin},
  title   = {Do Investors Care About Carbon Risk?},
  journal = {Journal of Financial Economics},
  year    = {2021},
  volume  = {142},
  number  = {2},
  pages   = {517--549},
  doi     = {10.1016/j.jfineco.2021.05.008}
}

@article{Ilhan_Sautner_Vilkov_2021,
  author  = {Ilhan, Emirhan and Sautner, Zacharias and Vilkov, Grigory},
  title   = {Carbon Tail Risk},
  journal = {The Review of Financial Studies},
  year    = {2021},
  volume  = {34},
  number  = {3},
  pages   = {1540--1571},
  doi     = {10.1093/rfs/hhaa071}
}

@article{Engle_etal_2020_RFS,
  author  = {Engle, Robert F. and Giglio, Stefano and Kelly, Bryan and Lee, Heebum and Stroebel, Johannes},
  title   = {Hedging Climate Change News},
  journal = {The Review of Financial Studies},
  year    = {2020},
  volume  = {33},
  number  = {3},
  pages   = {1184--1216},
  doi     = {10.1093/rfs/hhz072}
}

@article{Giglio_Kelly_Stroebel_2021,
  author  = {Giglio, Stefano and Kelly, Bryan and Stroebel, Johannes},
  title   = {Climate Finance},
  journal = {Annual Review of Financial Economics},
  year    = {2021},
  volume  = {13},
  pages   = {15--36},
  doi     = {10.1146/annurev-financial-102620-103311}
}

@techreport{TEG_PAB_CTB_2019,
  institution = {European Commission},
  author  = {{EU Technical Expert Group on Sustainable Finance}},
  title   = {Final Report on Climate Benchmarks and Benchmarks' ESG Disclosures},
  year    = {2019},
  url     = {https://finance.ec.europa.eu/system/files/2019-09/190930-sustainable-finance-teg-final-report-climate-benchmarks-and-disclosures_en.pdf}
}

@techreport{TCFD_2017,
  institution = {Task Force on Climate-related Financial Disclosures},
  author  = {{Task Force on Climate-related Financial Disclosures}},
  title   = {Recommendations of the TCFD},
  year    = {2017},
  url     = {https://assets.bbhub.io/company/sites/60/2021/10/FINAL-2017-TCFD-Report.pdf}
}

@misc{ISSB_IFRS_S2_2023,
  author  = {{International Sustainability Standards Board}},
  title   = {IFRS S2: Climate-related Disclosures},
  year    = {2023},
  url     = {https://www.ifrs.org/issued-standards/ifrs-sustainability-standards-navigator/ifrs-s2-climate-related-disclosures/}
}

@misc{PCAF_2020,
  author  = {{Partnership for Carbon Accounting Financials}},
  title   = {Global GHG Accounting and Reporting Standard for the Financial Industry},
  year    = {2020},
  url     = {https://carbonaccountingfinancials.com/files/downloads/PCAF-Global-GHG-Standard-2020.pdf}
}

@misc{GHG_Corporate_Standard_2004,
  author  = {{WRI} and {WBCSD}},
  title   = {The Greenhouse Gas Protocol: A Corporate Accounting and Reporting Standard (Revised Edition)},
  year    = {2004},
  url     = {https://ghgprotocol.org/sites/default/files/standards/ghg-protocol-revised.pdf}
}

@techreport{NGFS_2025_TechDoc,
  institution = {Network for Greening the Financial System},
  author  = {{Network for Greening the Financial System (NGFS)}},
  title   = {NGFS Climate Scenarios: Technical Documentation (Phase IV)},
  year    = {2025},
  url     = {https://www.ngfs.net/system/files/2025-01/NGFS%20Climate%20Scenarios%20Technical%20Documentation.pdf}
}

\section{Appendix}

\subsection{Axioms and uniqueness of the emissions--penalty operator}
\label{sec:appendix-penalty-axioms}

Fix a scope $j$ and let $\lambda_{i,j} \ge 0$ denote the emissions intensity of asset $i$ at the decision date,
with
\[
  \lambda_{\max,j} \equiv \max_{i} \lambda_{i,j} \in (0,\infty).
\]
For a curvature parameter $m \in \mathbb{N}_{+}$, an emissions--penalty operator is a mapping
\[
\begin{aligned}
P^{(m)}_j
    &:\,
    \mathbb{R} \times [0,\lambda_{\max,j}]
    \,\to\,
    \mathbb{R},
\\[4pt]
(r,\lambda)
    &\mapsto
    \text{``emissions-adjusted'' payoff}.
\end{aligned}
\]
The following hold for {generator} $P^{(1)}_j$.

\paragraph{Axiom (H) --- payoff homogeneity.}
For all $\alpha \ge 0$ and $r \in \mathbb{R}$,
\[
  P^{(1)}_j(\alpha r,\lambda) = \alpha\, P^{(1)}_j(r,\lambda).
\]

\paragraph{Axiom (N) --- normalization.}
For all $r \in \mathbb{R}$,
\[
  P^{(1)}_j(r,0) = r,
  \qquad
  P^{(1)}_j(r,\lambda_{\max,j}) = 0.
\]

\paragraph{Axiom (M) --- monotonicity in intensity.}
If $0 \le \lambda_1 \le \lambda_2 \le \lambda_{\max,j}$ and $r \ge 0$, then
\[
  P^{(1)}_j(r,\lambda_1) \ge P^{(1)}_j(r,\lambda_2).
\]

\paragraph{Axiom (SI) --- scale invariance of units.}
For any $\beta > 0$,
\[
  P^{(1)}_j(r,\lambda,\lambda_{\max,j})
  =
  P^{(1)}_j(r,\beta \lambda, \beta \lambda_{\max,j}),
\]
that is, rescaling the emissions unit leaves the transformed payoff unchanged.

\paragraph{Axiom (L) --- mixture linearity in intensity.}
For all $\theta \in [0,1]$ and $\lambda_1,\lambda_2 \in [0,\lambda_{\max,j}]$,
\[
\begin{aligned}
P^{(1)}_j\!\bigl(r, \theta\lambda_1 + (1-\theta)\lambda_2\bigr)
    &=
    \theta\, P^{(1)}_j(r,\lambda_1)
\\[4pt]
    &\quad
    + (1-\theta)\, P^{(1)}_j(r,\lambda_2).
\end{aligned}
\]

\paragraph{Axiom (C) --- curvature semigroup.}
There exists a family $\{P^{(m)}_j\}_{m\in\mathbb{N}_{+}}$ with generator $P^{(1)}_j$ such that
\[
  P^{(m_1+m_2)}_j
  =
  P^{(m_1)}_j \circ P^{(m_2)}_j, \qquad \forall\, m_1,m_2 \in \mathbb{N}_{+}.
\]

We additionally assume that $P^{(1)}_j$ is continuous in $\lambda$.

\begin{proposition}[Uniqueness and closed form]
\label{prop:uniqueness-penalty}
Under axioms {\rm(H)--(N)--(M)--(SI)--(L)--(C)} and continuity in $\lambda$, the only admissible family
$\{P^{(m)}_j\}_{m\in\mathbb{N}_{+}}$ is
\begin{equation}
\label{eq:penalty-closed-form}
\begin{aligned}
P^{(m)}_j(r,\lambda)
    &=
    \Bigl(1 - \frac{\lambda}{\lambda_{\max,j}}\Bigr)^m r,
\\[6pt]
(r,\lambda)
    &\in \mathbb{R} \times [0,\lambda_{\max,j}] .
\end{aligned}
\end{equation}
\end{proposition}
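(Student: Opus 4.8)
The plan is to reduce the generator $P^{(1)}_j$ to a single-variable functional equation, solve it, and then let the semigroup axiom generate the whole family by iteration. First I would exploit payoff homogeneity (H): for $r\ge 0$, applying (H) with $\alpha=r$ to the unit payoff gives $P^{(1)}_j(r,\lambda)=r\,g^{+}(\lambda)$ where $g^{+}(\lambda):=P^{(1)}_j(1,\lambda)$, and for $r<0$ the same argument applied to the payoff $-1$ gives $P^{(1)}_j(r,\lambda)=-r\,g^{-}(\lambda)$ where $g^{-}(\lambda):=P^{(1)}_j(-1,\lambda)$. Thus the generator is pinned down once the two scalar profiles $g^{\pm}$ are known. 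Scale invariance (SI) then forces each profile to depend on $\lambda$ only through the dimensionless ratio $\lambda/\lambda_{\max,j}$, so I may regard $g^{\pm}$ as functions on $[0,1]$.

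Next I would determine $g^{\pm}$. Mixture linearity (L), applied at fixed payoff $r=1$ and $r=-1$, is exactly Jensen's equation for $g^{+}$ and $g^{-}$; together with the assumed continuity in $\lambda$ this upgrades midpoint-affinity to genuine affinity, so $g^{\pm}(\lambda)=a^{\pm}+b^{\pm}\lambda$. The normalization axiom (N), imposed for \emph{all} $r$, then fixes the constants: $P^{(1)}_j(r,0)=r$ yields $a^{+}=1$ and $a^{-}=-1$, while $P^{(1)}_j(r,\lambda_{\max,j})=0$ yields $a^{\pm}+b^{\pm}\lambda_{\max,j}=0$, so $b^{+}=-1/\lambda_{\max,j}$ and $b^{-}=1/\lambda_{\max,j}$. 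Hence $g^{+}(\lambda)=1-\lambda/\lambda_{\max,j}=-g^{-}(\lambda)$, and the two sign branches glue into the single linear map
\[
P^{(1)}_j(r,\lambda)=\Bigl(1-\frac{\lambda}{\lambda_{\max,j}}\Bigr)r,
\qquad r\in\mathbb{R}.
\]
Monotonicity (M) is then a consistency check: for $r\ge 0$ the slope $-1/\lambda_{\max,j}<0$ makes $\lambda\mapsto P^{(1)}_j(r,\lambda)$ decreasing, as required.

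Finally I would invoke the curvature semigroup (C). Reading the composition in (C) as acting in the payoff slot with $\lambda$ held fixed—the only reading under which $P^{(m_1)}_j\circ P^{(m_2)}_j$ is well defined given the signature $\mathbb{R}\times[0,\lambda_{\max,j}]\to\mathbb{R}$—the relation $P^{(m)}_j=P^{(1)}_j\circ P^{(m-1)}_j$ and a one-line induction give $P^{(m)}_j=(P^{(1)}_j)^{\circ m}$ in $r$. Since $P^{(1)}_j(\cdot,\lambda)$ is multiplication by $1-\lambda/\lambda_{\max,j}$, iterating $m$ times multiplies by this factor $m$ times, producing the closed form \eqref{eq:penalty-closed-form} and establishing uniqueness.

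I expect the main obstacle to be the functional-equation step: one must verify that (L) is genuinely Jensen's equation and that continuity is doing real work, excluding the pathological non-measurable additive solutions that Jensen's equation otherwise admits, and one must handle the $r<0$ branch carefully, since (H) supplies only nonnegative scalings and it is the all-$r$ form of (N) that forces the two branches to be negatives of one another and hence to splice into a single linear map. A secondary point requiring care is fixing the composition convention in (C) so that the semigroup induction is valid.
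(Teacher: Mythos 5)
Your proposal is correct and follows essentially the same route as the paper's proof: reduce to a scalar function of $\lambda/\lambda_{\max,j}$ via (SI), recognize (L) plus continuity as Jensen's equation forcing affinity, pin the constants with (N), and generate the family by induction on the semigroup axiom (C) read as composition in the payoff slot. The only organizational difference is that you factor out $r$ first via (H) into two sign branches $g^{\pm}$, whereas the paper works with the section $f_r(x)=P^{(1)}_j(r,x\lambda_{\max,j})$ for each fixed $r$ directly—so (H) is never needed in its derivation and the negative-$r$ branch requires no separate splicing.
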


\begin{proof}
\ We give a fully explicit argument that isolates where each axiom enters.
As a compact summary, payoff homogeneity (H) and unit-scale invariance (SI) imply the existence of a
scalar function $\varphi^{(1)}_j:[0,1]\to[0,1]$ such that
$P^{(1)}_j(r,\lambda)=\varphi^{(1)}_j(\lambda/\lambda_{\max,j})\,r$.
The steps below justify this representation and pin down $\varphi^{(1)}_j$ uniquely.

Fix $r \in \mathbb{R}$ and define the one-dimensional section
\[
  f_r(x)
  :=
  P^{(1)}_j\bigl(r, x\,\lambda_{\max,j}\bigr),
  \qquad x \in [0,1].
\]
Unit-scale invariance (SI) implies that the dependence on $(\lambda,\lambda_{\max,j})$ is only through
the ratio $x=\lambda/\lambda_{\max,j}$, hence
\begin{equation}
  P^{(1)}_j(r,\lambda) = f_r\!\Bigl(\frac{\lambda}{\lambda_{\max,j}}\Bigr).
  \label{eq:appendix-fr-reduction}
\end{equation}

By mixture linearity (L), for any $x_1,x_2 \in [0,1]$ and $\theta \in [0,1]$,
\[
\begin{aligned}
f_r\bigl(\theta x_1 + (1-\theta)x_2\bigr)
&=
P^{(1)}_j\bigl(r,(\theta x_1 + (1-\theta)x_2)\lambda_{\max,j}\bigr)
\\[3pt]
&=
\theta\,P^{(1)}_j\bigl(r,x_1\lambda_{\max,j}\bigr)
 + (1-\theta)\,P^{(1)}_j\bigl(r,x_2\lambda_{\max,j}\bigr)
\\[3pt]
&=
\theta f_r(x_1) + (1-\theta)f_r(x_2).
\end{aligned}
\]
Thus $f_r$ satisfies Jensen's functional equation on the interval $[0,1]$.
Since $P^{(1)}_j$ is continuous in $\lambda$ by assumption, $f_r$ is continuous in $x$.
The standard regularity result for Jensen's equation on an interval then yields that $f_r$ is affine.
Concretely, there exist constants $a(r),b(r) \in \mathbb{R}$ such that
\begin{equation}
  f_r(x) = a(r)\,x + b(r), \qquad x \in [0,1].
  \label{eq:appendix-fr-affine}
\end{equation}

Normalization (N) implies
\[
  f_r(0)=P^{(1)}_j(r,0)=r,
  \qquad
  f_r(1)=P^{(1)}_j(r,\lambda_{\max,j})=0.
\]
Substituting into \eqref{eq:appendix-fr-affine} gives $b(r)=r$ and $a(r)=-r$. Therefore
\[
  f_r(x) = r(1-x), \qquad x\in[0,1].
\]
Combining with \eqref{eq:appendix-fr-reduction} yields the closed form
\begin{equation}
  P^{(1)}_j(r,\lambda)
  =
  \Bigl(1-\frac{\lambda}{\lambda_{\max,j}}\Bigr)\,r,
  \qquad (r,\lambda)\in\mathbb{R}\times[0,\lambda_{\max,j}].
  \label{eq:appendix-p1-closed}
\end{equation}

Define $\varphi^{(1)}_j:[0,1]\to[0,1]$ by $\varphi^{(1)}_j(x):=1-x$. Then \eqref{eq:appendix-p1-closed}
is equivalently
\begin{equation}
  P^{(1)}_j(r,\lambda)
  =
  \varphi^{(1)}_j\!\Big(\frac{\lambda}{\lambda_{\max,j}}\Big)\, r.
  \label{eq:phi-def}
\end{equation}
Payoff homogeneity (H) is consistent with this representation and will be inherited by the full curvature
family.
Normalization (N) implies
\[
  \varphi^{(1)}_j(0) = 1,
  \qquad
  \varphi^{(1)}_j(1) = 0,
\]
and monotonicity (M) guarantees that $\varphi^{(1)}_j$ is nonincreasing.  

Mixture linearity (L), combined with \eqref{eq:phi-def}, gives for all
$x_1,x_2 \in [0,1]$ and $\theta \in [0,1]$,
\[
\begin{aligned}
\varphi^{(1)}_j\!\bigl(\theta x_1 + (1-\theta)x_2\bigr)
    &=
    \theta\, \varphi^{(1)}_j(x_1)
\\[4pt]
    &\quad
    + (1-\theta)\, \varphi^{(1)}_j(x_2).
\end{aligned}
\]
This is Jensen's functional equation on $[0,1]$. Under continuity, the only solutions are affine functions.
One way to see this is to first use the midpoint identity
$\varphi^{(1)}_j\bigl(\tfrac{x+y}{2}\bigr)=\tfrac{\varphi^{(1)}_j(x)+\varphi^{(1)}_j(y)}{2}$,
iterate to obtain the identity for all dyadic rationals, extend to all rationals by induction on denominators,
and then use continuity to extend to all real $x\in[0,1]$. Thus there exist $a_j,b_j \in \mathbb{R}$ such that
\[
  \varphi^{(1)}_j(x) = a_j x + b_j, \qquad x \in [0,1].
\]
Using $\varphi^{(1)}_j(0)=1$ and $\varphi^{(1)}_j(1)=0$ gives $b_j = 1$ and $a_j = -1$, hence
\[
\begin{aligned}
\varphi^{(1)}_j(x) &= 1 - x
\\[4pt]
&\Longrightarrow\quad
P^{(1)}_j(r,\lambda)
    = \Bigl(1 - \frac{\lambda}{\lambda_{\max,j}}\Bigr) r.
\end{aligned}
\]

Axiom (C) now determines all higher curvatures. For any $m\in\mathbb{N}_+$, the semigroup identity implies
\[
  P^{(m)}_j
  =
  \underbrace{P^{(1)}_j\circ P^{(1)}_j\circ\cdots\circ P^{(1)}_j}_{m \text{ compositions}},
\]
which is proved by a short induction on $m$ using
$P^{(m+1)}_j = P^{(m)}_j\circ P^{(1)}_j$.
Substituting \eqref{eq:appendix-p1-closed} into this composition gives
\[
  P^{(m)}_j(r,\lambda)
  =
  \Big(1 - \frac{\lambda}{\lambda_{\max,j}}\Big)^m r.
\]
This family clearly satisfies the semigroup relation and inherits the remaining axioms by construction.
Conversely, if another family satisfies (C) with the same generator $P^{(1)}_j$, the same induction shows
it must coincide with the $m$-fold composition above, hence the representation \eqref{eq:penalty-closed-form}
is unique.
\end{proof}

Proposition~\ref{prop:uniqueness-penalty} shows that the curvature family used is not an
ad hoc choice: it is the unique continuous, scale-invariant, and mixture-linear way to modulate payoffs by
relative emissions intensity while preserving a semigroup structure over $m$.

\subsection{Dynamic robust Bellman recursion (Proposition~\texorpdfstring{2.2}{2.2})}
\label{sec:appendix-bellman}

For completeness we reproduce the dynamic value function. Let $t=0,\dots,T$ index decision dates, and let
$\beta \in (0,1]$ denote the discount factor. Given an initial portfolio $x_t \in \Delta^{n-1}$, define
\[
\begin{aligned}
V_t(x_t)
    &:=
    \max_{x_t,\dots,x_T \in \Delta^{\,n-1}}
\\[4pt]
    &\qquad
    \min_{z_s \in Z_s,\, s=t,\dots,T}
\\[4pt]
    &\qquad
    \sum_{s=t}^T
    \beta^{\,s-t}\,
    x_s^\top\bigl(\gamma_s + \Delta_s z_s\bigr).
\end{aligned}
\]
with $V_{T+1}(\cdot) \equiv 0$.

\paragraph{Remark (well-posedness and time consistency).}
The display above follows the notation in the main text and writes $V_t(x_t)$ while also maximizing
over $x_t$. One may interpret $V_t$ as the pre-decision value at time $t$, in which case the argument
$x_t$ is only a mnemonic for the time-$t$ control. Alternatively, one can define a post-decision value
$\widetilde V_t(\bar x_t)$ that conditions on a pre-existing holding $\bar x_t$ and includes transaction
costs or turnover constraints in the transition. In either convention, the Bellman recursion below is
valid under the standard rectangularity condition that the uncertainty set factorizes as a Cartesian
product $Z_t\times Z_{t+1}\times\cdots\times Z_T$, as written in the inner minimization.

\begin{proof}
\ We proceed by backward induction on $t$.

\paragraph{Base case $t=T$.}
At the terminal date $t=T$ we have
\[
  V_T(x_T)
  =
  \max_{x_T \in \Delta^{n-1}}
  \min_{z_T \in Z_T} x_T^\top(\gamma_T + \Delta_T z_T),
\]
which matches the Bellman form with $V_{T+1} \equiv 0$.

\paragraph{Induction step.}
Assume now that the recursion holds at $t+1$. Starting from date $t$, with 
\[
\begin{aligned}
V_{t+1}(x_{t+1})
&=
\max_{x_{t+1},\dots,x_T \in \Delta^{\,n-1}}
\min_{z_{t+1:T} \in Z_{t+1:T}}
\\[4pt]
&\qquad
\sum_{s=t+1}^T
\beta^{\,s-(t+1)}\,
x_s^\top(\gamma_s + \Delta_s z_s).
\end{aligned}
\]

Because the uncertainty is rectangular, the inner minimization over $z_t,\dots,z_T$ can be written as
an iterated minimization. For any fixed control sequence $\{x_s\}_{s=t}^T$, we have
\[
\begin{aligned}
\min_{z_s\in Z_s,\, s=t,\dots,T}
\sum_{s=t}^T \beta^{s-t} x_s^\top(\gamma_s+\Delta_s z_s)
&=
\min_{z_t\in Z_t}
\Bigl\{
x_t^\top(\gamma_t+\Delta_t z_t)
\\[2pt]
&\qquad
+\beta\,\min_{z_{t+1:T}\in Z_{t+1:T}}
\sum_{s=t+1}^T \beta^{s-(t+1)} x_s^\top(\gamma_s+\Delta_s z_s)
\Bigr\}.
\end{aligned}
\]
This step is purely algebraic and uses that $z_t$ appears only in the period-$t$ term.

\begin{equation}
\label{eq:Bellman-recursion}
\begin{aligned}
V_t(x_t)
    &=
    \max_{x_t \in \Delta^{\,n-1}}
    \min_{z_t \in Z_t}
    \Bigl\{
        x_t^\top(\gamma_t + \Delta_t z_t)
\\
    &\qquad
        + \beta        
            \max_{x_{t+1},\dots,x_T \in \Delta^{\,n-1}}
            \min_{z_{t+1:T} \in Z_{t+1:T}}  \\
            &\qquad \sum_{s=t+1}^T
                \beta^{\,s-(t+1)}
                x_s^\top(\gamma_s + \Delta_s z_s)        
    \Bigr\}.
\end{aligned}
\end{equation}

The term in braces depends on the future controls only through $x_{t+1}$, so
\[
\begin{aligned}
V_t(x_t)
&=
\max_{x_t \in \Delta^{\,n-1}}
\min_{z_t \in Z_t}
\Bigl\{
x_t^\top(\gamma_t + \Delta_t z_t)
\\
&\qquad
+ \beta\, V_{t+1}(x_{t+1})
\Bigr\}.
\end{aligned}
\]
Compactness of $Z_t$ and continuity of $z\mapsto x_t^\top(\gamma_t+\Delta_t z)$ ensure existence of a
minimizer $z_t^\star(x_t)$ for each feasible $x_t$. Concavity of the outer maximization over the simplex
follows from linearity of the stage payoff in $x_t$. This completes the backward recursion.
\end{proof}

\subsection{\texorpdfstring{$\varphi$}{phi}--divergence dual reformulation (Theorem~\texorpdfstring{2.2}{2.2})}
\label{sec:appendix-phi-dro}

Consider the distributionally robust problem
\[
\begin{aligned}
\max_{x \in \Delta^{\,n-1}}
\,\inf_{P \in \mathcal{P}}
    \,\mathbb{E}_{z\sim P}\bigl[\ell_x(z)\bigr],
\\[6pt]
\ell_x(z)
    := \sum_{i=1}^n x_i\bigl(\gamma_i + \delta_i z\bigr).
\end{aligned}
\]
with ambiguity set
\[
  \mathcal{P}
  :=
  \big\{P : D_{\varphi}(P \,\|\, \widehat{P}) \le \theta\big\},
\]
where $D_{\varphi}$ is a $\varphi$--divergence centered at the empirical law $\widehat{P}$. For clarity, suppose $\widehat{P}$ is discrete with support $\{z^{(m)}\}_{m=1}^M$ and probabilities $\hat{p}_m > 0$. Any $P \ll \widehat{P}$ admits a Radon--Nikodym density $w = \frac{dP}{d\widehat{P}}$
satisfying $w(z^{(m)}) \ge 0$ and $\sum_{m} w(z^{(m)})\hat{p}_m = 1$. The robust inner problem can then be
written as
\[
\begin{aligned}
\inf_{w}\quad
&\sum_{m=1}^M \ell_x\bigl(z^{(m)}\bigr)\, w_m \hat{p}_m
\\[6pt]
\text{s.t.}\quad
&\sum_{m=1}^M \varphi(w_m)\,\hat{p}_m \,\le\, \theta,
\\
&\sum_{m=1}^M w_m \hat{p}_m \,=\, 1,
\\
& w_m \ge 0,\quad m=1,\dots,M.
\end{aligned}
\]

\begin{proof}
 
The feasible set is nonempty since $w_m \equiv 1$ satisfies $\sum_m w_m\hat p_m = 1$ and
$\sum_m \varphi(1)\hat p_m = \varphi(1)$. For standard $\varphi$--divergences one has $\varphi(1)=0$, hence
for any $\theta>0$ this choice is strictly feasible for the divergence inequality. This Slater point
ensures strong duality for the convex program in $w$.

Introduce Lagrange multipliers $\lambda \ge 0$ for the divergence constraint and $\nu \in \mathbb{R}$ for the
normalization constraint. The Lagrangian of the inner problem is
\[
\begin{aligned}
L(w,\lambda,\nu)
    &=
    \sum_{m=1}^M \ell_x\bigl(z^{(m)}\bigr)\, w_m \hat{p}_m
\\[4pt]
    &\quad
    +\, \lambda
      \Bigl(
        \sum_{m=1}^M \varphi(w_m)\,\hat{p}_m
        - \theta
      \Bigr)
\\[4pt]
    &\quad
    +\, \nu
      \Bigl(
        \sum_{m=1}^M w_m \hat{p}_m
        - 1
      \Bigr).
\end{aligned}
\]
Rearranging terms yields
\[
\begin{aligned}
L(w,\lambda,\nu)
    &=
    -\theta\,\lambda
    - \nu
\\[6pt]
    &\qquad
    + \sum_{m=1}^M \hat{p}_m
      \Bigl[
          \ell_x\bigl(z^{(m)}\bigr)\, w_m
\\
    &\qquad\qquad\qquad
          + \lambda\, \varphi(w_m)
\\
    &\qquad\qquad\qquad
          + \nu\, w_m
      \Bigr].
\end{aligned}
\]

Because the constraints are separable across $m$, the Lagrangian decouples across the coordinates
$w_m$. Write the coefficient on $w_m$ as $c_m := \ell_x\bigl(z^{(m)}\bigr)+\nu$. For fixed $(\lambda,\nu)$
we therefore need the scalar infimum
\[
  \inf_{u\ge 0} \{c_m u + \lambda\varphi(u)\}.
\]
Recall the convex conjugate $\varphi^\star(y) := \sup_{u\ge 0}\{u y-\varphi(u)\}$. A standard identity
then gives
\[
  \inf_{u\ge 0}\{\varphi(u)+y u\} = -\varphi^\star(-y).
\]
Applying this with $y=c_m/\lambda$ when $\lambda>0$ yields the claimed expression. To match the more
common DRO parametrization in which the shift variable appears with the sign $(\ell-\nu)/\lambda$,
one may equivalently substitute $\nu\leftarrow -\nu$ at the end of the derivation.

For fixed $(\lambda,\nu)$, the pointwise infimum over $w_m \ge 0$ is governed by the convex conjugate
$\varphi^\star$:
\[
\begin{aligned}
\inf_{w_m \ge 0}
&\bigl\{
    \ell_x\bigl(z^{(m)}\bigr)\, w_m
    + \lambda \varphi(w_m)
    + \nu w_m
 \bigr\}
\\[4pt]
&=
-\lambda\,
\varphi^\star\!\Bigl(
    \frac{\ell_x\bigl(z^{(m)}\bigr) - \nu}{\lambda}
\Bigr).
\end{aligned}
\]
with the convention that $\lambda \varphi^\star(\cdot)$ is $+\infty$ when $\lambda = 0$ and the argument lies
outside the effective domain. Substituting back,
\[
\begin{aligned}
\inf_{w} L(w,\lambda,\nu)
&=
\nu - \theta\,\lambda
\\[6pt]
&\qquad
- \lambda
  \sum_{m=1}^M \hat{p}_m\,
  \varphi^\star\!\Bigl(
      \frac{\ell_x(z^{(m)}) - \nu}{\lambda}
  \Bigr).
\end{aligned}
\]
Taking the supremum over $(\lambda,\nu)$ produces the dual representation
\[
\begin{aligned}
\inf_{P \in \mathcal{P}} \mathbb{E}_{P}[\ell_x(z)]
&=
\sup_{\lambda \ge 0,\,\nu \in \mathbb{R}}
\Biggl\{
    \nu
\\[4pt]
&\qquad
    - \theta\,\lambda
\\[4pt]
&\qquad
    - \lambda \sum_{m=1}^M \hat{p}_m\,
      \varphi^\star\!\Bigl(
          \frac{\ell_x(z^{(m)}) - \nu}{\lambda}
      \Bigr)
\Biggr\}.
\end{aligned}
\]
Strong duality holds under standard regularity conditions (e.g., Slater's condition for the primal), so the
robust problem is equivalent to maximizing this dual objective over $x \in \Delta^{n-1}$. This is precisely
the scalar dual formulation stated.
\end{proof}

\subsection{Continuous \texorpdfstring{$\varphi$}{phi}--DRO reformulation}
\label{sec:appendix-phi-continuous}

The preceding argument extends verbatim beyond a discrete empirical support. For a general ambiguity set
\[
  \mathcal{P}
  =
  \big\{P : D_{\varphi}(P \,\|\, \widehat{P}) \le \theta\big\}
\]
and linear loss $\ell_x$, Fenchel duality yields
\begin{equation}
\label{eq:phi-general}
\begin{aligned}
\sup_{x \in \Delta^{\,n-1}}
\inf_{P \in \mathcal{P}} \mathbb{E}_{P}[\ell_x(z)]
&=
\sup_{x \in \Delta^{\,n-1}}
\sup_{\lambda \ge 0,\,\nu \in \mathbb{R}}
\Biggl\{
    \nu
\\
&\qquad
    - \theta\,\lambda
\\
&\qquad
    - \lambda\,
      \mathbb{E}_{\widehat{P}}
      \Bigl[
        \varphi^\star\!\Bigl(
            \tfrac{\ell_x(z) - \nu}{\lambda}
        \Bigr)
      \Bigr]
\Biggr\}.
\end{aligned}
\end{equation}
The distributional uncertainty enters only through the scalar pair $(\lambda,\nu)$ and the expectation of
$\varphi^\star$, so the robustification adds two decision variables and a one-dimensional nonlinearity
independently of the dimension of $z$.

\subsection{Convexity of the return--emissions Pareto frontier (Proposition~\texorpdfstring{2.3}{2.3})}
\label{sec:appendix-pareto}

Consider the scalarized problem
\[
  \max_{x \in \Delta^{n-1}}
  F_\mu(x)
  :=
  x^\top r - \mu\, x^\top \lambda,
  \qquad \mu \ge 0,
\]
and denote the optimal value by
\[
  g(\mu) := \max_{x \in \Delta^{n-1}} F_\mu(x).
\]
The attainable set of pairs $(x^\top r, x^\top \lambda)$ over $x \in \Delta^{n-1}$ forms a compact convex
polytope.

\begin{proof}
\ For each fixed $x$, $F_\mu(x)$ is affine in $\mu$, hence $g(\mu)$ is a pointwise supremum of affine
functions and therefore convex and continuous on $[0,\infty)$. The attainable set
$S:=\{(x^\top r,x^\top\lambda):x\in\Delta^{n-1}\}$ is the image of a simplex under an affine map, so it is
compact and convex.

Each $\mu\ge0$ defines a supporting hyperplane with normal vector $(1,\mu)$, and $g(\mu)$ is the associated
support function evaluated at that normal direction. Therefore any optimizer $x^\star(\mu)$ attains a
supported efficient point on the boundary of $S$. Since $S$ is a polytope in finite dimension, every
efficient extreme point is supported, hence varying $\mu$ recovers the full efficient frontier up to
degeneracies that correspond to boundary segments.

When the maximizer $x^\star(\mu)$ is unique for a given $\mu$, Danskin's theorem yields
\[
\begin{aligned}
\frac{d}{d\mu} g(\mu)
&=
\frac{\partial}{\partial \mu}
\Bigl(
    x^{\star}(\mu)^{\!\top} r
    - \mu\, x^{\star}(\mu)^{\!\top} \lambda
\Bigr)
\\[4pt]
&=
-\, x^{\star}(\mu)^{\!\top} \lambda.
\end{aligned}
\]
Thus the marginal slope of the efficient frontier with respect to the scalarization weight $\mu$ is
\[
  \frac{d}{d\mu}\, \mathbb{E}[x^{\star}(\mu)^\top r]
  =
  -\, \mathbb{E}[x^{\star}(\mu)^\top \lambda],
\]
The rate at which expected return declines when the investor
tightens the penalty on emissions equals minus the current portfolio intensity. If the maximizer is not
unique, any $x^\star(\mu)$ in the argmax set gives a valid subgradient, and the same interpretation applies
at the level of subdifferentials.
\end{proof}

\subsection{Lipschitz sensitivity in the robustness budget (Theorem~\texorpdfstring{2.3}{2.3})}
\label{sec:appendix-lipschitz}

Recall the robust value function
\[
  R^\star(\Gamma)
  :=
  \max_{x \in \Delta^{n-1}} \,
  \min_{\lVert z \rVert \le \Gamma}
  x^\top \big(\gamma + \Delta z\big),
\]
where $\lVert\cdot\rVert$ is a norm on the disturbance space and $\Gamma > 0$ is the robustness radius. Let
$\lVert\cdot\rVert_\ast$ denote the associated dual norm. Assume that each component return
$r_i(z) := \gamma_i + \delta_i z$ is $L$--Lipschitz in $z$ with respect to $\lVert\cdot\rVert$, so that
\[
  \big|r_i(z_1) - r_i(z_2)\big|
  \le L\, \lVert z_1 - z_2 \rVert,
  \qquad \forall\, z_1,z_2.
\]

\begin{proof} \label{app:lipschitz}
\ Fix $0 < \Gamma_1 < \Gamma_2$ and let $x_2^\star$ and $z_2^\star$ be primal and worst-case disturbance
solutions at robustness $\Gamma_2$:
\[
  R^\star(\Gamma_2)
  =
  x_2^{\star\top}\big(\gamma + \Delta z_2^\star\big),
  \qquad
  \lVert z_2^\star \rVert \le \Gamma_2.
\]
Define the radial projection of $z_2^\star$ onto the smaller ball of radius $\Gamma_1$:
\[
  z_\pi
  :=
  \begin{cases}
    \dfrac{\Gamma_1}{\Gamma_2} z_2^\star, & z_2^\star \neq 0, \\[0.75ex]
    0, & z_2^\star = 0.
  \end{cases}
\]
Then $\lVert z_\pi \rVert \le \Gamma_1$ and
\[
  \lVert z_2^\star - z_\pi \rVert
  \le \Gamma_2 - \Gamma_1.
\]

\paragraph{Justification of the radius bound.}
The inequality follows from positive homogeneity of norms. When $z_2^\star\neq0$,
\[
  \lVert z_2^\star - z_\pi\rVert
  = \Bigl\lVert \Bigl(1-\frac{\Gamma_1}{\Gamma_2}\Bigr) z_2^\star\Bigr\rVert
  = \Bigl(1-\frac{\Gamma_1}{\Gamma_2}\Bigr)\lVert z_2^\star\rVert
  \le \Bigl(1-\frac{\Gamma_1}{\Gamma_2}\Bigr)\Gamma_2
  = \Gamma_2-\Gamma_1.
\]
The case $z_2^\star=0$ is immediate.

By feasibility,
\[
  R^\star(\Gamma_1)
  \,\ge\,
  x_2^{\star\top}\big(\gamma + \Delta z_\pi\big).
\]
Using the Lipschitz property of each $r_i$ and the fact that $x_2^\star \in \Delta^{n-1}$,
\begin{align*}
  x_2^{\star\top}\big(\gamma + \Delta z_\pi\big)
  &\ge
  x_2^{\star\top}\big(\gamma + \Delta z_2^\star\big)
  - L\, \lVert z_2^\star - z_\pi \rVert \\
  &\ge
  R^\star(\Gamma_2) - L (\Gamma_2 - \Gamma_1).
\end{align*}
Hence
\[
  R^\star(\Gamma_1) - R^\star(\Gamma_2)
  \ge -L(\Gamma_2 - \Gamma_1).
\]

Interchanging the roles of $\Gamma_1$ and $\Gamma_2$ yields the reverse inequality, so we obtain the
two-sided bound
\[
  \big|R^\star(\Gamma_1) - R^\star(\Gamma_2)\big|
  \le L\, |\Gamma_2 - \Gamma_1|.
\]
This is the Lipschitz continuity claim in the robustness budget.
\end{proof}

\subsection{Supplementary tables and figures}
\label{sec:appendix-tables}

This subsection collects numerical summaries and graphics referenced in the empirical section.

\begin{table}[t]
\centering
\caption{Block bootstrap inference for the Sharpe ratio difference between EAPO and the equal-weight benchmark. The bootstrap resamples paired daily returns in blocks of length 20 trading days to preserve short-range dependence.}
\label{tab:bootstrap_sharpe}
\begin{tabular}{l r}
\toprule
Statistic & Value \\
\midrule
Sharpe(EAPO) & 0.762 \\
Sharpe(EW) & 0.766 \\
Difference (EAPO minus EW) & -0.003 \\
\midrule
95\% CI (2.5th, 97.5th) & [ -0.163,\ 0.148 ] \\
Bootstrap replications & 2000 \\
Block length (trading days) & 20 \\
\bottomrule
\end{tabular}
\end{table}

\begin{table}[t]
\centering
\caption{Average portfolio exposures to simple price-based characteristics, computed at monthly rebalance dates. Volatility is the weighted-average stock-level daily volatility over the trailing 252 trading days. Momentum is the weighted-average 12--1 momentum, measured as the cumulative return from $t-252$ to $t-21$ trading days.}
\label{tab:style_exposure}
\begin{tabular}{l r r}
\toprule
Strategy & Avg.\ stock vol.\ (\% per day) & Avg.\ 12--1 momentum (\%) \\
\midrule
EW & 1.996 & 15.407 \\
GMV & 1.600 & 13.946 \\
EMW & 2.059 & 17.193 \\
EAPO & 1.844 & 10.730 \\
\bottomrule
\end{tabular}
\end{table}

\begin{table}[h!]
  \centering
  \caption{Average 95\% ellipsoidal uncertainty set statistics by calendar year. ``Axis~1'' and
  ``Axis~2'' report the semi-axis lengths of the average covariance ellipse for firm-level emissions
  intensities, ``Area'' is the corresponding ellipse area. Larger areas indicate more dispersion in
  estimated intensities, and hence greater disclosure uncertainty in that year.}
  \label{tab:ellipsoid-stats}
  \begin{tabular}{lccc}
    \toprule
    Year & Axis 1 & Axis 2 & Area \\
    \midrule
    2010 & 0.268 & 0.072 & 0.061 \\
    2011 & 0.281 & 0.082 & 0.074 \\
    2012 & 0.301 & 0.085 & 0.082 \\
    2013 & 0.184 & 0.070 & 0.041 \\
    2014 & 0.167 & 0.058 & 0.031 \\
    2015 & 0.207 & 0.063 & 0.042 \\
    2016 & 0.219 & 0.081 & 0.055 \\
    2017 & 0.136 & 0.083 & 0.035 \\
    2018 & 0.212 & 0.084 & 0.056 \\
    2019 & 0.275 & 0.085 & 0.077 \\
    2020 & 0.475 & 0.158 & 0.236 \\
    2021 & 0.214 & 0.135 & 0.091 \\
    2022 & 0.310 & 0.116 & 0.113 \\
    2023 & 0.197 & 0.110 & 0.068 \\
    2024 & 0.164 & 0.107 & 0.055 \\
    \bottomrule
  \end{tabular}
\end{table}

\begin{table}[h!]
  \centering
  \caption{Top holdings in the baseline EAPO portfolio. ``Avg weight'' reports the mean portfolio
  weight at monthly rebalancing dates over the sample. ``Avg Scope-1 intensity'' reports the
  corresponding time-average of firm-level Scope-1 emissions intensity measured in tCO$_2$e per
  \$mm of revenue. The full monthly weight panels for all strategies are provided as CSV exports in
  the supplementary replication files.}
  \label{tab:eapo-holdings}
  \begin{tabular}{lcc}
    \toprule
    Ticker & Avg weight (\%) & Avg Scope-1 intensity \\
    \midrule
    FLS  & 1.73 & 5.22 \\
    PCAR & 1.73 & 5.52 \\
    BIIB & 1.73 & 5.55 \\
    MCD  & 1.72 & 6.40 \\
    HBAN & 1.71 & 5.02 \\
    HAS  & 1.71 & 6.25 \\
    ED   & 1.70 & 11.82 \\
    EXPD & 1.70 & 5.16 \\
    CAH  & 1.68 & 5.17 \\
    TROW & 1.67 & 4.94 \\
    BMY  & 1.67 & 6.17 \\
    CINF & 1.67 & 6.10 \\
    UPS  & 1.66 & 9.45 \\
    DOV  & 1.65 & 10.83 \\
    JNJ  & 1.65 & 5.50 \\
    \bottomrule
  \end{tabular}
\end{table}

\begin{figure}[h!]
  \centering
  \includegraphics[width=0.95\linewidth]{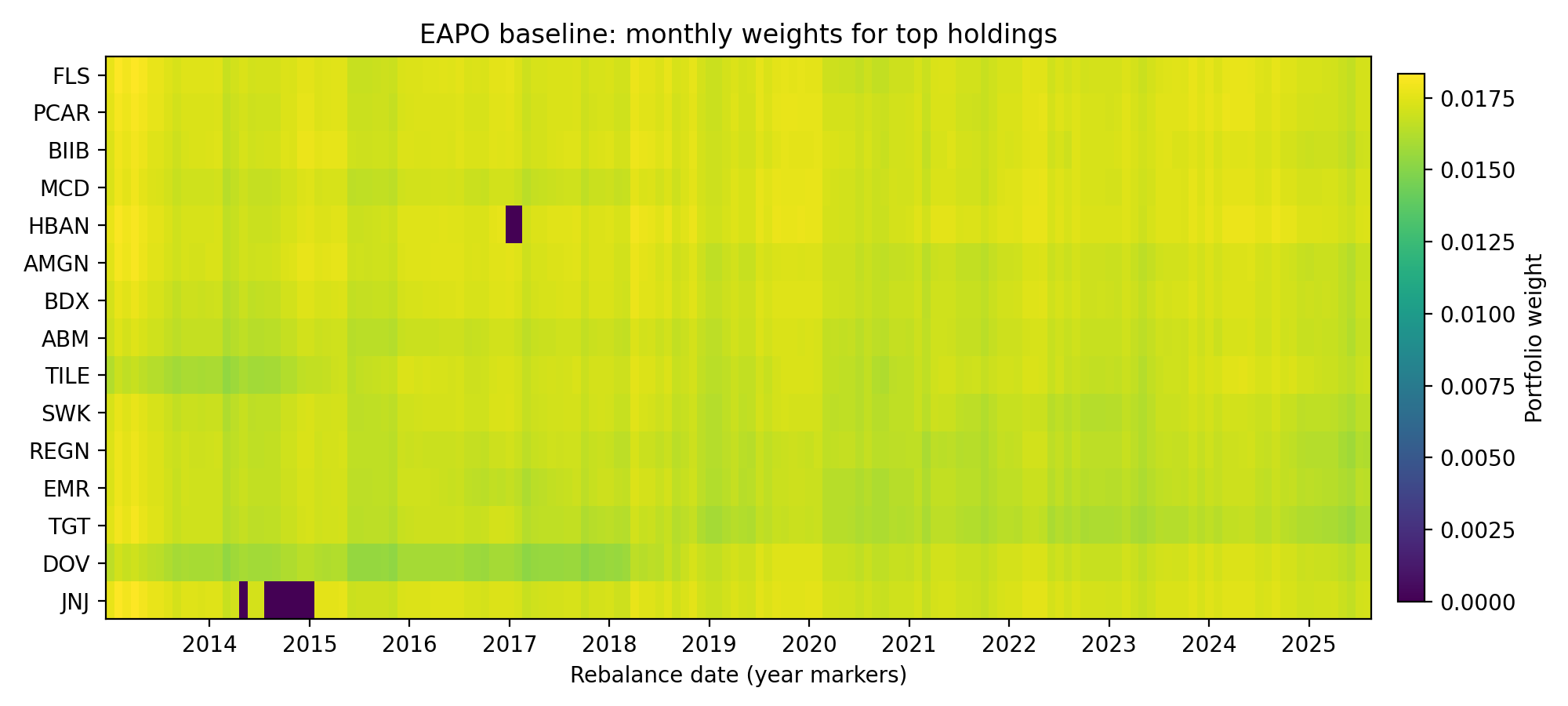}
  \caption{Heatmap of monthly portfolio weights for the top holdings in the baseline EAPO portfolio.
  Each row corresponds to a ticker and each column to a monthly rebalance date. The plot highlights
  that the emissions reductions are achieved through a diversified set of persistent positions rather
  than through episodic concentration in a small set of names.}
  \label{fig:eapo-holdings-heatmap}
\end{figure}

% \begin{figure}[h!]
%   \centering
%   % (i) cumulative wealth, (ii) max drawdown, (iii) excess return histogram
%   \caption{Supplementary performance diagnostics. Panel~(i) plots the cumulative wealth of the four
%   strategies (EW, GMV, EMW, and EAPO). Panel~(ii) compares the maximum drawdown paths for EAPO
%   and the equal-weight benchmark. Panel~(iii) reports the empirical distribution of daily excess
%   returns (EAPO minus EW). Together, these plots document that the emissions-robust portfolio
%   achieves large intensity reductions without materially degrading risk-adjusted performance.}
%   \label{fig:appendix-wealth-dd}
% \end{figure}

% \begin{figure}[h!]
%   \centering 
%   % (i) Pareto frontier, (ii) rolling Sharpe, (iii) turnover
%   \caption{Additional frontier and stability diagnostics. Panel~(i) shows the ex post Pareto frontier
%   of average Scope~1 intensity versus annualized return for different penalty strengths. Panel~(ii)
%   reports the rolling six-month Sharpe ratio of EAPO and the EW benchmark, illustrating their
%   co-movement across market regimes. Panel~(iii) plots monthly turnover for EAPO, confirming that
%   the robust emissions overlay remains implementable under modest transaction cost budgets.}
%   \label{fig:appendix-frontier-turnover}
% \end{figure}

%%%%%%%%%%%%%%%%%
\end{document}